\newtheorem{theorem}{Theorem}[section]
\newtheorem{proposition}[theorem]{Proposition}
\newtheorem{remark}[theorem]{Remark}
\newtheorem{lemma}[theorem]{Lemma}
\def\tr{{\rm Tr \,}}
\def\B{{\mathcal B}}
\def\S{{\mathfrak{S} }}
\def\L{{\mathcal L}}
\def\G{{\mathcal G}}
\def\E{{\mathcal E}}
\def\K{{\mathcal K}}
\def\C{{\mathcal C}}
\def\O{{\Omega}}
\def\o{{\omega}}
\def\NN{{\mathbb N}}
\def\RR{{\mathbb R}}
\def\RRd{{\mathbb R ^d}}
\def\ZZd{{\mathbb Z ^d}}
\def\CC{{\mathbb C}}
\def\PP{{\mathbb P}}
\def\EE{{\mathbb E}}
\def\1{{\mathds{1}}}
\def\D{{\mathcal D}}
\def\ii{{\infty }}
\def\per{{\rm per }}
\def\unif{ \text{\rm  unif }  }
\def\cvL{{\operatorname*{\longrightarrow}_{L\rightarrow\ii}}}
\def\cL{{{L\rightarrow\ii}}}
\newcommand{\norm}[1]{\left\| #1\right\|}
\newcommand{\set}[1]{\left\{ #1\right\}}
\newcommand{\bra}[1]{ \left( #1\right)}
\newcommand{\av}[1]{\left| #1\right|}
\newcommand{\com}[1]{\left[ #1\right]}
\renewcommand{\phi}{\varphi}
\newcommand{\tv}[1]{ {\rm \underline{Tr}} \,\left( #1\right)}
\title[Short-range quantum crystals with defects]{The reduced Hartree-Fock model for short-range quantum crystals with defects}
\author{Salma Lahbabi}%$^{1,2}$ }\\
\address{{ CNRS \& Laboratoire de Mathématiques (UMR 8088), Université de Cergy-Pontoise}\\
{\footnotesize 95000 Cergy-Pontoise Cedex, France}
\\
{CERMICS, \'Ecole Nationale des Ponts et Chaussées (Paristech)}\\
{\footnotesize  \& INRIA (Micmac Project), 6-8 Av. Blaise Pascal, 77455 Champs-sur-Marne, France }
}
\date{\today} 
\begin{document}
\maketitle

\begin{abstract}
In this article, we consider quantum crystals with defects in the reduced Hartree-Fock framework. 
The nuclei are supposed to be classical particles arranged around a reference periodic configuration. 
The perturbation is assumed to be small in amplitude, but need not be localized in a specific region of space or have any spatial invariance. Assuming Yukawa interactions, we prove the existence of an electronic ground state, solution of the self-consistent field equation. Next, by studying precisely the decay properties of this solution for local defects, we are able to expand the density of states  of the nonlinear Hamiltonian of a system with a random perturbation of Anderson-Bernoulli type, in the limit of low concentration of defects. One important step in the proof of our results is the analysis of the dielectric response of the  crystal to an effective charge perturbation. 
\end{abstract}

%\tableofcontents

\section{Introduction}

In solid state physics and materials science, the presence of defects in materials induces many interesting properties, such as Anderson localization
%these ~\cite{anderson,anderson_NL} 
and leads to many applications such as doped semi-conductors
% these ~\cite{stoneham}. 
The mathematical modeling and the numerical simulation of the electronic structure of these materials is a challenging task, as we are in the presence of infinitely many interacting particles.

\medskip

The purpose of this paper  is to construct the state of the quantum electrons of a mean-field crystal, in which the nuclei are classical particles arranged around a reference periodic configuration. We work with the assumption that the nuclear distribution is close to a chosen periodic arrangement \emph{locally}, but the perturbation need not be localized in a specific region of space and it also need not have any spatial invariance. To our knowledge, this is the first result of this kind for Hartree-Fock type models for quantum crystals, with short-range interactions. By studying precisely the behavior of our solution, we are then able to expand the density of states  of the Hamiltonian of the system in the presence of a random perturbation of Anderson-Bernoulli type, in the limit of low concentration of defects, that is when the Bernoulli parameter $p$ tends to zero. The state of the random crystal and the mean-field Hamiltonian were recently constructed in~\cite{CaLaLe-12}. 
Our small-$p$ expansion is the nonlinear equivalent of a previous result by Klopp~\cite{Klopp-95} in the linear case.

\medskip

The mean-field model we consider in this paper is the reduced Hartree-Fock model~\cite{Solovej}, also called the Hartree model in the physics literature. It is obtained from the generalized Hartree-Fock model~\cite{LieSim-77} by removing the exchange term. As the 
Coulomb interaction is long-range, it is a difficult mathematical question to describe infinite systems interacting through the Coulomb potential. In the following, we assume that all the particles interact through Yukawa potential of parameter $m>0$. In fact, we can assume any reasonable short-range potential, but we concentrate on the Yukawa interaction in dimension $d\in\set{1,2,3}$ for simplicity. We consider systems composed of infinitely many classical nuclei distributed over the whole space and infinitely many electrons. 

We start by recalling the definition of the reduced Hartree-Fock (rHF) model for a finite system composed of a set of nuclei having a density of charge $\nu_{\rm nuc}$ and $N$ electrons. The electrons are described by the $N$-body wave-function (called a Slater determinant)
$$
\psi(x_1,\cdots,x_N)=\frac{1}{\sqrt{N!}}\text{det}( \phi_j(x_i)),
$$
where the functions $\phi_i\in L^2(\RRd)$ satisfy $ \langle \phi_i,\phi_j\rangle=\delta_{ij}$. The rHF equations then read
\begin{equation}\label{eq:SCF_finite}
\left\{ 
\begin{array}{l}
  H\phi_i=\lambda_i\phi_i\\[0,2cm]
\displaystyle H=-\frac12 \Delta +V\\[0,2cm]
-\Delta V+m^2V= \av{S^{d-1}}\bra{\rho_\psi-\nu_{\rm nuc}}
 \end{array}
\right. \quad \forall\, 1\leq i\leq N,
\end{equation}
where $\rho_\psi(x)=\sum_{i=1}^N\av{\phi_i(x)}^2$ and $\lambda_1,\cdots, \lambda_N$ are the smallest $N$ eigenvalues of the operator $H$, assuming that $\lambda_N<\lambda_{N+1}$. Here, $\av{S^{d-1}}$ is the Lebesgue measure of the unit sphere $S^{d-1}$ ($|S^0|=2$, $|S^1|=2\pi$, $|S^2|=4\pi$). The existence of a solution of~\eqref{eq:SCF_finite} is due to Lieb and Simon~\cite{LiebSimon}. 
% these See~\cite{Prodan1} for the finite temperature case.

In order to describe infinite systems, it is more convenient to reformulate the rHF problem in terms of the \emph{one-particle density matrix} formalism~\cite{stability}. In this formalism, the state of the electrons is described by the orthogonal projector $\gamma=\sum_{i=1}^N\av{\phi_i\rangle \langle \phi_i}$ of rank $N$ and the equations~\eqref{eq:SCF_finite} can be recast as
\begin{equation}\label{eq:SCF_infini}
\left\{ 
\begin{array}{l}
  \gamma=\1\bra{ H\leq \epsilon_F}\\[0,2cm]
\displaystyle H=-\frac12 \Delta +V\\[0,2cm]
\displaystyle
-\Delta V+m^2V=\av{S^{d-1}}\bra{ \rho_\gamma-\nu_{\rm nuc}},
 \end{array}
\right. 
\end{equation}
where formally $\rho_\gamma(x)=\gamma(x,x)$ and the Fermi level $\epsilon_F$ is any real number in the gap $[\lambda_N,\lambda_{N+1})$.

For infinite systems, the rHF equation is still given by~\eqref{eq:SCF_infini}, but $\gamma$ is now an infinite rank operator as there are infinitely many electrons in the system. The operator $\gamma$ needs to be locally trace class for the electronic density $\rho_\gamma$ to be well-defined in $L^1_{ \rm loc}(\RRd)$.

The rHF equation~\eqref{eq:SCF_infini} was solved for periodic nuclear densities 
$$\nu_{\rm nuc}=\nu_\per=\sum_{k\in \mathcal R}\eta(\cdot-k)$$
by Catto, Le Bris and Lions in~\cite{CLL_periodic}, and periodic nuclear densities with local perturbations 
$$\nu_{\rm nuc} =\sum_{k\in \mathcal R}\eta(\cdot-k)+\nu$$ 
were studied by Cancès, Deleurence and Lewin in~\cite{CDL}. We have denoted by $\mathcal R$ the underlying discrete periodic lattice. The corresponding Hamiltonians are denoted by $H_\per$ and $H_\nu$. 
% these See~\cite{Nier} for the finite temperature case. 
Stochastic distributions,
$$ \nu_{\rm nuc }(\o,\cdot)=\sum_{k\in \mathcal R}\eta(\cdot-k)+\sum_{k\in \mathcal R}q_k(\o)\chi(\cdot-k)$$
for instance, were treated in~\cite{CaLaLe-12}. 

Our present work follows on from~\cite{CDL,CL,CaLaLe-12}. We are going to solve the equation~\eqref{eq:SCF_infini} in the particular case where 
\begin{align}\label{eq:arbitrary_defect}
\nu_{\rm nuc} =\nu_\per+\nu,
\end{align}
where $\nu_\per$ is a periodic nuclear distribution so that the corresponding background crystal is an insulator (the mean-field Hamiltonian $H_\per$ has a gap around $\epsilon_F$), and $\nu\in L^2_\unif(\RRd)$ is a small enough arbitrary perturbation of the background crystal. The perturbation $\nu$ needs to be small in amplitude locally, but must not be local or have any spatial invariance. 

\medskip

The rHF model is an approximation of the $N$-body Schrödinger model, for which there is no well-defined formulation for infinite systems so far. The only available result is the existence of the thermodynamic limit of the energy: the energy per unit volume of the system confined to a box, with suitable boundary conditions, converges when the size of the box grows to infinity. The first theorem of this form for Coulomb interacting systems is due to Lieb and Lebowitz in~\cite{LieLeb-72}. In this latter work, nuclei are considered as quantum particle and rotational invariance plays a crucial role. For quantum systems in which the nuclei are classical particles, the thermodynamic limit was proved for perfect crystals by Fefferman~\cite{Fefferman-85} (a recent proof has been proposed in~\cite{HaiLewSol_2-09}) and for stationary stochastic systems by Blanc and Lewin~\cite{BlaLew-12}. Similar results for Yukawa interacting systems are simpler than for the Coulomb case and follow from the work of Ruelle and Fisher~\cite{RF-66} for perfect crystals and Veniaminov~\cite{Veniaminov} for stationary stochastic systems. Unfortunately, very little is known about the limiting quantum state in both cases. 

For (orbital-free) Thomas-Fermi like theories, the periodic model was studied in~\cite{LiebSimon, CLL_book}, the case of crystals with local defects was studied in~\cite{CanEhr} and stochastic systems were investigated in~\cite{BLBL2007}. To the best of our knowledge, the only works dealing with systems with arbitrary distributed nuclei are~\cite{CLL_book,BLL-03} for Thomas-Fermi type models. 

As mentioned before, our work is the first one to consider this kind of systems in the framework of Hartree-Fock type models. Our results concern small perturbations of perfect crystals interacting through short-range Yukawa potential. Extending these results to more general geometries and for the long-range Coulomb interaction are important questions that we hope to address in the future. 

\medskip

After having found solutions of~\eqref{eq:SCF_infini} for any (small enough) $\nu\in L^2_\unif(\RRd)$, we study
the properties of this solution for local perturbations $\nu$. This enables us to investigate small random perturbations of perfect crystals. Precisely, we consider nuclear distributions 
$$
\nu_{\rm nuc}(\o,x)=\nu_\per(x)+\sum_{k\in \mathcal R}q_k(\o)\chi(x-k),
$$
where $(q_k)_{k\in\mathcal R}$ are i.i.d. Bernoulli variables of parameter $p$ and $\chi$ is a compactly supported function which is small enough in $L^2(\RRd)$. We are interested in the properties of the system in the limit of low concentration of defects, that is when the parameter $p$ goes to zero. We prove that the density of states  of the mean-field Hamiltonian $H_p=-\frac12\Delta+V_p$, which describes the collective behavior of the electrons, admits an expansion of the form
\begin{equation}\label{eq:asymptotic_expansion}
  n_p= n_0+  \sum_{j=1}^J\mu_jp^j+O(p^{J+1}).
\end{equation}
Here, $ n_0$ is the density of states  of the unperturbed Hamiltonian $H_\per=-\frac{1}{2}\Delta+V_\text{per}$ and $\mu_1$ is a function of the spectral shift function for the pair of operators $H_\per$ and $H_\chi$, the latter being the mean-field Hamiltonian of the system with only one local defect constructed in~\cite{CDL}. We give in Theorem~\ref{th:expansion} a precise meaning of $O(p^{J+1})$.

In~\cite{Klopp-95}, Klopp considers the empirical linear Anderson-Bernoulli model 
$$
H=-\frac{1}{2}\Delta+ V_0+V\quad \text{with}\quad V(\o,x)=\sum_{k\in\mathcal R}q_k(\o)\eta(x-k),
$$
where $V_0 $ is a linear periodic potential and $\eta$ an exponentially decaying potential. He proves that the density of states  of the Hamiltonian $H$ admits an asymptotic expansion similar to~\eqref{eq:asymptotic_expansion}. The case where $V(\o,x)$ is distributed following a Poisson law instead of Bernoulli is dealt with in~\cite{Klopp-poisson}. Our proof of~\eqref{eq:asymptotic_expansion} follows the same lines as the one of Klopp. The main difficulty here is to understand the decay properties of the mean-field potential $V$ solution of the self-consistent equations~\eqref{eq:SCF_infini}. For this reason, we dedicate an important part of this paper to the study of these decay properties. In Theorem~\ref{th:decay} below, we show that for a compactly supported perturbation $\nu$, the difference $V-V_\per$ decays faster than any polynomial far from the support of the perturbation $\nu$. Moreover, we show  that the potential generated by two defects that are far enough is close to the sum of the potentials generated by each defect alone. 

\medskip

The article is organized as follow. In Section~\ref{sec:main_results}, we present the main results of the paper. We start by recalling the reduced Hartree-Fock model for perfect crystals and perfect crystals with local defects in Section~\ref{sec:perfect_crystal}. In Section~\ref{sec:res_existence}, we state the existence of solutions to the self-consistent equations~\eqref{eq:SCF_infini} for $\nu_{\rm nuc}$ given by~\eqref{eq:arbitrary_defect}. We also explain that our solution is in some sense the minimizer of the energy of the system. We also prove a thermodynamic limit, namely, the ground state of the system with the perturbation $\nu$ confined to a box converges, when the size of the box goes to infinity, to the ground state of the system with the perturbation $\nu$. In Section~\ref{sec:res_decay}, we prove decay estimates for the mean-field density and potential. In Section~\ref{sec:res_expansion}, we present the expansion of the density of states  of the mean-field Hamiltonian. The proofs of all these results are provided in Sections~\ref{sec:existence_GS},~\ref{sec:decay_rate},~\ref{sec:thermo_limit} and~\ref{sec:expansion}. 
In Section~\ref{sec:linear_response}, we study the dielectric response of a perfect crystal to a variation of the effective charge distribution, which plays a key role in this paper.

\bigskip

\noindent\textbf{Acknowledgement.}
I thoroughly thank \'Eric Cancès and Mathieu Lewin for their precious help and advices. The research leading to these results has received funding from the European Research Council under the European Community's Seventh Framework Programme (FP7/2007--2013 Grant Agreement MNIQS no. 258023).

%%%%%%%%%%%%%%%%%%%%%%%%%%%%%%%%%%%%%%%%%%%%%%%%%%%%%%%%%%%%%%%%%%%%%%%%%%%%%%%%%%%%%%%%%%%%%%%%%%%%%%%%%%%%%%%%%%

\section{Statement of the main results}\label{sec:main_results}

%%%%%%%%%%%%%%%%%%%%%%%%%%%%%%%%%%%%%%%%%%%%%%%%%%%%%%%%

\subsection{The rHF model for crystals with and without local defects}\label{sec:perfect_crystal}

In defect-free materials, the nuclei and electrons are arranged according to a discrete periodic lattice $\mathcal R$ of $\RRd$, in the sense that both the nuclear density $\nu_{\rm nuc}=\nu_\per$ and the electronic density are $\mathcal R$-periodic functions. For simplicity, we take $\mathcal R=\ZZd$ in the following. The reduced Hartree-Fock model for perfect crystals has been rigorously derived from the reduced Hartree-Fock model for finite molecular systems by means of thermodynamic limit procedure in~\cite{CLL_periodic,CDL} in the case of Coulomb interaction. The same results for Yukawa interaction are obtained with similar arguments. The self-consistent equation~\eqref{eq:SCF_infini} then reads
\begin{equation}\label{eq:SCF_periodic}
\left\{ 
\begin{array}{l}
  \gamma_0=\1 \bra{H_\per\leq \epsilon_F}\\[0,2cm]
\displaystyle H_\per=-\frac12 \Delta +V_\per\\[0,2cm]
\displaystyle -\Delta V_\per+m^2V_\per=\av{S^{d-1}}\bra{ \rho_{\gamma_0}-\nu_{\rm per}}.
 \end{array}
\right. 
\end{equation}
It has been proved in~\cite{CLL_periodic,CDL} that~\eqref{eq:SCF_periodic} admits a unique solution which is the unique minimizer of the periodic rHF energy functional. 

Most of our results below hold only for insulators (or semi-conductors). We therefore make the assumption that
\begin{equation}\label{eq:Assumption_H_0_has_a_gap}
H_\per\text{ has a spectral gap around } \epsilon_F.
\end{equation}

\medskip

The rHF model for crystals with local defects was introduced and studied in~\cite{CDL}. A solution of the rHF equation~\eqref{eq:SCF_infini} is constructed using a variational method. One advantage of this method is that there is no need to assume that the perturbation $\nu$ is small in amplitude. The idea is to find a minimizer of the infinite energy of the system by minimizing the energy difference between the perturbed state and the perfect crystal. The ground state density matrix can thus be decomposed as
\begin{align}\label{eq:probleme_defaut_local}
 \gamma=\gamma_0+Q_\nu,
\end{align}
where $Q_\nu$ is a minimizer of the energy functional 
\begin{equation}\label{eq:energie_defaut} 
\E^\nu(Q)=\tr_{\gamma_0}\bra{(H_\per-\epsilon_F) Q}+\frac{1}{2}D_m(\rho_Q-\nu,\rho_Q-\nu)
\end{equation} 
on the convex set
\begin{equation}\label{eq:K}
\begin{array}{c}
 \K=\left\{Q^*=Q,\; -\gamma_0\leq Q\leq 1-\gamma_0,\;\bra{-\Delta+1}^\frac12 Q\in\S_2(L^2(\RRd)),\right.\\
\left.\;\bra{-\Delta+1}^\frac12 Q^{\pm\pm} \bra{-\Delta+1}^\frac12 \in \S_1(L^2(\RRd)) \right\},
\end{array}
\end{equation}
where $Q^{++}=(1-\gamma_0)Q(1-\gamma_0)$, $Q^{--}=\gamma_0 Q \gamma_0$ and $\tr_{\gamma_0}(A)=\tr\bra{A^{++}+A^{--}}$. We use the notation $\S_p$ to denote the $p^{\rm th}$ Schatten class. In particular $\S_2$ is the set of Hilbert-Schmidt operators. 
The second term of~\eqref{eq:energie_defaut} accounts for the interaction energy and is defined for any charge densities $f,g\in H^{-1}(\RRd)$  by
\begin{align*}
D_m(f,g)&=\av{S^{d-1}}\int_\RRd\frac{\overline{\widehat{f}(p)}\widehat{g}(p)}{\av{p}^2+m^2}dp=\int_{\RRd}\int_{\RRd}f(x)Y_m(x-y)g(y)\,dx\,dy,
\end{align*}
where $\widehat{f}(p)=\bra{2\pi}^{-\frac{d}{2}}\int_\RRd f\bra{x}e^{-ip\cdot x}dx$ is the Fourier transform of $f$.
The Yukawa kernel $Y_m$, the inverse Fourier transform of $\av{S^{d-1}}(\av{p}^2+m^2)^{-1}$, is given by
$$
Y_m(x)=\left\{
\begin{array}{ll}
m^{-1}e^{-m\left|x \right|}\; & \mbox{if}\;d=1,\\ [0,2cm]
\displaystyle
K_0\bra{m\av{x}}  \;  &\mbox{if}\;d=2,\\ [0,2cm]
\displaystyle
|x|^{-1}e^{-m\left|x \right|} \; &\mbox{if}\;d=3,
\end{array}
\right.
$$
where $K_0\bra{r}=\int_0^\ii e^{-r\cosh t}\,dt$ is the modified Bessel function of the second type~\cite{LL}. It has been proved in~\cite{CDL} that the energy functional~\eqref{eq:energie_defaut} is convex and that all its minimizers share the same density $\rho_\gamma$. These minimizers are of the form
\begin{equation}\label{eq:SCF_defaut}
 \left\{
\begin{array}{l}
\gamma=\1 \bra{H\leq \epsilon_F}+\delta\\[0,2cm]
\displaystyle
 H=-\frac12 \Delta+ V\\[0,2cm]
\displaystyle
-\Delta V+ m^2V=\av{S^{d-1}} (\rho_\gamma-\nu_\per-\nu),
\end{array}
\right.
\end{equation}
where $0 \leq \delta\leq \1\bra{H=\epsilon_F}$. If $\nu$ is small enough in the $H^{-1}$-norm, then $\delta=0$.

One of the purposes of this article is to find decay estimates of the potential $V$ solution of~\eqref{eq:SCF_defaut} that are necessary in the study of the Anderson-Bernoulli random perturbations of crystals.

%%%%%%%%%%%%%%%%%%%%%%%%%%%%%%%%%%%%%%%%%%%%%%%%%%%%%%%%

\subsection{Existence of ground states}\label{sec:res_existence}

In this section, we state our results concerning the electronic state of a perturbed crystal. The host crystal is characterized by a periodic nuclear density $\nu_\per\in L^2_\unif(\RRd)$ such that the gap assumption~\eqref{eq:Assumption_H_0_has_a_gap} holds. The perturbation is given by a distribution $\nu\in L^2_\unif(\RRd)$. The total nuclear distribution is then 
$$
\nu_{\rm nuc} =\nu_\per+\nu.
$$
In Theorem~\ref{th:point_fixe} below, we show that if $\nu$ is small enough in the $L^2_\unif$-norm, then the rHF equation~\eqref{eq:SCF_infini} 
admits a solution $\gamma$. This solution is unique in a neighborhood of $\gamma_0$. The proof consists in formulating the problem in terms of the density $\rho_\gamma$ and using a fixed point technique, in the spirit of~\cite{HLS-05}.

\begin{theorem}[Existence of a ground state]\label{th:point_fixe}
There exists $\alpha_c>0$ and $C\geq 0$ such that for any $\nu\in L^2_\unif(\RRd)$ satisfying $\norm{\nu}_{L^2_\unif}\leq \alpha_c$, 
there is a unique solution $\gamma\in \S_{1,\text{\rm loc}}(L^2(\RRd))$ to the self-consistent equation 
\begin{equation}\label{eq:SCE}
 \left\{
\begin{array}{l}
\gamma=\1 \bra{H\leq \epsilon_F}\\[0,2cm]
\displaystyle
 H=-\frac12\Delta+V\\[0,2cm]
\displaystyle
-\Delta V+ m^2 V=\av{S^{d-1}} (\rho_\gamma-\nu-\nu_\per)
\end{array}
\right.
\end{equation}
satisfying
\begin{equation}\label{eq:rho_controle_par_nu}
 \norm{\rho_\gamma-\rho_{\gamma_0}}_{L^2_\unif}\leq C\norm{\nu}_{L^2_\unif}.
\end{equation}
We denote this solution by $\gamma_\nu$, the response electronic density by $\rho_{\nu}=\rho_{\gamma_\nu}-\rho_{\gamma_0}$ and the defect mean-field potential by $V_\nu=V-V_\per$.
\end{theorem}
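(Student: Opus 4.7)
The plan is to recast~\eqref{eq:SCE} as a fixed-point equation for the potential difference $v:=V-V_\per$ and to solve it by a contraction argument on a small ball of $L^2_\unif(\RRd)$. Assumption~\eqref{eq:Assumption_H_0_has_a_gap} and a Combes--Thomas estimate are the key ingredients.

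\textbf{Reformulation as a fixed point.} Let $g>0$ be such that $\mathrm{dist}(\epsilon_F,\sigma(H_\per))\ge g$, and fix a counter-clockwise contour $\mathcal{C}\subset\CC$ enclosing $\sigma(H_\per)\cap(-\infty,\epsilon_F)$ at distance $\ge g/4$ from the spectrum. For $v\in L^\infty(\RRd)$ with $\|v\|_\infty$ smaller than a fixed fraction of $g$, the operator $H_\per+v$ still has a spectral gap around $\epsilon_F$, and the resolvent identity gives the operator-norm convergent expansion
$$q(v):=\1(H_\per+v\le\epsilon_F)-\gamma_0=\sum_{k\ge 1}Q_k(v),\qquad Q_k(v):=\frac{1}{2\pi i}\oint_{\mathcal{C}}(z-H_\per)^{-1}\bigl[v\,(z-H_\per)^{-1}\bigr]^k dz.$$
The equation~\eqref{eq:SCE} is then equivalent to the fixed-point problem
$$v=F_\nu(v):=Y_m*\bigl(\rho_{q(v)}-\nu\bigr).$$

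\textbf{A priori estimates.} Two facts are needed. First, since $Y_m\in L^1(\RRd)$, Minkowski's inequality gives $\|Y_m*f\|_{L^2_\unif}\le\|Y_m\|_1\,\|f\|_{L^2_\unif}$, and a Cauchy--Schwarz decomposition on unit cubes combined with the exponential decay of $Y_m$ yields $\|Y_m*f\|_{L^\infty}\le C\|f\|_{L^2_\unif}$. Second, the Combes--Thomas estimate applied at the gap furnishes, uniformly for $z\in\mathcal{C}$, a pointwise bound $|(z-H_\per)^{-1}(x,y)|\le C\,e^{-\alpha|x-y|}$. This decay propagates to the kernel of each $[v(z-H_\per)^{-1}]^k$ as long as $\|v\|_\infty$ is small. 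Combining the two facts I would prove that $v\mapsto\rho_{q(v)}$ is $C^1$ from a neighbourhood of $0$ in $L^2_\unif(\RRd)\cap L^\infty(\RRd)$ into $L^2_\unif(\RRd)$, with linearization $\mathcal{L}v:=\rho_{Q_1(v)}$ bounded on $L^2_\unif(\RRd)$ and remainder $N(v):=\rho_{q(v)}-\mathcal{L}v$ satisfying $\|N(v)-N(v')\|_{L^2_\unif}\le C(\|v\|_\infty+\|v'\|_\infty)\|v-v'\|_{L^2_\unif}$.

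\textbf{Dielectric operator and Banach fixed point.} The Fr\'echet derivative of $F_\nu$ at $v=0$ is minus the dielectric operator $\varepsilon:=1-Y_m*\mathcal{L}$, whose bounded invertibility on $L^2_\unif(\RRd)$ is the content of Section~\ref{sec:linear_response}. Rewriting the fixed-point equation as
$$v=\Psi_\nu(v):=\varepsilon^{-1}\bigl(-Y_m*\nu+Y_m*N(v)\bigr),$$
one checks that $\Psi_\nu$ has Lipschitz constant $O(\|\varepsilon^{-1}\|\,\|Y_m\|_1\,R)$ on the ball $B_R:=\{\|v\|_{L^2_\unif}\le R\}$, while $\|\Psi_\nu(0)\|_{L^2_\unif}\le C\|\nu\|_{L^2_\unif}$. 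Choosing $R$ small and $\alpha_c$ so that $C\alpha_c\le R/2$, the map $\Psi_\nu$ sends $B_R$ into itself and is $\tfrac12$-Lipschitz whenever $\|\nu\|_{L^2_\unif}\le\alpha_c$. The Banach fixed-point theorem produces a unique $v_\nu\in B_R$ solving~\eqref{eq:SCE}, and~\eqref{eq:rho_controle_par_nu} follows from $\rho_\nu=\mathcal{L}v_\nu+N(v_\nu)$ together with the bound $\|v_\nu\|_{L^2_\unif}\le 2C\|\nu\|_{L^2_\unif}$.

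\textbf{Main obstacle.} The crux of the argument is the bounded invertibility of $\varepsilon=1-Y_m*\mathcal{L}$ on the non-separable space $L^2_\unif(\RRd)$: the Hilbert-space machinery of~\cite{CDL} in $L^2(\RRd)$ does not transfer directly, and one must combine the $\mathcal R$-periodicity of $\mathcal{L}$ with the Combes--Thomas decay in order to reduce the question to a Bloch-type statement on the periodicity cell. I would invoke this as a black box provided by Section~\ref{sec:linear_response}.
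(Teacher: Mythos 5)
Your strategy is essentially the paper's: the self-consistent equation is recast as a fixed-point problem, solved by the Banach fixed-point theorem on a small ball of a uniformly-local space, with the bounded invertibility of the dielectric response (Section~\ref{sec:linear_response}) as the key black box. The only substantive difference is the choice of unknown. You parametrize by the potential $v = V - V_\per$; the paper parametrizes by the density $\rho = \rho_\gamma - \rho_{\gamma_0}$. The two are related by the bijection $v = Y_m * (\rho - \nu)$, and your dielectric operator $\varepsilon = 1 - Y_m * \mathcal{L}$ is conjugate to the paper's $1 + \L$ via $Y_m *$, since $\varepsilon\bigl(Y_m * f\bigr) = Y_m * \bigl((1+\L) f\bigr)$.

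There is, however, a gap in the choice of function space. You run the contraction on $B_R = \{\|v\|_{L^2_\unif} \le R\}$, but every ingredient of your argument --- the persistence of the spectral gap of $H_\per + v$, the operator-norm convergence of the resolvent expansion, the estimate $\|N(v)-N(v')\|_{L^2_\unif} \le C(\|v\|_\infty + \|v'\|_\infty)\|v-v'\|_{L^2_\unif}$ --- requires control of $\|v\|_{L^\infty}$, which the $L^2_\unif$-norm of the \emph{potential} does not give. Likewise, Theorem~\ref{th:props_de_L}~(iii) proves invertibility of $1 + \L$ on $L^2_\unif$ of \emph{charge densities}; via the conjugation above this gives invertibility of $\varepsilon$ on $H^2_\unif = Y_m * L^2_\unif$, not on $L^2_\unif$ as you assert. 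The fix is to take the ball in $H^2_\unif$ (which embeds in $L^\infty$ for $d\le 3$) or to verify that $\Psi_\nu$ always outputs $Y_m$-convolutions of $L^2_\unif$ densities and hence automatically lands in $H^2_\unif$. This is precisely the subtlety that the paper's choice of unknown sidesteps: working with the density $\rho \in L^2_\unif$, the multiplication operator that appears in the contour integrals is always $Y_m * f$ with $f \in L^2_\unif$, hence automatically bounded by Lemma~\ref{lemma:Young}, so a plain $L^2_\unif$-ball suffices. A minor slip as well: the Fr\'echet derivative of $F_\nu$ at $0$ is $Y_m * \mathcal{L}$, and $\varepsilon = 1 - dF_\nu|_0$, not $dF_\nu|_0 = -\varepsilon$.
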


For a local defect $\nu\in L^2(\RRd)\cap L^1(\RRd)$ such that $\norm{\nu}_{L^2_\unif}\leq \alpha_c$, equation~\eqref{eq:SCE} admits a unique solution which coincides with the ground state $\gamma$ solution of~\eqref{eq:probleme_defaut_local} constructed in~\cite{CDL}. 
Indeed, the solution $\gamma_\nu$ given in Theorem~\ref{th:point_fixe} is a solution of the defect problem~\eqref{eq:SCF_defaut}. Moreover, in the proof of Theorem~\ref{th:point_fixe}, we prove that $H$ has a gap around $\epsilon_F$, thus necessarily $\delta=0$ in~\eqref{eq:SCF_defaut}. As all the solutions of~\eqref{eq:SCF_defaut} share the same density,~\eqref{eq:SCF_defaut} (thus~\eqref{eq:SCE}) admits a unique solution. 

The ground state constructed in Theorem~\ref{th:point_fixe} is in fact the unique minimizer of the "infinite" rHF energy functional. Indeed, following ideas of~\cite{HLS-07}, we can define the relative energy of the system with nuclear distribution $ \nu_{\rm nuc} $ by subtracting the "infinite" energy of $\gamma_\nu$ from the "infinite" energy of a test state $\gamma$:
\begin{align*}
\displaystyle \E_{ \nu}^{\rm rel}(\gamma):=\tr_{\gamma_\nu}\bra{\bra{H-\epsilon_F}\bra{\gamma-\gamma_\nu}}+\frac12 D_m\bra{ \rho_\gamma-\rho_{\gamma_\nu},\rho_\gamma-\rho_{\gamma_\nu} }.
\end{align*}
This energy is well-defined for states $\gamma$ such that $\gamma-\gamma_\nu$ is finite rank and smooth enough for instance, but one can extend it to states in a set similar to $\K$ in~\eqref{eq:K}. The minimum of the energy $\E_{ \nu}^{\rm rel}$ is attained for $\gamma= \gamma_\nu=\1 \bra{H\leq \epsilon_F}$. Moreover, as $H$ has a gap around $\epsilon_F$, $\E_{ \nu}^{\rm rel}$ is strictly convex and $\gamma_\nu$ is its unique minimizer. 

In the following theorem, we show that if we confine the defect $\nu$ to a box of finite size, then the ground state of the system defined by the theory of local defects presented in Section~\ref{sec:perfect_crystal} converges, when the size of the box goes to infinity, to the ground state of the system with the defect $\nu$ defined in Theorem~\ref{th:point_fixe}. We denote by $\Gamma_L=\left[-L/2,L/2\right)^d$.

\begin{theorem}[Thermodynamic limit]\label{th:limite_thermo}
There exists $\alpha_c> 0$ such that for any $\nu\in L^2_\unif(\RRd)$ satisfying $\norm{\nu}_{L^2_\unif}\leq \alpha_c$, the sequence $(\gamma_{\nu\1_{{\Gamma}_L}})_{L\in \NN\setminus\set{0}}$ converges in $\S_{1,\text{loc}}(L^2(\RRd))$ to $\gamma_\nu$ as $\cL$.
\end{theorem}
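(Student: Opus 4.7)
The plan is to exploit the contraction mapping structure underlying Theorem~\ref{th:point_fixe}, combined with the exponential locality of the Yukawa interaction. For every $L\in\NN\setminus\set{0}$, the cutoff $\nu_L:=\nu\1_{\Gamma_L}$ satisfies $\|\nu_L\|_{L^2_\unif}\leq \|\nu\|_{L^2_\unif}\leq\alpha_c$, so Theorem~\ref{th:point_fixe} provides a unique $\gamma_{\nu_L}$ with the uniform bound $\|\rho_{\gamma_{\nu_L}}-\rho_{\gamma_0}\|_{L^2_\unif}\leq C\alpha_c$. Since $\nu_L\in L^1(\RRd)\cap L^2(\RRd)$, the discussion following Theorem~\ref{th:point_fixe} identifies $\gamma_{\nu_L}$ with the Canc\`es--Deleurence--Lewin ground state of~\eqref{eq:SCF_defaut} for the local defect $\nu_L$, which is the object on which the theorem makes a statement.

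Next I would subtract the self-consistent equations~\eqref{eq:SCE} satisfied by $\gamma_{\nu_L}$ and $\gamma_\nu$. Using a Cauchy contour integral around the gap of $H_\per$ together with the resolvent expansion, as in the proof of Theorem~\ref{th:point_fixe}, the difference $\rho_L-\rho$, where $\rho_L:=\rho_{\gamma_{\nu_L}}-\rho_{\gamma_0}$ and $\rho:=\rho_{\gamma_\nu}-\rho_{\gamma_0}$, satisfies a fixed-point equation of the form
\[
\rho_L-\rho=T\bigl(V_{\nu_L}-V_\nu\bigr)+\text{higher-order terms},
\]
where $T$ is the linear density--density response operator of the unperturbed crystal and
\[
V_{\nu_L}-V_\nu=Y_m*(\rho_L-\rho)-Y_m*(\nu\1_{\Gamma_L^c}).
\]
The contraction property used to prove Theorem~\ref{th:point_fixe} shows that this map has Lipschitz constant strictly less than one once $\alpha_c$ is small enough, so $\rho_L-\rho$ is ultimately controlled by the driving source $Y_m*(\nu\1_{\Gamma_L^c})$.

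The key observation is then the exponential decay of the Yukawa kernel: for any compact $K\subset\RRd$ one has $\|Y_m*(\nu\1_{\Gamma_L^c})\|_{L^\infty(K)}\lesssim e^{-m\,\mathrm{dist}(K,\Gamma_L^c)}\|\nu\|_{L^2_\unif}\to 0$ as $\cL$. Combined with Combes--Thomas-type exponential decay of the integral kernel of the resolvent of $H_\per$ on the gap (and hence of $\gamma_0$), this local smallness propagates through $T$, and one deduces $\rho_L\to\rho$ in $L^2_{\rm loc}(\RRd)$ by iterating the contraction in exponentially weighted (or purely local) norms. The corresponding defect potentials then converge in $L^\infty_{\rm loc}(\RRd)$, and a Cauchy formula for the spectral projectors $\1(H\leq\epsilon_F)$, combined with the gap assumption~\eqref{eq:Assumption_H_0_has_a_gap} (which persists for all large $L$ thanks to Theorem~\ref{th:point_fixe}), upgrades this to convergence $\gamma_{\nu_L}\to\gamma_\nu$ in $\S_{1,\rm loc}(L^2(\RRd))$.

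The main obstacle is the mismatch between the global $L^2_\unif$ norm in which Theorem~\ref{th:point_fixe} lives and the local topology in which convergence actually holds: $\nu_L$ does \emph{not} converge to $\nu$ in $L^2_\unif$, so the contraction estimate cannot be invoked as a black box. One has to re-run the argument in a topology sensitive to locality, making quantitative the intuition that the short-range Yukawa interaction decouples the density matrix on a fixed compact region from the distant parts of the defect. This is also the step where the assumption $m>0$ (i.e., a genuinely short-range interaction) enters in an essential way.
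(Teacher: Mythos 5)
Your high-level strategy is the correct one and matches the paper's: combine Cauchy's formula for the spectral projectors, the resolvent identity, the Combes--Thomas decay of the resolvent kernel of $H_\per$, and the exponential decay of $Y_m$. You also correctly identify the central obstacle: $\nu\1_{\Gamma_L}\not\to\nu$ in $L^2_\unif$, so the fixed-point estimate from Theorem~\ref{th:point_fixe} cannot be used directly. But the crucial step that you flag as ``one has to re-run the argument in a topology sensitive to locality'' is precisely where all the work lies, and your proposed route for it has a gap.

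The issue is that the linear response operator $\L$ (and hence the map $\rho\mapsto\G_\nu(\rho)$) is \emph{not} local: it is defined by a global contour integral involving the resolvent of $H_\per$, and it is certainly not a contraction in $L^2$ of a fixed compact set or in exponentially weighted spaces. What makes the argument work in the paper is the much weaker --- and essential --- quasi-locality statement of Theorem~\ref{th:props_de_L}(iv), namely that $(1+\L)^{-1}$ has off-diagonal decay $\|\1_{\Gamma+j}(1+\L)^{-1}\1_{\Gamma+k}\|_{\B}\leq Ce^{-C'(\log|k-j|)^2}$. This is only sub-exponential, which is why the locality estimate one ultimately gets (Proposition~\ref{prop:loc}) is polynomial of arbitrary order, $\norm{\rho_\nu-\rho_{\nu_L}}_{L^2_\unif(B(0,L/4^\beta))}\leq CL^{-\beta}\norm{\nu}_{L^2_\unif}$, rather than exponential as you expect. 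Your plan to ``iterate the contraction in exponentially weighted norms'' would therefore not close: the weight cannot be taken exponential, and more fundamentally the iteration mechanism is not the Banach contraction but the commutator/localization machinery of Lemma~\ref{lemma:estimee_com_L2unif} (which compares $\eta_R\L f$ with $\L\eta_R f$) together with the recursion Lemma~\ref{lemma:type_gronwall} which, starting from a bound of the shape $x_R\leq CR^{-1}e^{-C'R}x_0 + CR^{-1}x_{R/a}$, only yields $x_R\leq Ce^{-C'(\log R)^2}x_0$. Once Proposition~\ref{prop:loc} is in hand, the final trace-class convergence is indeed a routine Cauchy-formula/Combes--Thomas argument as you sketch, giving $\norm{\1_B(\gamma_\nu-\gamma_{\nu_L})\1_B}_{\S_1}\leq CL^{-1}\norm{\nu}_{L^2_\unif}\cvL 0$; but without the intermediate locality estimate, and without recognizing that the relevant decay is only sub-exponential, the proof is incomplete.
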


%%%%%%%%%%%%%%%%%%%%%%%%%%%%%%%%%%%%%%%%%%%%%%%%%%%%%%%%

\subsection{Decay estimates }\label{sec:res_decay}
In this section, we prove some decay estimates of the mean-field potential $V_\nu$ and the mean-field density $\rho_\nu$, which will be particularly important to understand the system in the presence of rare perturbations in the next section.

Theorem~\ref{th:decay} below is crucial in the proof of Theorem~\ref{th:expansion}. Indeed, we will need uniform decay estimates for compactly supported defects, with growing supports and uniform local norms. 

\begin{theorem}[Decay rate of the mean-field potential and density]\label{th:decay}
There exists $\alpha_c,C'>0 $ and $C\geq 0$ such that for any $\nu\in L^2_c(\RRd)$ satisfying $\norm{\nu}_{L^2_\unif}\leq \alpha_c$, we have for $R\geq 2$
\begin{equation}\label{eq:decayL2unif}
\norm{V_\nu}_{H^2_\unif(\RRd\setminus C_R(\nu))}+\norm{\rho_\nu}_{L^2_\unif(\RRd\setminus C_R(\nu))}\leq C e^{-C'\bra{\log R}^2}\norm{\nu}_{L^2_\unif(\RRd)},
\end{equation}
where
$C_R(\nu)=\set{x\in \RRd, \; \text{\rm d}\bra{x, \text{\rm supp}(\nu)}< R}$. 
\end{theorem}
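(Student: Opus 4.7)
The plan is to bootstrap the locality of $\nu$ through the self-consistent equation of Theorem~\ref{th:point_fixe}, exploiting two exponentially-decaying integral kernels: the Yukawa kernel $Y_m$ and the linear polarization operator $\L$ of the perfect crystal. I first recast the SCE as a single fixed-point equation for $V_\nu$. From the Cauchy integral
\begin{equation*}
\gamma_\nu - \gamma_0 \;=\; \frac{1}{2\pi i}\oint_\C \bigl( (z-H_\per - V_\nu)^{-1} - (z-H_\per)^{-1}\bigr)\,dz
\end{equation*}
on a contour $\C$ enclosing the spectrum of $H_\per$ below $\epsilon_F$, a Dyson expansion in $V_\nu$ yields $\rho_\nu = -\L V_\nu + Q(V_\nu)$ with $Q$ (at least) quadratic in $V_\nu$. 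The SCE then takes the form $(-\Delta + m^2 + |S^{d-1}|\L)V_\nu = |S^{d-1}|(Q(V_\nu) - \nu)$, which I treat as a perturbation of the linear dielectric equation.

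\textbf{Key kernel estimates.} The Yukawa kernel satisfies $|Y_m(x)| \leq C e^{-m|x|}$. Under the gap assumption~\eqref{eq:Assumption_H_0_has_a_gap}, the Combes--Thomas method gives exponential off-diagonal decay of $(z-H_\per)^{-1}(x,y)$, uniformly in $z \in \C$; integrating over $\C$ transfers this exponential decay to the integral kernel of $\L$, and to each term of the Dyson expansion of $Q$. This is essentially the content of Section~\ref{sec:linear_response}. Combined with the decay of $Y_m$, one obtains exponentially decaying kernels for the Green's function of the dielectric operator.

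\textbf{Bootstrap.} For each $R \geq 2$, introduce a smooth cut-off $\chi_R$ vanishing on $C_R(\nu)$ and equal to $1$ on $\RRd\setminus C_{R+1}(\nu)$. Since $\nu$ vanishes on the support of $\chi_R$ as soon as $R$ exceeds the diameter of the support of $\nu$, multiplying the fixed-point equation by $\chi_R$ expresses $\chi_R V_\nu$ as the sum of (i) an ``interior'' contribution coming from $V_\nu$ on $C_R(\nu)$ but damped by the exponential kernels over a distance at least $R$, and (ii) a ``tail'' contribution involving $V_\nu$ restricted to $\RRd\setminus C_{R'}(\nu)$ with $R' \sim R$. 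Starting from the global bound $\norm{V_\nu}_{H^2_\unif} + \norm{\rho_\nu}_{L^2_\unif} \leq C\norm{\nu}_{L^2_\unif}$ given by Theorem~\ref{th:point_fixe}, iterate on a suitable sequence $R_k$ of radii, alternating between improving the decay of $V_\nu$ via the Yukawa part and of $\rho_\nu$ via $\L$. Careful book-keeping of the gain per step against the spatial advance produces the estimate $C e^{-C'(\log R)^2}$.

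\textbf{Main obstacle.} The delicate point is the non-locality of $\L$: even when $V_\nu$ is localized near $\mathrm{supp}(\nu)$, $\L V_\nu$ is not, so the argument cannot be reduced to a purely local PDE. All estimates must therefore be taken in local norms ($L^2_\unif$ and $H^2_\unif$) rather than global ones, and at each step the non-local tail is controlled only by the decay rate established at the previous iteration. The smallness $\norm{\nu}_{L^2_\unif}\leq \alpha_c$ plays a double role here: it keeps the quadratic remainder $Q(V_\nu)$ subdominant, and it ensures invertibility of the dielectric operator $-\Delta + m^2 + |S^{d-1}|\L$ uniformly in the relevant local norms. The trade-off between the spatial advance and the gain per step is precisely what produces the characteristic $e^{-C'(\log R)^2}$ rate, rather than pure exponential decay.
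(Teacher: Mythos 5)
Your overall strategy---decompose $\rho_\nu$ into a linear part governed by $\L$ and a quadratic remainder, localize with cutoffs, and iterate on a geometric sequence of radii until the $e^{-C'(\log R)^2}$ rate emerges---is the same as the paper's. But there are two gaps and one error that prevent the argument from closing.

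\textbf{The per-step gain is not identified.} You say that ``careful book-keeping of the gain per step against the spatial advance produces the estimate $Ce^{-C'(\log R)^2}$,'' but you never say \emph{where} that gain comes from. Exponential decay of the kernels alone would give a recursion of the form $x_R \le e^{-CR}x_0 + x_{R'}$ with $R' \sim R$, which does not converge to anything small. The crucial algebraic factor $\tfrac{1}{R}$ in front of the tail term comes from the derivatives of the cutoff: one chooses $\chi_R$ to vary over a shell of width $\sim R$, so that $|\nabla\chi_R| \le C/R$ and $|\Delta\chi_R| \le C/R^2$. This enters through the commutator $[\eta_R,\L]$ (Lemma~\ref{lemma:estimee_com_L2unif}), which is how the paper produces the recursion $x_R \le \frac{C}{R}e^{-C'R}x_0 + \frac{C}{R}x_{R/4}$, and Lemma~\ref{lemma:type_gronwall} then extracts the rate $e^{-C'(\log R)^2}$. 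Your proposal, as written, uses a fixed-width cutoff (``vanishing on $C_R(\nu)$ and equal to $1$ on $\RRd\setminus C_{R+1}(\nu)$''), which cannot produce the $1/R$ factor; the recursion would not close.

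\textbf{The exponential-decay claim for the dielectric Green's function is both unjustified and inconsistent with the conclusion.} You assert that ``one obtains exponentially decaying kernels for the Green's function of the dielectric operator,'' i.e., for $(1+\L)^{-1}$. If this were known, your bootstrap would yield \emph{exponential} decay of $V_\nu$, not $e^{-C'(\log R)^2}$. The paper proves only $e^{-C'(\log|k-j|)^2}$ off-diagonal decay for $(1+\L)^{-1}$ (Theorem~\ref{th:props_de_L}(iv)), again via the recursion lemma, and explicitly remarks after the theorem that the conjecturally optimal exponential rate is not obtained. While $\L$ itself does have an exponentially decaying kernel (Combes--Thomas plus the decay of $Y_m$), passing to the inverse is precisely the hard step, and your proof silently assumes it. Finally, a smaller conceptual error: the invertibility of $1+\L$ (hence of the dielectric operator) on $L^2_\unif$ is a structural fact about the unperturbed crystal and holds unconditionally (Theorem~\ref{th:props_de_L}(i)--(iii)); it does not require $\norm{\nu}_{L^2_\unif}\le\alpha_c$. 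The smallness of $\nu$ is needed only to control the quadratic remainder and to keep the perturbed Hamiltonian gapped.
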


\begin{remark}
Using the same techniques as in the proof of Theorem~\ref{th:decay}, we can prove (see~\cite{these}) that there exists $\alpha,\alpha_c,C'>0$ and $C\geq 0$ such that for any $\nu\in L^2_c(\RRd)$ satisfying $\norm{\nu}_{L^2_\unif}\leq \alpha_c$ and $\norm{\nu}_{H^{-1}}\leq \alpha$, we have for $R\geq 2$ 
\begin{equation}\label{eq:decayL2}
\norm{V_\nu}_{H^2(\RRd\setminus C_R(\nu))}+\norm{\rho_\nu}_{L^2(\RRd\setminus C_R(\nu))}\leq C e^{-C'\bra{\log R}^2}\norm{\nu}_{L^{2}(\RRd)}.
\end{equation}
Estimate~\eqref{eq:decayL2} gives a decay rate of the solution of the rHF equation for crystals with local defects, far from the support of the defect. In particular, it shows that $\rho_\nu\in L^1(\RRd)$. This decay is due to the short-range character of the Yukawa interaction. In the Coulomb case, it has been proved in~\cite{CL} that for anisotropic materials, $\rho_\nu\notin L^1(\RRd)$.
\end{remark}
% these: dans la these, theorem et preuve en appendix. 

The decay rate of $V_\nu$ and $\rho_\nu$ proved in Theorem~\ref{th:decay} is faster than the decay of any polynomial, but is not exponential, which we think should be the optimal rate.

Proposition~\ref{prop:loc} below is an important intermediary result in the proof of Theorem~\ref{th:limite_thermo}.  It says that the mean-field density $\rho_\nu$ and potential $ V_\nu$ on a compact set depend mainly on the nuclear distribution in a neighborhood of this compact set.

\begin{proposition}[The mean-field potential and density depend locally on $\nu$]\label{prop:loc}
There exists $\alpha_c>0 $ such that for any $\beta\geq 2$ there exists $C\geq 0$ such that for any $\nu\in L^2_\unif(\RRd)$ satisfying $\norm{\nu}_{L^2_\unif}\leq \alpha_c$ and any $L\geq 1$ we have 
$$
\norm{V_\nu -V_{\nu_L}}_{H^2_\unif(B(0, L/4^\beta))}+ \norm{\rho_\nu -\rho_{\nu_L}}_{L^2_\unif(B(0, L/4^\beta))}\leq \frac{C}{L^\beta}\norm{\nu}_{L^2_\unif},
$$
where $\nu_L=\nu\1_{{\Gamma} _L}$. 
\end{proposition}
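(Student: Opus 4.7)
The plan is to combine the fixed-point structure of Theorem~\ref{th:point_fixe} with the exponential decay of the Yukawa kernel and the Combes--Thomas decay of the resolvents in the spectral gap, iterated on a dyadically shrinking family of balls. Setting
\begin{equation*}
\mu := \nu - \nu_L, \quad \delta V := V_\nu - V_{\nu_L}, \quad \delta \rho := \rho_\nu - \rho_{\nu_L},
\end{equation*}
the rHF equations give $\delta V = Y_m \ast (\delta \rho - \mu)$, and a Dunford contour integral around a curve $\C$ enclosing the spectrum below $\epsilon_F$ (such a contour exists simultaneously for $H_\nu$ and $H_{\nu_L}$, as both inherit a gap from $H_\per$ during the proof of Theorem~\ref{th:point_fixe}) yields
\begin{equation*}
\gamma_\nu - \gamma_{\nu_L} = \frac{1}{2i\pi}\oint_\C (z - H_\nu)^{-1}\,\delta V\,(z - H_{\nu_L})^{-1}\,dz,
\end{equation*}
so that $\delta \rho$ depends linearly, to leading order, on $\delta V$. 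As an immediate starting point, the Lipschitz estimate built into the contraction of Theorem~\ref{th:point_fixe} applied to the pair $(\rho_\nu, \rho_{\nu_L})$ yields the global bound $\|\delta V\|_{H^2_\unif(\RRd)} + \|\delta \rho\|_{L^2_\unif(\RRd)} \leq C\|\mu\|_{L^2_\unif} \leq C\|\nu\|_{L^2_\unif}$, but with no spatial gain.

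To produce a factor $L^{-1}$ per iteration I would introduce smooth cutoffs $\chi_1, \ldots, \chi_\beta$ with $\chi_j \equiv 1$ on $B(0, L/4^{j+1})$, $\operatorname{supp}(\chi_j) \subset B(0, L/4^j)$, and $\|\nabla \chi_j\|_\infty \sim 4^{j+1}/L$. Because $\operatorname{supp}(\mu) \subset \RRd \setminus \Gamma_L$ lies at distance at least $L/4$ from every $B(0, L/4^j)$ once $\beta \geq 2$, the Yukawa tail $Y_m \ast \mu$ is exponentially small on each such ball. Splitting $\delta V = \chi_j \delta V + (1-\chi_j)\delta V$ and substituting into the Dunford formula, the contribution of $(1 - \chi_j)\delta V$ to $\delta \rho$ on $B(0, L/4^{j+1})$ is governed by the commutators $[\chi_j, (z - H_\nu)^{-1}]$ and $[\chi_j, (z - H_{\nu_L})^{-1}]$; Combes--Thomas decay of the resolvents, uniform in $z \in \C$, then produces a factor of order $\|\nabla \chi_j\|_\infty \sim L^{-1}$ against the previous step's bound on $\delta V$ outside $B(0, L/4^{j+1})$. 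Feeding this back into $\delta V = Y_m \ast (\delta \rho - \mu)$ and iterating inward $\beta$ times starting from the global bound yields the desired $C L^{-\beta}\|\nu\|_{L^2_\unif}$ estimate on $B(0, L/4^\beta)$ for both $\delta V$ and $\delta \rho$.

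The main obstacle is the nonlocal character of the map $\nu \mapsto \gamma_\nu$: the spectral projector $\1(H \leq \epsilon_F)$ is not a local operator, so the local vanishing of $\mu$ does not automatically transfer to $\delta \rho$. Converting the spectral gap into quantitative resolvent decay via Combes--Thomas and balancing it against the $4^{j+1}/L$ growth of $\|\nabla \chi_j\|_\infty$ is what yields the net $L^{-1}$ per step; carrying out this commutator bookkeeping in the uniform-local topology $L^2_\unif$, rather than in $L^2$, is the technical heart of the argument, and is precisely the analysis of the dielectric response of the perfect crystal that Section~\ref{sec:linear_response} prepares.
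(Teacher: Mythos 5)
The exact resolvent identity you write down,
\begin{equation*}
\gamma_\nu - \gamma_{\nu_L} = \frac{1}{2i\pi}\oint_\C (z - H_\nu)^{-1}\,\delta V\,(z - H_{\nu_L})^{-1}\,dz,
\end{equation*}
is correct, but feeding $\delta V = Y_m*(\delta\rho - \mu)$ back into it produces a self-consistent relation $\delta\rho = \mathcal{N}(\delta\rho) - \mathcal{N}(\mu)$, where $\mathcal{N}$ is the map sending $f$ to the density of $\tfrac{1}{2i\pi}\oint_\C (z-H_\nu)^{-1}\,Y_m*f\,(z-H_{\nu_L})^{-1}\,dz$. When you split $\delta V = \chi_j\delta V + (1-\chi_j)\delta V$ and analyze the contribution to $\delta\rho$ on $B(0,L/4^{j+1})$, the commutator argument does give an $L^{-1}$ gain on the tail piece $(1-\chi_j)\delta V$. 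However, the diagonal contribution $\chi_j\,\mathcal{N}(\chi_j\delta V)$ carries no such gain: it is of order $\|\chi_j\delta V\|$, not $L^{-1}\|\chi_j\delta V\|$, and your sketch never explains how this term is absorbed. Without absorbing it, the recursion $a_{j+1}\leq C L^{-1} a_j$ does not close; you only get $a_{j+1}\leq C a_j$, which is useless.

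This is precisely the point that Section~\ref{sec:linear_response} resolves and that your proposal glosses over. The paper does not iterate on the exact resolvent difference; instead it exploits the fixed-point form coming from $\G_\nu$ and $\G_{\nu_L}$, giving
\begin{equation*}
\rho_\nu - \rho_{\nu_L} \;=\; \frac{\L}{1+\L}(\nu-\nu_L) + \frac{1}{1+\L}\,\rho_{\widetilde{Q}_2(\rho_\nu-\nu)-\widetilde{Q}_2(\rho_{\nu_L}-\nu_L)},
\end{equation*}
so that the diagonal of the linear response is already inverted out into $(1+\L)^{-1}$. The technical inputs are then the boundedness of $(1+\L)^{-1}$ on $L^2_\unif$ (Theorem~\ref{th:props_de_L}(iii)), its off-diagonal decay (item~(iv)), and the commutator estimate \eqref{eq:estimee_com_L2unif} from Lemma~\ref{lemma:estimee_com_L2unif}, which together yield \eqref{eq:estimee_com_L2unif1/1+L}. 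The residual quadratic piece $\rho_{\widetilde{Q}_2(\cdot)}$ produces a factor $\norm{\nu}_{L^2_\unif}$ in front of $\norm{\chi_R(\rho_\nu-\rho_{\nu_L})}_{L^2_\unif}$, and it is only the \emph{smallness} of $\norm{\nu}_{L^2_\unif}$ (shrinking $\alpha_c$ so that $C_0(\alpha_c+\alpha_c^2)\leq 1/2$) that lets one move this term to the left-hand side; your proposal does not address this absorption either. Also note that the resolvent-difference approach you propose is exactly how Theorem~\ref{th:limite_thermo} is proved in the paper, and that proof \emph{uses} Proposition~\ref{prop:loc} as input — so trying to derive the proposition this way runs in a circle. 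In short, the missing idea is the inversion of the diagonal linear response via $(1+\L)^{-1}$, combined with its uniform-local localization; Combes--Thomas decay and commutators alone cannot supply it.
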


In the same way, we obtain the following result which will be very useful in the proof of Theorem~\ref{th:expansion}. We prove that the potential generated by two defects that are far enough is close to the sum of the potentials generated by each defect alone in the sense of

\begin{proposition}\label{prop:cor:loc}
There exists $\alpha_c>0 $ such that for any $\beta\geq 2$, there exists $C\geq 0$ such that for any $\nu_1,\nu_2\in L^2_c(\RRd)$ satisfying $\norm{\nu_1}_{L^2_\unif},\norm{\nu_2}_{L^2_\unif}\leq \alpha_c$ and $R={\rm d}( {\rm supp}(\nu_1),{\rm supp}(\nu_2))> 0$, we have
\begin{align*}
&\norm{V_{\nu_1+\nu_2} -V_{\nu_2}}_{H^2_\unif( C_{R/4^\beta}(\nu_2))}+ \norm{\rho_{\nu_1+\nu_2} -\rho_{\nu_2}}_{L^2_\unif( C_{R/4^\beta}(\nu_2))}
\\
&\qquad\qquad\qquad\qquad\qquad\qquad\qquad\qquad\qquad\qquad
\leq \frac{C}{R^\beta}\bra{\norm{\nu_1}_{L^2_\unif}+\norm{\nu_2}_{L^2_\unif}}.
\end{align*}
\end{proposition}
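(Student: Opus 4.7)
The plan is to deduce Proposition~\ref{prop:cor:loc} from Proposition~\ref{prop:loc} by a triangle-inequality argument that exploits the $\ZZd$-translation equivariance of the self-consistent equation~\eqref{eq:SCE}. First I would upgrade Proposition~\ref{prop:loc} to cutoff boxes centered at an arbitrary lattice point $k\in\ZZd$: since $\nu_\per$ and $V_\per$ are $\ZZd$-periodic, substituting the translated perturbation $\nu(\cdot-k)$ into the proposition and untranslating yields, for every $k\in\ZZd$,
\begin{equation*}
\norm{V_\nu-V_{\nu\1_{\Gamma_L+k}}}_{H^2_\unif(B(k,L/4^\beta))}+\norm{\rho_\nu-\rho_{\nu\1_{\Gamma_L+k}}}_{L^2_\unif(B(k,L/4^\beta))}\leq \frac{C}{L^\beta}\norm{\nu}_{L^2_\unif}.
\end{equation*}

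Next I would apply this shifted bound twice, to $\nu=\nu_1+\nu_2$ and to $\nu=\nu_2$, with the same cutoff box $\Gamma_L+k$ (after shrinking $\alpha_c$ if necessary so that $\norm{\nu_1+\nu_2}_{L^2_\unif}\leq 2\alpha_c$ still falls inside the regime of Theorem~\ref{th:point_fixe}). The crucial observation is that if $L$ is chosen small enough relative to $R$ to guarantee $(\Gamma_L+k)\cap\text{\rm supp}(\nu_1)=\emptyset$, then $(\nu_1+\nu_2)\1_{\Gamma_L+k}=\nu_2\1_{\Gamma_L+k}$ and the two truncated problems coincide. Combining the two applications via the triangle inequality then produces
\begin{equation*}
\norm{V_{\nu_1+\nu_2}-V_{\nu_2}}_{H^2_\unif(B(k,L/4^\beta))}+\norm{\rho_{\nu_1+\nu_2}-\rho_{\nu_2}}_{L^2_\unif(B(k,L/4^\beta))}\leq \frac{C'}{L^\beta}\bra{\norm{\nu_1}_{L^2_\unif}+\norm{\nu_2}_{L^2_\unif}}.
\end{equation*}

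To conclude, I would cover $C_{R/4^\beta}(\nu_2)$ by such balls: for each $x$ in that set, choose the nearest lattice point $k$ (so that $|x-k|\leq\sqrt{d}/2$) and set $L=c_d R$ for a dimensional constant $c_d$ small enough to ensure $(\Gamma_L+k)\cap\text{\rm supp}(\nu_1)=\emptyset$ and $x\in B(k,L/4^\beta)$ for all $R$ above a threshold depending on $d$ and $\beta$. Since the bound of the previous step is uniform in $k$ and the local norms in question only feel an arbitrarily small neighborhood of $x$, it passes to the uniform norm on $C_{R/4^\beta}(\nu_2)$. For $R$ below the threshold, the conclusion follows by enlarging $C$ and using Theorem~\ref{th:point_fixe} to bound the left-hand side uniformly by a constant times $\norm{\nu_1}_{L^2_\unif}+\norm{\nu_2}_{L^2_\unif}$.

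The main obstacle I anticipate is the first step: the shifted version of Proposition~\ref{prop:loc} is immediate if its proof relies only on the $\ZZd$-periodicity of the background (through Theorem~\ref{th:decay} and the dielectric response machinery developed in Section~\ref{sec:linear_response}), but would otherwise require a careful re-inspection to rule out spurious boundary contributions coming from the shift.
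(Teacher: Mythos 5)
Your proof is correct, and it takes a genuinely different route from the paper's. The paper treats Proposition~\ref{prop:cor:loc} as a corollary of the \emph{proof} of Proposition~\ref{prop:loc}, not of its statement: one re-runs that argument with the localizing cutoffs $\chi_{I,R'}$ of Lemma~\ref{lemma:estimee_com_L2unif} centered on ${\rm supp}(\nu_2)$ rather than at the origin (legitimate since the constants there are uniform in $I$), with $\nu=\nu_1+\nu_2$ and with $\nu_2$ taking over the role of the truncation $\nu_L$; the identity $\chi_{I,R'}\bra{(\nu_1+\nu_2)-\nu_2}=\chi_{I,R'}\nu_1=0$ once $2R'<R$ then replaces $\chi_{R'}(\nu-\nu_L)=0$, and $L=2R$ is chosen so the cutoffs stay in the regime used there. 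Your route instead treats Proposition~\ref{prop:loc} as a black box: after the translation upgrade, you compare both $V_{\nu_1+\nu_2}$ and $V_{\nu_2}$ to the common intermediate $V_{\nu_2\1_{\Gamma_L+k}}$, finishing with the triangle inequality and a covering argument. The translation upgrade you flag as a possible obstacle is sound: the self-consistent equation~\eqref{eq:SCE} is $\ZZd$-covariant because $\nu_\per$, $V_\per$ and the dielectric operator $\L$ commute with $\ZZd$-translations, uniqueness in Theorem~\ref{th:point_fixe} gives $\rho_{\nu(\cdot-k)}=\rho_\nu(\cdot-k)$, and Lemma~\ref{lemma:estimee_com_L2unif} is already stated uniformly in the center set $I$. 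Both proofs hinge on the same observation --- a truncated problem forgets $\nu_1$ as soon as the truncation box avoids its support --- but yours is more modular (it never reopens the fixed-point machinery) at the modest cost of the equivariance check, a shrinkage of $\alpha_c$ to absorb $\norm{\nu_1+\nu_2}_{L^2_\unif}$, and the separate small-$R$ case.
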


\begin{proof}
The proof is the same as the one of Proposition~\ref{prop:loc} with $\nu=\nu_1+\nu_2$ and $L=2R$.
\end{proof}

%%%%%%%%%%%%%%%%%%%%%%%%%%%%%%%%%%%%%%%%%%%%%%%%%%%%%%%%
\subsection{Asymptotic expansion of the density of states}\label{sec:res_expansion}

In this section, we use our previous results to study a particular case of random materials. 
In the so-called statistically homogeneous materials, the particles are randomly distributed over the space with a certain spatial invariance. More precisely, the nuclear distribution (thus the electronic density) is stationary in the sense
$$
\nu_{\rm nuc}(\tau_k(\o),x)=\nu_{\rm nuc}(\o,x+k),
$$ 
where $(\tau_k)_{k\in \ZZd}$ is an ergodic group action of $\ZZd$ on the probability set $\O$ (see Figure~\ref{fig:mu_stat}). 
\begin{figure}[h]
\centerline{
 \begin{tabular}{ccc}
 & & \\
 \setlength{\unitlength}{1200sp}%

\begingroup\makeatletter\ifx\SetFigFont\undefined%
\gdef\SetFigFont#1#2#3#4#5{%
  \reset@font\fontsize{#1}{#2pt}%
  \fontfamily{#3}\fontseries{#4}\fontshape{#5}%
  \selectfont}%
\fi\endgroup%

\begin{picture}(5000,5000)(0,-4000)
\put(825,-825){\line( 1,0){1700}}
\put(2525,-825){\line( 0,-1){1700}}
\put(2525,-2525){\line( -1,0){1700}}
\put(825,-2525){\line( 0,1){1700}}
\put(1000,-1500){{$\Gamma$}}
{\color[rgb]{0,0,1}\thinlines\put(  0,0){\circle*{350}}}%
{\color[rgb]{0,0,1} \put(1700,0){\circle*{350}}}%
{\color[rgb]{0,0,1}\put(3400,0){\circle*{350}} }%
{\color[rgb]{0,0,1}\put(5100,0){\circle*{350}} }%
{\color[rgb]{0,0,1}\put(  0,-1700){\circle*{350}} }%
{\color[rgb]{0,0,1}\put(1700,-1700){\circle*{350}} }%
{\color[rgb]{0,0,1}\put(3400,-1700){\circle*{350}} }%
{\color[rgb]{0,0,1}\put(5100,-1700){\circle*{350}}  }%
{\color[rgb]{0,0,1}\put(0,-3400){\circle*{350}}}%
{\color[rgb]{0,0,1}\put(1700,-3400){\circle*{350}}}%
{\color[rgb]{0,0,1}\put(3400,-3400){\circle*{350}}}%
{\color[rgb]{0,0,1}\put(5100,-3400){\circle*{350}}}%

\end{picture}%
 & \quad \quad \quad \quad \quad \quad \quad \quad &
 \setlength{\unitlength}{1200sp}%

\begingroup\makeatletter\ifx\SetFigFont\undefined%
\gdef\SetFigFont#1#2#3#4#5{%
  \reset@font\fontsize{#1}{#2pt}%
  \fontfamily{#3}\fontseries{#4}\fontshape{#5}%
  \selectfont}%
\fi\endgroup%

\begin{picture}(5000,5000)(0,-4000)
{\color[rgb]{0,0,0}\thinlines\put(  0,0){\circle{350}}}%
{\color[rgb]{0,0,0} \put(1700,0){\circle{350}}}%
{\color[rgb]{0,0,0}\put(3400,0){\circle{350}} }%
{\color[rgb]{0,0,0}\put(5100,0){\circle{350}} }%
{\color[rgb]{0,0,0}\put(  0,-1700){\circle{350}} }%
{\color[rgb]{0,0,0}\put(1700,-1700){\circle{350}} }%
{\color[rgb]{0,0,0}\put(3400,-1700){\circle{350}} }%
{\color[rgb]{0,0,0}\put(5100,-1700){\circle{350}}  }%
{\color[rgb]{0,0,0}\put(0,-3400){\circle{350}}}%
{\color[rgb]{0,0,0}\put(1700,-3400){\circle{350}}}%
{\color[rgb]{0,0,0}\put(3400,-3400){\circle{350}}}%
{\color[rgb]{0,0,0}\put(5100,-3400){\circle{350}}}%

\thinlines
\qbezier(0,0)(250,0)(250, -250)
\qbezier(250, -250)(250,-450)(450, -450)
\qbezier(1700,0)(1850,0)(1850,250)
\qbezier(1850,250)(1850,500)(2000,500)
\qbezier(3400,0)(3400,-300)(3200, -300)
\qbezier(3200,-300)(3000,-300)(3000,-600)
\qbezier(5100,0)(5100,-100)(4850, -100)
\qbezier(4850, -100)(4600,-100)(4600,-200)
\qbezier(  0,-1700)(0,-1900)(150,-1900)
\qbezier(150,-1900)(300,-1900)(300,-2100)
\qbezier(1700,-1700)(1700,-1850)(1600, -1850)
\qbezier(1600, -1850)(1500,-1850)(1500,-2000)
\qbezier(3400,-1700)(3400,-2100)(3550, -2100)
\qbezier(3550, -2100)(3700,-2100)(3700,-2500)
\qbezier(5100,-1700)(5100,-1400)(4950, -1400)
\qbezier(4950, -1400)(4800,-1400)(4800,-1100)
\qbezier(  0,-3400)(0,-3200)(-150, -3200)
\qbezier(-150, -3200)(-300,-3200)(-300,-3000)
\qbezier(1700,-3400)(1700,-3000)(1800, -3000)
\qbezier(1800, -3000)(1900,-3000)(1900,-2600)
\qbezier(3400,-3400)(3400,-3600)(3600, -3600)
\qbezier(3600, -3600)(3800,-3600)(3800,-3800)
\qbezier(5100,-3400)(5100,-3250)(5250, -3250)
\qbezier(5250, -3250)(5400,-3250)(5400,-3100)

{\color[rgb]{0,0,1}\put( 450,-450){\circle*{350}}}%
{\color[rgb]{1,0,0} \put(2000,500){\circle*{500}}}%
{\color[rgb]{0,0,1}\put(3000,-600){\circle*{350}} }%
{\color[rgb]{0,0,1}\put(4600,-200){\circle*{350}} }%
{\color[rgb]{0,0,1}\put(  300,-2100){\circle*{350}} }%
{\color[rgb]{0,0,1}\put(1500,-2000){\circle*{350}} }%
{\color[rgb]{0.5,0,0.5}\put(3700,-2500){\circle*{400}} }%
{\color[rgb]{0,0,1}\put(4800,-1100){\circle*{350}}  }%
{\color[rgb]{0,1,0}\put(-300,-3000){\circle*{250}}}%
{\color[rgb]{0,0,1}\put(1900,-2600){\circle*{350}}}%
{\color[rgb]{0,0,1}\put(3800,-3800){\circle*{350}}}%
{\color[rgb]{0,0,1}\put(5400,-3100){\circle*{350}}}%

\end{picture}%
 \\
Perfect crystal& & Statistically homogeneous material\\
 & & \\
 \end{tabular}
 }
\caption{Example of a stationary nuclear distribution}
\label{fig:mu_stat}
\end{figure} 
One famous example of such distributions is the Anderson model
$$
\nu_{\rm nuc}(\omega,x) = \sum_{k \in \ZZd} q_k(\omega)\, \chi(x-k),
$$
where, typically, $\chi \in C^\infty_c(\RR^3)$ and the $q_k$'s are
{i.i.d. random} variables. The reduced Hartree-Fock model for statistically homogeneous materials was introduced in~\cite{CaLaLe-12}. The state of the electrons is described by a random self-adjoint operator $\bra{\gamma(\o)}_{\o\in\O}$ acting on $L^2(\RRd)$ such that $0\leq \gamma(\o)\leq 1$ almost surely. The rHF equation is then 
\begin{align}\label{eq:SCF_sto}
\left\{
 \begin{array}{l}
\gamma(\o)=\1\bra{H(\o)\leq \epsilon_F}+\delta(\o)\\ [0,2cm]  
\displaystyle H( \o)=-\frac12 \Delta + V(\o,\cdot)\\ [0,2cm]  
\displaystyle -\Delta V(\o,\cdot) +m^2 V(\o,\cdot)=\av{S^{d-1}}\bra{ \rho_{\gamma(\o)}-\nu(\o,\cdot)}
 \end{array}
\right. \quad \text{almost surely,}
\end{align}
where $0\leq \delta (\o)\leq \1_{\set{\epsilon_F}}(H(\o))$ almost surely. The solutions of~\eqref{eq:SCF_sto} turn out to be the minimizers of the energy functional 
\begin{align*}
\underline{ \E}_{\nu_{\rm nuc}}(\gamma)=\tv{\bra{-\frac12\Delta-\epsilon_F} \gamma}+\underline{D}_m(\rho_\gamma-\nu_{\rm nuc},\rho_\gamma-\nu_{\rm nuc}),
\end{align*}
where $\tv{A}=\EE\bra{\tr\bra{\1_{{\Gamma} }  A\1_{{\Gamma} } }}$ and 
$$
\underline{D}_m(f,g)=\EE\bra{\int_\RRd\int_{{\Gamma} }  f(x)Y_m(x-y)g(y)\,dx\,dy}.
$$
Here, ${\Gamma}=\left[-1/2,1/2\right)^d $ denotes the semi-open unit cube. Thanks to the convexity of $\underline{ \E}_{\nu_{\rm nuc}}$, it has been proved in~\cite{CaLaLe-12} that the minimizers of $\underline{ \E}_{\nu_{\rm nuc}}$ share the same density. Therefore, the Hamiltonian $H$ solution of~\eqref{eq:SCF_sto} is uniquely defined.

In this paper, we are interested in the particular case of random perturbation of perfect crystals 
$$
\nu_{\rm nuc}(\omega,x)=\nu_\per(x)+\nu_p(\o,x)
$$
in the limit of low concentration of defects. We restrict our study to Anderson-Bernoulli type perturbations, that is, we suppose that at each site of $\ZZd$, there is a probability $p$ to see a local defect $\chi$, independently of what is happening in the other sites. More precisely, we consider the probability space $\Omega=\{0,1\}^{\mathbb{Z}^d}$ endowed with the measure  $\mathbb P =\bra{p\delta_{1}+ (1-p)\delta_{0} }^{\otimes\ZZd}$ and the ergodic group action  $\tau_k(\omega)=\omega_{\cdot+k}$. The defect distribution we consider is then given by
$$
\nu_p(\o,x)=\sum_{k\in\ZZd}q_k(\o)\chi(x-k)
$$
where $q_k$ is the $k^{\rm th}$ coordinates of $\o$ and $\chi\in L^2(\RRd )$ with ${\rm supp}(\chi)\subset\Gamma$. The $q_k$'s are i.i.d. Bernoulli variables of parameter $p$. If $\norm{\chi}_{L^2} \leq \alpha_c$, then $\delta(\o)=0$ almost surely and~\eqref{eq:SCF_sto} admits a unique solution. For almost every $\o$, this solution coincides with the solution of~\eqref{eq:SCE} constructed in Theorem~\ref{th:point_fixe}. For convenience, we will from now on use the notation 
$$
H_0=H_\per-\epsilon_F,
$$
where we recall that $\epsilon_F$ is the Fermi level. We introduce the mean-field Hamiltonian corresponding to the system with the defect $\nu_p$ 
$$
H_p=H_0+V_{\nu_p}\quad\mbox{with}\quad V_{\nu_p}(\o,x)=Y_m*\bra{\rho_{\nu_p}-\nu_p}.
$$
As $V_p$ is stationary with respect to the ergodic group $(\tau_k)_{k\in\ZZd}$ and uniformly bounded in $\O\times \RRd$, then by~\cite[Theorem 5.20]{pastur}, there exists a deterministic positive measure $ n_p(dx)$, the density of states  of $H_p$, such that for any $\phi$ in the Schwartz space $\mathcal S(\RR)$
$$
\int_\RR\phi(x) n_p(dx)=\tv{\phi(H_p)}.
$$
For $K \subset \ZZd$, we define the self-consistent operator corresponding to the system with the defects in $K$
$$H_K=H_0+ V_K,$$ 
where
$$
V_K=Y_m*(\rho_K-\nu_K),\quad\text{}\quad \nu_K=\sum_{k\in K}\chi(\cdot-k)\quad \text{and} \quad\rho_K=\rho_{\nu_K}. 
$$
If $\av{K}<\ii$, we denote by $\xi_K(x)$ the spectral shift function~\cite{Yafaev} for the pair of operators $H_K$ and $H_0$. It is the tempered distribution in $\mathcal S'(\RR)$ satisfying, for any $\phi\in\mathcal S(\RR)$,
$$
\tr\bra{\phi(H_K)-\phi(H_0)}=\int_\RR \xi_K(x)\phi'(x)\,dx =-\int_\RR \xi'_K(x)\phi(x)\,dx. 
$$

In Theorem~\eqref{th:expansion} below, we give the asymptotic expansion of the density of states $ n_p$ in terms of powers of the Bernoulli parameter $p$. 
\begin{theorem}\label{th:expansion} 
For $\chi\in L^2(\RRd)$ such that ${\rm supp}(\chi)\subset \Gamma$ and $K\subset \ZZd$ such that $\av{K}<\ii$, we define the tempered distribution $\mu_K$ by
$$
\mu_K(x)=-\frac{1}{\av{K}}\sum_{K'\subset K}(-1)^{\av{K\setminus K'}}\xi_{K'}'(x).
$$
There exists $\alpha_c>0$ such that if $\norm{\chi}_{L^2}\leq \alpha_c$, then
\renewcommand{\labelenumi}{(\roman{enumi})}
\begin{enumerate}
 \item for $j\in \set{1,2}$, $\mu_j=\sum_{K\subset\ZZd,\atop \av{K}=j,\, 0\in K} \mu_K$ is a well-defined convergent series in $\mathcal S'(\RR)$.
\item for $J\leq 2$, there exists $C_J\geq 0$, independent of $\chi$ such that for any $\phi\in \mathcal S(\RR)$,
$$
\av{\langle  n_p, \phi \rangle- \langle  n_0, \phi \rangle -\sum_{j=1}^J \langle \mu_j, \phi \rangle}\leq C_J\norm{\chi}_{L^2} \sum_{\alpha\leq (J+3)(d+1) \atop \beta \leq J+4 +(J+2)d}\mathcal N_{\alpha,\beta}\bra{\phi}p^{J+1},
$$
where $ n_0$ is the density of states  of the unperturbed Hamiltonian $H_0$ and $\mathcal N_{\alpha,\beta}(\phi)=\sup_{x\in \RR}\av{x^\alpha\frac{\partial ^\beta \phi}{\partial x^\beta}}$.
\end{enumerate} 
\end{theorem}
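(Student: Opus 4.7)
My plan is to follow Klopp's analysis of the linear Anderson--Bernoulli model~\cite{Klopp-95}, with the essential new ingredient being a cluster decomposition of the nonlinear mean-field potential based on the decay estimates of Section~\ref{sec:res_decay}. By stationarity of $H_p$ under $(\tau_k)_{k\in\ZZd}$, I rewrite
\[
\langle n_p-n_0,\phi\rangle=\EE\bra{\tr\bra{\1_{\Gamma}\bra{\phi(H_p)-\phi(H_0)}\1_{\Gamma}}},
\]
and represent $\phi(H_p)-\phi(H_0)$ through the Helffer--Sj\"ostrand formula with an almost-analytic extension of $\phi$. This reduces the problem to trace-norm estimates on differences of resolvents and makes the dependence on the Schwartz seminorms $\mathcal N_{\alpha,\beta}(\phi)$ explicit.

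I then fix a separation scale $R=R(p)=p^{-\theta}$ with $\theta>0$ to be tuned later. For each realisation $\omega$, I decompose the defect set $\set{k\in\ZZd : q_k(\omega)=1}$ into its maximal $R$-connected components $\set{K_i(\omega)}$: two defect sites lie in the same component iff a chain of defect sites pairwise within distance $R$ joins them, and distinct components are at mutual distance greater than $R$. Iterating Proposition~\ref{prop:cor:loc} on the pairs (cluster $K_i$, union of the other defects), and using Theorem~\ref{th:decay} to localize each $V_{K_i}$ within an $R$-neighbourhood of $K_i$, I obtain
\[
V_{\nu_p}(\omega,\cdot)=\sum_i V_{K_i(\omega)}+r_R(\omega,\cdot),\qquad \norm{r_R(\omega,\cdot)}_{H^2_\unif(\RRd)}\leq C_\beta R^{-\beta}\norm{\chi}_{L^2(\RRd)}
\]
uniformly in $\omega$, for any $\beta\geq 2$. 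Resolvent identities together with the Helffer--Sj\"ostrand representation propagate this decomposition to the trace level, so that $\tr(\1_{\Gamma}(\phi(H_p)-\phi(H_0))\1_{\Gamma})$ equals $\sum_i \tr(\1_{\Gamma}(\phi(H_{K_i})-\phi(H_0))\1_{\Gamma})$ plus a remainder $\mathrm{Err}(\omega,p)$ that is superpolynomially small in $R$ after taking expectations.

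By translation stationarity, the resulting expectation is a sum, over finite $K\subset\ZZd$ with $0\in K$, of the per-site contribution $-|K|^{-1}\int\xi'_K\phi$, weighted by the probability $p^{|K|}(1-p)^{|S_R(K)\setminus K|}$ that $K$ appears as an isolated $R$-cluster of $\omega$, where $S_R(K)$ is the $R$-neighbourhood of $K$ in $\ZZd$. Expanding $(1-p)^{|S_R(K)\setminus K|}=\sum_{K''\subset S_R(K)\setminus K}(-p)^{|K''|}$ and re-indexing by $K'=K\cup K''$ reconstructs precisely the M\"obius sum defining $\mu_K$ after interchanging the summations; grouping by $|K|=j$ and $0\in K$ yields $\mu_j$. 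The convergence of $\mu_1$ and $\mu_2$ as tempered distributions in part~(i) then follows from the spectral-shift estimate $\av{\langle\xi'_K,\phi\rangle}\leq C\norm{H_K-H_0}_{\S_1}\sum_{\alpha,\beta\leq d+3}\mathcal N_{\alpha,\beta}(\phi)$ combined with the superpolynomial decay of $H_K-H_0$ in the diameter of $K$ inherited from Theorem~\ref{th:decay}, which beats the combinatorial growth.

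The $O(p^{J+1})$ remainder in~(ii) collects three contributions: (a) configurations containing an $R$-cluster of size $\geq J+1$, of probability $\leq Cp^{J+1}R^{d(J+1)}$; (b) the nonlinear error $\mathrm{Err}(\omega,p)$; and (c) the tail of the expansion of $(1-p)^{|S_R(K)\setminus K|}$ beyond order $p^J$. The main obstacle is (b): I must control $\EE[|\mathrm{Err}(\omega,p)|]$ in trace-class norm, with decay in $R$ fast enough to beat the combinatorial factor $p^{-\theta d(J+1)}$ coming from configurations with up to $J+1$ defects at mutual distance $O(R)$. This is exactly why the superpolynomial rate $e^{-C'(\log R)^2}$ of Theorem~\ref{th:decay}---rather than merely polynomial decay---is crucial: for arbitrary prescribed $\beta$, choosing $\theta$ small enough gives $R^{-\beta}p^{-\theta d(J+1)}\ll p^{J+1}$, and the explicit polynomial dependence on $\phi$ in the seminorms $\mathcal N_{\alpha,\beta}$ is inherited from the Helffer--Sj\"ostrand representation of Step~1.
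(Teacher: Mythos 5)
Your strategy differs from the paper's in a significant way. The paper carries out the Bernoulli expectation exactly over all configurations in a box $\Gamma_L$, expands $(1-p)^{L^d-n}$ algebraically to regroup by powers of $p$, proves $L$-uniform bounds on the coefficients $a_{j,L}$ and the remainder $R_{J,L}$ (Proposition~\ref{prop:definition_termes_expansion+reste}, the technical core, itself resting on Lemma~\ref{lemma:bounds}), and only then passes to the thermodynamic limit $L\to\infty$. You instead introduce a $p$-dependent cluster scale $R=p^{-\theta}$ from the outset. Your formal reconstruction of $\mu_j$ via the isolated-cluster weight $p^{|K|}(1-p)^{|S_R(K)\setminus K|}$ and the subsequent re-indexing is correct (it reproduces the paper's $\mu_1,\mu_2$ after the $R^{d}$-size terms cancel), and the additive cluster decomposition of $V_{\nu_p}$ through Propositions~\ref{prop:loc} and~\ref{prop:cor:loc} is a reasonable way to handle the nonlinearity.

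There is, however, a genuine gap in your account of the $O(p^{J+1})$ remainder. Substituting $R=p^{-\theta}$ into your bound for (a) gives $p^{J+1}R^{d(J+1)}=p^{(J+1)(1-\theta d)}$, which for any $\theta>0$ is strictly \emph{larger} than $p^{J+1}$; it is not an $O(p^{J+1})$ estimate. The same occurs in (c): the binomial-tail coefficient for a cluster of size $n\leq J$ is of order $R^{d(J+1-n)}$, and summing over the $\sim R^{d(n-1)}$ choices of such clusters near the origin again yields $O(p^{J+1}R^{dJ})$. Taking $\theta\to0$ to repair (a),(c) destroys (b), whose $R^{-\beta}$-decay requires $R\to\infty$. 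The pieces are individually of size $\sim p^{(J+1)-\theta dJ}$ and become $O(p^{J+1})$ only because they cancel: $\tr\bra{\phi(H_K)-\phi(H_0)}$ is asymptotically additive over far-apart defects, and this additivity is exactly the M\"obius cancellation that makes $\mu_{J+1}=\sum_{\av{K}=J+1,\,0\in K}\mu_K$ converge. Exhibiting that cancellation uniformly in the cluster size, the box, and $z\in\CC\setminus\RR$ is the content of Proposition~\ref{prop:definition_termes_expansion+reste} and it cannot be replaced by a scale-tuning between $\theta$ and $\beta$: no choice of $\theta>0$ makes (a), (b), (c) simultaneously $O(p^{J+1})$ under the crude non-cancelled bounds you state. (The sentence "choosing $\theta$ small enough gives $R^{-\beta}p^{-\theta d(J+1)}\ll p^{J+1}$" is also backwards: with $R=p^{-\theta}$ the left side equals $p^{\theta(\beta-d(J+1))}$, which is small only for $\theta$ \emph{large} at fixed $\beta$, or $\beta$ large at fixed $\theta$.)
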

In Theorem~\ref{th:expansion}, we only present the expansion of the density of states until the second order $J=2$. The proof of the expansion up to any order $J\in \NN$ should follow the same lines and techniques used here. 

A result similar to Theorem~\ref{th:expansion} was obtained in~\cite{Klopp-95} in the linear case. Materials with low concentration of defects were studied by Le Bris and Anantharaman~\cite{AnLe-2}.
% these ~\cite{AnLe-1,AnLe-2,AnLe-3}
% these and Mourrat~\cite{Mourrat}
in the framework of stochastic homogenization. 

The proof of Theorem~\ref{th:expansion} follows essentially the proof of~\cite[Theorem 1.1]{Klopp-95}. It uses the decay of the potential related to each local defect. In~\cite[Theorem 1.1]{Klopp-95}, the linear potential is assumed to decay exponentially. In our nonlinear model, the decay estimates established in Section~\ref{sec:res_decay} play a crucial role in the proof. 

\medskip

The rest of the paper is devoted to the proofs of the results presented in this section. In the next section, we study the dielectric response of the crystal to an effective charge perturbation. The results of Section~\ref{sec:linear_response} will be used in later sections.

%%%%%%%%%%%%%%%%%%%%%%%%%%%%%%%%%%%%%%%%%%%%%%%%%%%%%%%%%%%%%%%%%%%%%%%%%%%%%%%%%%%%%%%%%%%%%%%%%%%%%%%%%%%%%%%%%%

\section{Dielectric response for Yukawa interaction}\label{sec:linear_response}

In this section, we study the dielectric response of the electronic ground state of a crystal to a small effective charge perturbation $f\in L^2_\unif(\RRd)$. This means more precisely that we expand the formula
$$
Q_f=\1\bra{H_0+f*Y_m\leq 0}-\1\bra{H_0\leq 0}
$$
in powers of $f$ (for $f$ small enough) and state important properties of the first order term. The higher order term will be dealt with later in Lemma~\ref{lemma:LetL2continus}. For Coulomb interactions and local perturbation $f\in L^2(\RRd)\cap \C_0(\RRd)$, where $\C_0(\RRd)$ is the Coulomb space, this study has been carried out in~\cite{CL} in dimension $d=3$. 

The results of this section can be used in the linear model or the mean-field framework. 
In the reduced Hartree-Fock model we consider in this paper, the effective charge perturbation is $f=\rho_{\nu}-\nu$, where $\rho_\nu$ is the electronic density of the response of the crystal to the nuclear perturbation $\nu$ defined in Theorem~\ref{th:point_fixe}. Expanding (formally) $Q_f$ in powers of $f$ and using the resolvent formula leads to considering the following operator 
$$
Q_{1,f}=\frac{1}{2i\pi}\oint_\C\frac{1}{z-H_0}f*Y_m \frac{1}{z-H_0}dz,
$$
where $\C$ is a smooth curve in the complex plane enclosing the whole spectrum of $H_0$ below $0$ (see Figure~\ref{fig:gap}). 
\begin{figure}[!h]
\psfrag{sigma}{$\sigma(H_0)$} \psfrag{0}{$0$} \psfrag{C}{$\C$}
\includegraphics[scale=0.5]{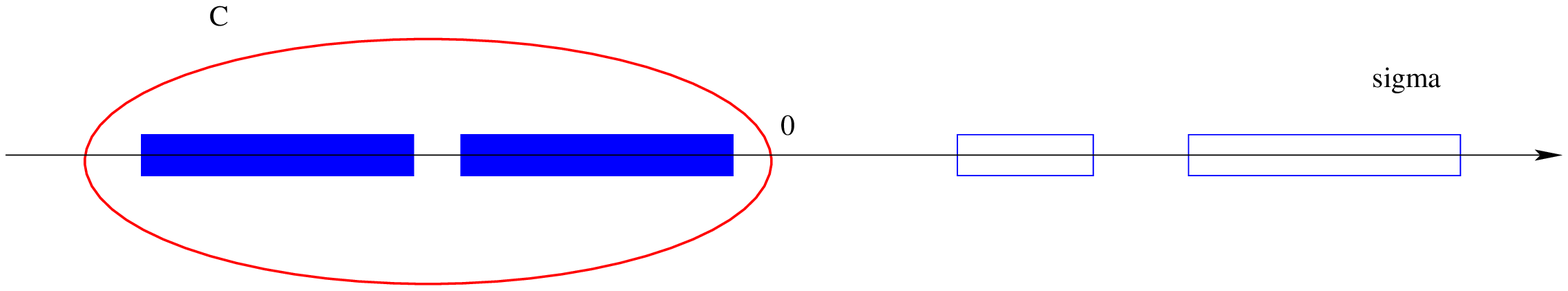}
\caption{Graphical representation of a contour $\C\subset \CC$ enclosing $\sigma(H_0)\cap (-\ii, 0]$.}
\label{fig:gap}
\end{figure}
By the residue Theorem, the operator $Q_{1,f}$ does not depend on the particular curve $\C$ chosen as above. We recall that $V_\per$ is $-\Delta$ bounded with relative bound $0$. Thus $H_0$ is bounded below by the Rellich-Kato theorem~\cite[Theorem X.12]{ReeSim2}. Theorem~\ref{th:props_de_L} below studies the properties of the dielectric response operator $\L:f\rightarrow \rho_{Q_{1,f}}$ and the operator $\bra{1+\L}^{-1}$, which will play an important role in the resolution of the self-consistent equation~\eqref{eq:SCE}. In particular, it gives the functional spaces on which $\L$ and $\bra{
1+\L}^{-1}$ are well-defined for both local and extended charge densities. It also says that $\bra{
1+\L}^{-1}$ is local in the sense that its off-diagonal components decay faster than any polynomial.
We consider $H^{-1}(\RRd)$, endowed with the scalar product
$$
\langle f, g\rangle_{H^{-1}} = \frac{1}{(2\pi)^d} \int_\RRd \frac{\overline{\widehat{f}(p)}\widehat{g}(p)}{\av{p}^2+m^2}dp.
$$

\begin{theorem}[Properties of the dielectric response]\label{th:props_de_L}
We have
\renewcommand{\labelenumi}{(\roman{enumi})}
\begin{enumerate}
 \item 
The operator
$$
\begin{array}{lrll}
 \L: & H^{-1}(\RRd)&\rightarrow & H^{-1}(\RRd) \\
 & f&\mapsto &-\rho_{Q_{1,f}},
\end{array}
$$ 
is well-defined, bounded, non-negative and self-adjoint. Hence $1+\L$ is invertible and bicontinuous. 

\item The operator $\L$ is bounded from $H^{-1}(\RRd)$ to $L^2(\RRd)$ and $1/(1+\L)$ is a well-defined, bounded operator from $L^2(\RRd)$ into itself. 
\item The operator
$$
\begin{array}{lrll}
 \L: & L^2_\unif(\RRd)&\rightarrow & L^2_\unif(\RRd)\\
 & f&\mapsto &-\rho_{Q_{1,f}},
\end{array}
$$ 
is well-defined and bounded. The operator $1+\L$ is invertible on $L^2_\unif(\RRd)$ and its inverse is bounded. 

\item There exist $C\geq 0$ and $C'>0$ such that for any $j,k\in \ZZd$ such that $\av{k-j}\geq 1$, we have
\begin{equation}\label{eq:1/1+L_localise_unif} 
\norm{\1_{\Gamma+j}\frac{1}{1+\L}\1_{\Gamma+k} }_{\B}\leq Ce^{-C'\bra{\log \av{k-j}}^2}.
\end{equation}
\end{enumerate}
\end{theorem}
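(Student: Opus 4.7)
The strategy is to exploit the Bloch--Floquet decomposition of the periodic Hamiltonian $H_0$ together with the spectral gap assumption~\eqref{eq:Assumption_H_0_has_a_gap}. Write $H_0=\int^\oplus_{Q^*} H_0(\xi)\,d\xi$ on the Brillouin zone $Q^*$, and fix $g>0$ such that $\sigma(H_0)$ stays at distance at least $2g$ from $0$. The contour $\C$ in the definition of $Q_{1,f}$ can then be chosen to remain at distance at least $g$ from $\sigma(H_0)$, uniformly in the quasi-momentum. A residue calculation against the spectral resolution $H_0=\int\lambda\,dE(\lambda)$ yields the explicit representation
\begin{equation*}
Q_{1,f}=-\int_{\lambda<0}\!\int_{\mu>0}\frac{dE(\lambda)(f*Y_m)dE(\mu)+dE(\mu)(f*Y_m)dE(\lambda)}{\mu-\lambda},
\end{equation*}
which is the starting point of the analysis.

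For \textbf{part (i)}, self-adjointness of $\L$ on $H^{-1}$ follows from the identity $\langle g,\L f\rangle_{H^{-1}}=-\tr\bigl((g*Y_m)Q_{1,f}\bigr)$ combined with the $f\leftrightarrow g$ symmetry visible in the explicit formula above. Non-negativity is obtained from the resulting identity $\langle f,\L f\rangle_{H^{-1}}=2\int\!\!\int_{\lambda<0<\mu}(\mu-\lambda)^{-1}\|dE(\mu)(f*Y_m)dE(\lambda)\|_{\S_2}^2\geq 0$. Boundedness $H^{-1}\to H^{-1}$ follows by expressing $\rho_{Q_{1,f}}$ in the Bloch basis: the gap bounds $\mu-\lambda$ from below, while the Yukawa convolution $f*Y_m$ is controlled in $H^1_\unif$ by $\|f\|_{H^{-1}}$, which combined with the local Hilbert--Schmidt properties of the spectral projectors gives the desired bound.

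For \textbf{parts (ii) and (iii)}, the key additional input is a regularity gain: in the Bloch picture, $\widehat{\rho_{Q_{1,f}}}$ is an average over $Q^*$ of bounded sesquilinear expressions in Bloch modes of $f*Y_m$, which via Plancherel gives $\|\L f\|_{L^2}\leq C\|f\|_{H^{-1}}$. The boundedness of $(1+\L)^{-1}$ on $L^2$ then follows from the identity $(1+\L)^{-1}f=f-\L(1+\L)^{-1}f$ together with the embedding $L^2\hookrightarrow H^{-1}$, on which $(1+\L)^{-1}$ already exists by (i). For $L^2_\unif$, the plan is to first establish a locality property of $\L$: a Combes--Thomas estimate on $(z-H_0)^{-1}$, together with the contour formula for $Q_{1,f}$, shows that the integral kernel of $\L$ decays faster than any polynomial off the diagonal. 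Schur's test then gives boundedness on $L^2_\unif$, and invertibility of $1+\L$ is deduced from the $H^{-1}$ bijectivity via a partition of unity argument exploiting this same kernel decay.

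\textbf{Part (iv)} is the main technical obstacle. The plan is to show that $1+\L$ belongs to an inverse-closed algebra of operators whose kernels satisfy the prescribed sub-exponential decay. A quantitative Combes--Thomas estimate, applied to each resolvent appearing in the contour formula for $Q_{1,f}$, gives estimates of the kernel of $\L$ of the form $\|\1_{\Gamma+j}\L\1_{\Gamma+k}\|_\B\leq C_N |k-j|^{-N}$ for every $N$, with constants $C_N$ tracked precisely enough that optimizing $N$ as a function of $|k-j|$ produces a decay of the form $e^{-C'(\log|k-j|)^2}$ rather than true exponential decay. Transferring this decay from $\L$ to $(1+\L)^{-1}$ is then carried out by a Sj\"ostrand--Jaffard--Gr\"ochenig type argument for block-matrices indexed by $\ZZd$, using that this class of off-diagonal decay is stable under composition and inversion in a Banach-algebra sense. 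Verifying the inverse-closedness with the right constants, and implementing the optimization over $N$ to reach precisely the rate $e^{-C'(\log R)^2}$, is where the bulk of the technical work lies.
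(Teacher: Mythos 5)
Your strategy for parts (iii)–(iv) is genuinely different from the paper's. The paper does not attempt to bound the kernel of $\L$ directly; instead it proves a commutator estimate (Lemma~\ref{lemma:estimee_com_L2unif}) for $[\eta_{I,R},\L]$ and feeds the result into a Gronwall-type iteration (Lemma~\ref{lemma:type_gronwall}). Setting $f=(1+\L)^{-1}\1_{\Gamma+k}g$, the paper observes that $(1+\L)\eta_R f=[\L,\eta_R]f$ because $\eta_R\1_{\Gamma+k}=0$, and since $(1+\L)^{-1}$ is already known to be bounded on $L^2$ (by (ii)) one obtains a recursion $x_R\leq\frac{C}{R}e^{-C'R}x_0+\frac{C}{R}x_{R/4}$, whose iteration produces the $e^{-C'(\log R)^2}$ rate. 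The surjectivity of $1+\L$ on $L^2_\unif$ then still requires real work (Step~5: truncate $g$, solve on $L^2$, pass to a weak-$\ast$ limit using the block decay from (iv) and a dominated-convergence argument). You propose instead to first show off-diagonal decay of the \emph{blocks of $\L$ itself} via Combes--Thomas and then appeal to inverse-closedness of a Jaffard--Sj\"ostrand--Gr\"ochenig-type matrix algebra; that is a coherent and conceptually attractive plan, and it would make (iii) a corollary of (iv) by spectral invariance on $\ell^\infty(\ZZd;L^2(\Gamma))\cong L^2_\unif$.

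Two things deserve to be flagged. First, your framing of the intermediate estimate is a detour: the uniform Combes--Thomas bound on the compact contour $\C$, combined with the exponential decay of $Y_m$, already gives $\|\1_{\Gamma+j}\L\1_{\Gamma+k}\|_\B\lesssim e^{-c|j-k|}$ directly, i.e.\ genuine exponential decay, not merely a family of polynomial bounds with $N$-dependent constants. If you track the constants $C_N$ carefully they grow like $(N/c)^N$, and optimizing $N$ reproduces $e^{-cR}$, not $e^{-C'(\log R)^2}$; aiming for the weaker rate is throwing information away. Second, and this is the genuine gap, you have not verified the inverse-closedness you invoke. For exponential weights the matrix algebra is \emph{not} inverse-closed in the Jaffard sense (the weight fails the GRS condition), although a weighted-conjugation/holomorphic-deformation argument ($A_s=e^{sX}(1+\L)e^{-sX}$ bounded and analytic for $|s|<c$, invertible at $s=0$, hence invertible for $|s|<\epsilon$ by openness) does preserve exponential decay with a degraded rate. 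If you use the subexponential weight $v(k)=e^{(\log(1+|k|))^2}$ instead, you would need to check that it is submultiplicative and GRS and that the operator-valued Jaffard theorem applies; neither is checked, and you acknowledge this is where the bulk of the work lies. Note finally that the paper explicitly remarks (after Theorem~\ref{th:decay}) that it believes the true rate is exponential, so carrying out the holomorphic deformation route carefully would actually improve on the stated Theorem~\ref{th:props_de_L}(iv).
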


\begin{proof}

The proof consists in the following 6 steps. In the whole paper $C\geq 0$ and $C'>0$ are constants whose value might change from one line to the other.
\paragraph*{Step 1}\textit{Proof of (i).}
The proof is similar to the one of~\cite[Proposition 2]{CL}, with the Yukawa kernel $Y_m$, instead of the Coulomb kernel. 
In the Yukawa case, $H^{-1}(\RRd)$ plays the role of the Coulomb space. The proof of~\cite[Proposition 2]{CL} can easily be adapted to our case. We skip the details for the sake of brevity. 

\paragraph*{Step 2}\textit{Proof of (ii).}
Let $f\in H^{-1}(\RRd)$. Then $Y_m*f\in L^2(\RRd)$ and
\begin{equation}\label{eq:Y_mfinL2}
\norm{Y_m*f}_{L^2}^2=\av{S^{d-1}}^2\int_\RRd\frac{\av{\widehat{f}(p)}^2}{\bra{\av{p}^2+m^2}^2}dp\leq C\int_\RRd\frac{\av{\widehat{f}(p)}^2}{\av{p}^2+m^2}dp= C \norm{f}_{H^{-1}}^2.
\end{equation} 
Therefore, by~\cite[Proposition 1]{CL}, $Q_{1,f}\in \mathcal K$, where $\K$ has been defined in~\eqref{eq:K}, and  $\L f=-\rho_{Q_{1,f}}\in L^2(\RRd)$. Arguing by duality, we have for any $W\in L^2(\RRd)$,
\begin{equation}\label{eq:def_rho}
\tr\bra{Q_{1,f}W}=\int_\RRd\rho_{Q_{1,f}}W.
\end{equation}
Besides,
by the Kato-Seiler-Simon inequality~\cite[Theorem 4.1]{trace_ideals} for $d\leq 3$
\begin{equation}\label{eq:KSS}
 \forall p\geq 2,\quad \norm{f(-i\nabla)g(x)}_{\S_2}\leq (2\pi)^{-\frac{d}{p}}\norm{f}_{L^p}\norm{g}_{L^p}
\end{equation}
 and the fact that 
\begin{equation}\label{eq:B(z)uniformlybounded}
\bra{z-H_0}^{-1}\bra{1-\Delta}\text{ is uniformly bounded on the contour } \C,     
\end{equation}
we have 
$$
\frac{1}{z-H_0}Y_m*f\frac{1}{z-H_0}W\in\S_2(L^2(\RRd))
$$
and 
\begin{align}\label{eq:borneQ1fW}
\av{\tr\bra{Q_{1,f}W}}=\av{ \frac{1}{2i\pi}\oint_\C \tr\bra{\frac{1}{z-H_0}Y_m*f\frac{1}{z-H_0}W}\,dz}&\leq C\norm{Y_m*f}_{L^2}\norm{W}_{L^2}.
\end{align}
The bound~\eqref{eq:B(z)uniformlybounded} follows from the following lemma.

\begin{lemma}\label{lemma:B(z)borné}
 Let $W\in L_\unif^2(\RRd)$. Then there exists $C\geq 0$, depending only on the $L^2_\unif$-norm of $W$, such that for any  $z\in \CC \setminus \sigma(-\Delta+W)$, we have
\begin{align*}
 \norm{(-\Delta +1)(-\Delta+W-z)^{-1} }_{\B}\leq C\frac{1+\av{z}}{{\rm d}(z,\sigma(-\Delta+W) )}.
\end{align*}
In particular, if $\Lambda$ is a compact set of $\CC \setminus \sigma(-\Delta+W)$, then $(-\Delta +1)(-\Delta+W-z)^{-1}$ is uniformly bounded on $\Lambda$.
\end{lemma}
%
% these pas la peine de mettre la preuve dans la version finale.
The proof of Lemma~\ref{lemma:B(z)borné} is elementary, it can be read in~\cite{these}.
\begin{comment}
\begin{proof}
The proof of Lemma~\ref{lemma:B(z)borné} follows the proof of~\cite[Lemma 3]{CDL}. For $c>0$, we have 
\begin{equation*}
(-\Delta+W-z+c)(-\Delta +c)^{-1}=1+(W-z)(-\Delta+c)^{-1}.
\end{equation*}
As $W$ is $-\Delta$-bounded with relative bound $0$~\cite[Theorem XIII.96]{Reed4}, then for any $a>0$ there exists $b>0$, depending only on $\norm{W}_{L^2_\unif}$ such that for any $\phi\in L^2(\RRd)$
\begin{align*}
\norm{ (W-z)(-\Delta+c)^{-1}\phi}_{L^2(\RRd)}&\leq a\norm{-\Delta (-\Delta+c)^{-1}\phi}_{L^2(\RRd)}+ (b+\av{z})\norm{ (-\Delta+c)^{-1}\phi}_{L^2(\RRd)}\\
&\leq \bra{a+\frac{b+\av{z}}{c}}\norm{\phi}_{L^2(\RRd)}.
\end{align*}
Choosing $a=\frac{1}{4}$ and $c= \max\set{1,4(b+\av{z})}$, we obtain 
$$
\norm{(-\Delta +c)(-\Delta+W-z+c)^{-1}}_{\B}\leq 2.
$$
Finally, as 
\begin{align*}
 \norm{\frac{-\Delta +1}{-\Delta+c}}_{\B}\leq \frac{1}{c}\leq 1 \quad\text{and} \quad \norm{\frac{-\Delta+W-z+c}{-\Delta+W-z}}_{\B}&\leq 1+\frac{c}{{\rm d}(z,\sigma(-\Delta+W) )}\\
&\leq C\frac{1+\av{z}}{{\rm d}(z,\sigma(-\Delta+W) )},
\end{align*}
then the operator
\begin{align*}
 \bra{-\Delta +1}\bra{-\Delta+W-z}^{-1}&=\frac{-\Delta +1}{-\Delta+c}\bra{-\Delta+c}\bra{-\Delta+W-z+c}^{-1}\frac{-\Delta+W-z+c}{-\Delta+W-z}
\end{align*}
satisfies
\begin{align*}
\norm{\bra{-\Delta +1}\bra{-\Delta+W-z}^{-1}}_{\B}&\leq C \frac{1+\av{z}}{{\rm d}(z,\sigma(-\Delta+W) )}.
\end{align*}
\end{proof}

\end{comment}
In view of~\eqref{eq:Y_mfinL2},~\eqref{eq:def_rho} and~\eqref{eq:borneQ1fW}, it follows that 
\begin{align*}
\av{ \int_\RRd (\L f)W} &\leq C\norm{f}_{H^{-1}} \norm{W}_{L^2}.
\end{align*}
We deduce that 
$$
\norm{\L f}_{L^2}\leq C \norm{f}_{H^{-1}}.
$$
We now prove that $(1+\L)^{-1}$ is bounded on $L^2(\RRd)$. Let $g\in L^2(\RRd)$ and $f\in H^{-1}(\RRd)$ such that $(1+\L)f=g$. Then, $f=g-\L f\in L^2(\RRd)$. 
As $1/(1+ \L)$ is bounded from $H^{-1}(\RRd)$ into itself, we have
$$
\norm{f}_{H^{-1}}\leq C\norm{g}_{H^{-1}}\leq C \norm{g}_{L^2}.
$$
Therefore, as $\L$ is continuous from $H^{-1}(\RRd)$ to $L^2(\RRd)$  ,
\begin{align*}
 \norm{f}_{L^2}&=\norm{g-\L f}_{L^2}\leq \norm{g}_{L^2}+\norm{\L f}_{L^2}\leq \norm{g}_{L^2}+ C\norm{f}_{H^{-1}}\leq C \norm{g}_{L^2},
\end{align*}
which concludes the proof of {\it (ii)}.

\paragraph*{Step 3}\textit{Proof of the first part of (iii): $\L$ is well-defined and bounded on $L^2_\unif(\RRd)$. }
 First, we consider a bounded operator $A\in \B(L^2(\RRd))$ and prove that $(z-H_0)^{-1}A(z-H_0)^{-1}$ is locally trace class. For $\chi\in L^\ii_c(\RRd)$ and $z\in \C$, we have by~\eqref{eq:B(z)uniformlybounded} and the Kato-Simon-Seiler inequality~\eqref{eq:KSS} that $\chi(z-H_0)^{-1}A(z-H_0)^{-1}\chi$ is trace class and that there exists $C\geq 0$ independent of $z\in\C$ such that
\begin{align*}
\av{\tr\bra{\chi\frac{1}{z-H_0}A\frac{1}{z-H_0}\chi}}&\leq \norm{\chi\frac{1}{z-H_0}A\frac{1}{z-H_0}\chi}_{ \S_1}\\
&\leq \norm{\chi\frac{1}{z-H_0}}_{\S_2}\norm{A}_{\B}\norm{\frac{1}{z-H_0}\chi}_{\S_2}\leq C \norm{A}_{\B}\norm{\chi}_{L^2}^2.
\end{align*}
It follows that the operator $(z-H_0)^{-1}A(z-H_0)^{-1}$ is locally trace class and that its density $\rho_{z}$ is in $L^1_{\rm loc}(\RRd)$. We now show that $\rho_{z}$ is in fact in $L^2_\unif(\RRd)$. 
Let $k\in \ZZd$ and $u$ be a non-negative function in $L^\ii(\Gamma +k)$. It holds, taking $\chi=\sqrt{u}$, that
\begin{align}\label{eq:rho_zfcontroleeparYf_B}
 \av{\int_\RRd\rho_z u}&=\av{\int_\RRd \rho_z \chi^2}= \av{\tr\bra{\chi\frac{1}{z-H_0}A\frac{1}{z-H_0}\chi}}\leq C \norm{A}_{\B}\norm{u}_{L^1}.
\end{align}
By linearity, we deduce that $\rho_{z}\in L^\ii(\RRd)$ and 
\begin{align*}
 \norm{\rho_z}_{L^2_\unif} \leq \norm{\rho_z}_{L^\ii}\leq C\norm{A}_{\B}.
\end{align*}
As all these estimates are uniform on the compact set $\C$, the operator \\
$\bra{2i\pi}^{-1}\oint_\C \bra{z-H_0}^{-1}A\bra{z-H_0}^{-1}dz$ is locally trace class and its density $\rho$ is in  $L^2_\unif(\RRd)$ and satisfies
\begin{align}\label{eq:rho_Q1fcontroleeparYf_B}
  \norm{\rho}_{L^2_\unif}&\leq C \norm{A}_{\B}.
\end{align}
We now consider the case when $A=Y_m*f$ is a potential generated by a charge density $f\in L^2_\unif(\RRd)$. 
The following Lemma gives the functional space $Y_m*f$ belongs to when $f\in L^2_\unif(\RRd)$.
\begin{lemma}\label{lemma:Young}
Let $f\in L^q_\unif(\RRd)$ and  $Y\in L^p_{\rm loc}(\RRd)$ such that 
\begin{align}\label{eq:hyp_young}
\;\sum_{k\in\ZZd}\norm{Y}_{L^p(\Gamma +k)}<\ii,
\end{align} 
for some $1\leq p, q\leq \ii$. Then, the function $Y*f $ is in $ L^r_\unif(\RRd)$ with $1+1/r=1/p+1/q$ and there exists $C\geq 0$ independent of $f$ such that 
\begin{equation*}
\norm{Y*f}_{L^r_\unif}\leq C \norm{f}_{L^q_\unif}.
\end{equation*} 
\end{lemma}
The proof of Lemma~\ref{lemma:Young} is exactly the same than the one of~\cite[Lemma 3.1]{CaLaLe-12}, we omit it here. As $Y_m$ satisfies~\eqref{eq:hyp_young} for $p=2$, we have 
\begin{equation}\label{eq:Young}
Y_m*f\in L^\ii(\RRd)\quad \text{and}\quad  \norm{Y_m*f}_{L^\ii}\leq C\norm{f}_{L^2_\unif}. 
\end{equation}
Therefore, by~\eqref{eq:rho_Q1fcontroleeparYf_B} 
\begin{equation*}
\norm{\rho_{Q_{1,f}}}_{L^2_\unif}\leq C \norm{Y_m*f}_{L^\ii}\leq C\norm{f}_{L^2_\unif},
\end{equation*}
which proves that  $\L$ is well-defined and bounded from $L^2_\unif(\RRd)$ into itself. This concludes Step 3.

\medskip

In the rest of the proof, we use a localization technique. We will thus need Lemmas~\ref{lemma:estimee_com_L2unif} and~\ref{lemma:type_gronwall} below. Lemma~\ref{lemma:estimee_com_L2unif} gives an estimate on the commutator between the dielectric response operator $\L$ and a localizing function in both $L^2(\RRd)$ and $L^2_\unif(\RRd)$. Lemma~\ref{lemma:type_gronwall} gives a decay rate of a real sequence satisfying a recursion relation that will be satisfied by the localized sequence. The proofs of Lemmas~\ref{lemma:estimee_com_L2unif} and~\ref{lemma:type_gronwall} are postponed until the end of the proof of the proposition.

\begin{lemma}\label{lemma:estimee_com_L2unif}
Let $\chi$ be a smooth function in $C^\ii_c(\RRd)$ such that  $0 \leq \chi\leq 1$, $\chi\equiv 1$ on $B(0,1)$ and $\chi\equiv 0$ outside $B(0,2)$. For any set  $I\subset \ZZd$ and $R\geq 1$ we denote by $B_{I,R}=\cup_{k\in I}\bra{B(0,R)+ k}$ and by $\chi_{I,R}(x)=\chi\bra{{{\rm d}(x, I)}/{R}}$. The family of functions $(\chi_{I,R})_{R\geq 1}$ satisfy $0 \leq \chi_{I,R}\leq 1$, $\chi_{I,R}\equiv 1$ on $B_{I,R}$, $\chi_{I,R}\equiv 0$ outside $B_{I,2R}$ and  
\begin{equation}\label{eq:condition_chi_R}
R\av{\nabla\chi_{I,R}(x)}+R^2\av{\Delta\chi_{I,R}(x)}\leq C\quad\text{a.e.},
\end{equation}
where $C$ is independent of the set $I$. We denote by $\eta_{I,R}=1-\chi_{I,R}$. 
Then, there exists $C\geq 0$ and $C'>0$ such that for any $I\subset \ZZd$ and any $f\in L^2(\RRd)$\footnote{In the whole paper, we use the convention $f*gh=hf*g=h(f*g)$.}
\begin{align}\label{eq:estimee_com_L2}
&\norm{\eta_{I,R}Y_m* f-Y_m*(\eta_{I,R} f)}_{H^2}+\norm{\com{\eta_{I,R},\L} f}_{L^2}\nonumber\\
&\qquad\qquad\qquad\qquad\leq \frac{C}{R}\bra{e^{-C'R}\norm{\1_{(\RRd\setminus B_{I,3R})\cap B_{I,R/2}}f}_{H^{-1}}+\norm{\1_{B_{I,3R}\setminus B_{I,R/2}}f}_{H^{-1}}}\nonumber\\
&\qquad\qquad\qquad\qquad\leq \frac{C}{R}\bra{e^{-C'R}\norm{f}_{L^2}+\norm{\1_{B_{I,3R}\setminus B_{I,R/2}}f}_{L^2}},
\end{align}
and for any $f\in L^2_\unif(\RRd)$
\begin{align}\label{eq:estimee_com_L2unif}
&\norm{\eta_{I,R}Y_m* f-Y_m*(\eta_{I,R} f) }_{H^2_\unif}+\norm{\com{\eta_{I,R},\L} f}_{L^2_\unif}\nonumber\\
&\qquad\qquad\qquad\qquad\leq \frac{C}{R}\bra{e^{-C'R}\norm{f}_{L^2_\unif}+\norm{ \1_{B_{I,3R}\setminus B_{I,R/2}} f}_{L^2_\unif}}.
\end{align}
\end{lemma}

\begin{lemma}\label{lemma:type_gronwall}
 Let $(x_R)_{R\geq 0}$  be a non-increasing family of real numbers such that for any $R>0$, 
\begin{equation}\label{eq:assy_type_gron}
x_R\leq \frac{C}{R}e^{-C'R}x_0+\frac{C}{R}x_{R/a}
\end{equation}
for given $C\geq 0$ and $C',a>0$. Then, there exists $C\geq 0$ and $C'>0 $ such that for any $R\geq 2$
\begin{equation}\label{eq:decroissance_logarithmique}
x_R\leq Ce^{-C'(\log R)^2}x_0.
\end{equation}
% Consequently, for any $\beta\geq 0$ there exists $C\geq 0$ such that for any $R\geq 1$
% \begin{equation}\label{eq:decroissance_polynomiale}
% x_R\leq \frac{C}{R^\beta}x_0.
% \end{equation}
\end{lemma}

We now proceed with the proof of Theorem~\ref{th:props_de_L}. We first prove {\it (iv)}, then we prove that $1+\L$ is invertible on $L^2_\unif(\RRd)$. 

\paragraph*{Step 4}\textit{Proof of {\it (iv)}.} 
We explain how to use Lemmas~\ref{lemma:estimee_com_L2unif} and~\ref{lemma:type_gronwall} to prove~\eqref{eq:1/1+L_localise_unif}. Let $k\in \ZZd$ and for $R\geq 1$, let $\eta_R=\eta_{\set{k},R}$ and $B_R=B_{\set{k},R}$ as defined in Lemma~\eqref{lemma:estimee_com_L2unif}.
Let $g\in L^2(\RRd)$ and denote by $f=\bra{1+\L}^{-1}\1_{\Gamma+k}g$. 
For $R\geq 1$, we have
$$
\eta_R\bra{f + \L f} =\eta_R \1_{\Gamma+k} g =0.
$$
Therefore
$$
\bra{1+\L}\eta_R f=\eta_Rf+ \L\eta_Rf = \L\eta_Rf -\eta_R \L f=\com{\L,\eta_R}f.
$$
Since $1/(1+\L)$ is bounded on $L^2(\RRd)$, it follows that
\begin{align}\label{eq:decroissance_etaRf}
\norm{\eta_Rf}_{L^2}= \norm{\frac{1}{1+\L}\com{\L,\eta_R} f}_{L^2}&\leq C\norm{\com{\L,\eta_R}  f}_{L^2}\nonumber\\
&\leq \frac{C}{R}e^{-C'R}\norm{f}_{L^2}+\frac{C}{R}\norm{\1_{B_{3R}\setminus B_{R/2}}f}_{L^2},
\end{align}
where we have used Lemma~\ref{lemma:estimee_com_L2unif} in the last step. Denoting by $x_R=\norm{\1_{\RRd\setminus B_{2R}}f}_{L^2}$, the estimate~\eqref{eq:decroissance_etaRf} leads to
$$
x_R\leq \frac{C}{R}e^{-C'R}x_0+ \frac{C}{R}x_{R/4}.
$$
Therefore, Lemma~\ref{lemma:type_gronwall} gives that there exists $C\geq 0$ and $C'>0$ such that for any $R\geq 2$
\begin{align*}
\norm{\eta_Rf }_{L^2}\leq x_{R/2}\leq Ce^{-C'\bra{\log R}^2}x_0 = C e^{-C'\bra{\log R}^2}\norm{f}_{L^2}&\leq C e^{-C'\bra{\log R}^2}\norm{g}_{L^2(\Gamma+k)},
\end{align*}
where the last inequality follows from the fact that $(1+\L)^{-1}$ is bounded on $L^2(\RRd)$. Finally, as $\1_{\Gamma+j}\leq \eta_{\av{k-j}/1-1/2}$, then
\begin{equation*}
\norm{\1_{\Gamma+j} \frac{1}{1+\L}\1_{\Gamma+k}g }_{L^2}\leq C e^{-C'\bra{\log \av{k-j}}^2}\norm{g}_{L^2(\Gamma+k)}\leq C e^{-C'\bra{\log \av{k-j}}^2}\norm{g}_{L^2}.
\end{equation*}

\paragraph*{Step 5}\textit{Proof that $1+ \L$ is surjective on $L^2_\unif(\RRd)$.}  Let $g\in L^2_\unif(\RRd)$ and consider $g_L=g\1_{{\Gamma} _L}$ for $L\in 2\NN+1$. As $1+\L$ is invertible on $L^2(\RRd)$, there exists $f_L\in L^2(\RRd)$ such that 
\begin{equation}\label{eq:equation_L}
(1+\L)f_L=g_L
\end{equation} 
and 
\begin{align*}
 \norm{f_L}_{L^2_\unif}&=\sup_{j\in \ZZd}\norm{\1_{\Gamma +j}\frac{1}{1+\L}\sum_{k\in \ZZd\cap {\Gamma} _L}\1_{\Gamma +k}g}_{L^2}\leq\sup_{j\in\ZZd } \sum_{k\in\ZZd\cap \Gamma_L}\norm{\1_{\Gamma +j}\frac{1}{1+\L}\1_{\Gamma +k}g}_{L^2}.
\end{align*}
Using~\eqref{eq:1/1+L_localise_unif},  we obtain
\begin{align*}
\norm{f_L}_{L^2_\unif}&\leq\sup_{j\in \ZZd} C\sum_{k\in\ZZd\setminus \set{j}}e^{-C'\bra{\log\av{j-k}}^2} \norm{g}_{L^2(\Gamma +k)}+C\norm{g}_{L^2_\unif}\leq C\norm{g}_{L^2_\unif}
\end{align*}
for a constant $C$ independent of $L$. The space $L^2_\unif(\RRd)$ is known to be the dual of  $\ell ^1(L^2)=\set{f\in L^2_\text{loc}(\RRd),\; \sum_{k\in\ZZd}\norm{f}_{L^2(\Gamma +k)}<\ii}$, which is a separable Banach space. Therefore, since the sequence $(f_L)_{L\geq 1}$ is bounded in  $L^2_\unif(\RRd)$, there exists a subsequence of  $(f_L)_{L\geq 1}$ (denoted the same for simplicity)
and $f\in L^2_\unif(\RRd)$  such that $f_{L} \rightharpoonup _\ast f$ in $L^2_\unif(\RRd)$ and 
\begin{align}\label{eq:1/1+L_borne_L2unif}
\norm{f}_{L^2_\unif}\leq \liminf_{k\rightarrow \ii} \norm{f_{L}}_{L^2_\unif}\leq  C \norm{g}_{L^2_\unif}.
\end{align} 
We now want to pass to the limit in the sense of distributions in~\eqref{eq:equation_L}. Since $C^\ii_c(\RRd)$ is dense in $\ell^1(L^2) $, the sequence  $(f_L)$ converges to $f$ in $\D'(\RRd)$. Next, we need to show that for any $\phi\in \D(\RRd)$,
\begin{equation}\label{eq:weak_continuity_L}
\int_\RRd\bra{\L\bra{f_L-f}}\phi\cvL 0.
\end{equation}
We denote by $\rho_{z,L}$ the density associated with the operator $\bra{z-H_0}^{-1} Y_m*(f-f_L)\bra{z-H_0}^{-1}$. Then
\begin{align*}
 \int_\RRd\bra{\L\bra{f_L-f}}\phi& =\frac{1}{2i\pi}\oint_\C \int_\RRd\rho_{z,L}\phi \,dz
\end{align*}
and, as $\phi$ has compact support, we have by~\eqref{eq:rho_zfcontroleeparYf_B} and~\eqref{eq:Young}
\begin{align*}
\av{\int_ \RRd\rho_{z,L}\phi}
%\leq C \norm{Y_m*\bra{f-f_L}}_{L^\ii}\norm{\phi}_{L^1}
&\leq C \norm{f-f_L}_{L^2_\unif}\norm{\phi}_{L^1}
%\\&
\leq C \norm{g}_{L^2_\unif}\norm{\phi}_{L^1},
\end{align*}
where the constant $C\geq 0$ is independent of $L$ and $z\in \C$. By the dominated convergence theorem, it is therefore sufficient, for proving~\eqref{eq:weak_continuity_L}, to show that for any $z\in \C$
\begin{equation}\label{eq:rho_z}
\int_{\RRd}\rho_{z,L}\phi\cvL 0.
\end{equation}
 For $R\geq 1$, we define  $\rho_{z,L,{\rm out},R}$ and $\rho_{z,L,{\rm in} ,R}$ to be the densities associated with the operators
$$
\frac{1}{z-H_0}\1_{\RRd\setminus B(0,R)}Y_m*(f-f_L)\frac{1}{z-H_0}\quad \text{and}\quad \frac{1}{z-H_0}\1_{ B(0,R)}Y_m*(f-f_L)\frac{1}{z-H_0}
$$
respectively. Therefore $\rho_{z,L}=\rho_{z,L,\rm out,R}+ \rho_{z,L,{\rm in},R}$. Let $\epsilon >0$. In the following, we will choose $R$ large enough such that $\int \rho_{z,L,{\rm out},R}\phi$ is small for any $L$. Then, using the weak-$\ast$ convergence of  $f_L$ to $f$ we show that $\int \rho_{z,L,{\rm in},R}\phi$ is small for $L$ large enough. Reasoning similarly than in the proof of~\eqref{eq:rho_zfcontroleeparYf_B}, we find
\begin{align}\label{eq:3yit}
\av{\int_\RRd\rho_{z,L,{\rm out},R}\phi}&\leq C \norm{f-f_L}_{L^2_\unif}\\
&\times\bra{\norm{ \1_{\RRd\setminus B(0,R)}\frac{1}{z-H_0}\sqrt{\phi_+}}_{\S_2}^2+\norm{\1_{\RRd\setminus B(0,R)}\frac{1}{z-H_0}\sqrt{\phi_-}}_{\S_2}^2}\nonumber
\end{align}
Now, we need the following lemma.
\begin{lemma}\label{lemma:CT}
Let $W\in L^2_\unif(\RRd)$ and $H=-\Delta+W$. There exists $C\geq 0$ and $C'>0$, depending only on $\norm{W}_{L^2_\unif}$, such that for any $\chi\in L^2(\RRd)$ and $\eta\in L^\ii(\RRd)$ satisfying $R={\rm d}\bra{\text{supp}(\chi),\text{supp}(\eta)}\geq 1$, and any $z\in \CC\setminus \sigma(H)$, we have 
\begin{equation*}
 \norm{\chi\bra{z-H}^{-1}\eta}_{\S_2}\leq Cc_1(z)e^{-C'c_2(z)R}\norm{\eta}_{L^\ii}\norm{\chi}_{L^2},
\end{equation*}
where $c_1(z)={\rm d}(z, \sigma( H))^{-1}$, $c_2(z)={\rm d}(z,\sigma(H))/(\av{z}+1)$.
In particular, if $\Lambda$ is a compact set of $\CC \setminus \sigma(H)$, then 
\begin{equation*}
 \norm{\chi\bra{z-H}^{-1}\eta}_{\S_2}\leq Ce^{-C'R}\norm{\eta}_{L^\ii}\norm{\chi}_{L^2},
\end{equation*}
where $C$ and $C'$ do not depend on $z$ but depend, in general, on $\Lambda$. 
\end{lemma}

\begin{proof}[Proof of Lemma~\ref{lemma:CT}]
 We have
\begin{align*}
 \norm{\chi\bra{z-H}^{-1}\eta}_{\S_2}^2=\int_{\RRd\times \RRd}\av{\chi(x)G_z(x,y)\eta(y)}^2\,dx\,dy,
\end{align*}
where $G_z(x,y)$ in the kernel of $\bra{z-H}^{-1}$. By~\cite[Theorem B.7.2]{simon_SSG} and~\cite[Corollary 1]{germinet-Klein-Decay} we have for $\av{x-y}\geq 1$
\begin{align*}
 \av{G_z(x,y)}\leq Cc_1(z)e^{-C'c_2(z)\av{x-y}},
\end{align*}
where $C\geq 0$ and $C'>0$ depend only on $\norm{W}_{L^2_\unif}$. Therefore 
\begin{align*}
 \norm{\chi\bra{z-H}^{-1}\eta}_{\S_2}^2&\leq Cc_1(z)^2\norm{\eta}_{L^\ii}^2 \norm{\chi}_{L^2}^2 \sup_{x\in \text{supp}(\chi)}\int_{\RRd}\1_{\text{supp}(\eta)}(y)e^{-2C'c_2(z)\av{x-y}}\,dy\\
&\leq Cc_1(z)^2\norm{\eta}_{L^\ii}^2 \norm{\chi}_{L^2}^2 e^{-C'c_2(z)R}.
\end{align*} 
\end{proof}
We now go back to~\eqref{eq:3yit}. Using that $\norm{f-f_L}_{L^2_\unif}\leq C\norm{g}_{L^2_\unif}$ and Lemma~\ref{lemma:CT}, we have for $R$ large enough
% $$
% \norm{\1_{\RRd\setminus B(0,R)} \frac{1}{z-H_0}\sqrt{\phi_\pm} }_{\S_2}\leq C\norm{\sqrt{\phi_\pm}}_{L^2}e^{-C'R},
% $$
%for $R\geq 1$. As , we obtain 
\begin{align}\label{eq:weak_continuity}
&\av{ \int_\RRd\rho_{z,L,{\rm out},R}\phi } \leq C\norm{g}_{L^2_\unif}\norm{\phi}_{L^1}^2e^{-C'R}.
\end{align}
We can thus choose $R$  such that~\eqref{eq:weak_continuity} is smaller than $\epsilon/2$. Besides, we have
\begin{align*}
\int_\RRd \rho_{z,L,{\rm in},R}\phi&= \tr\bra{\1_{B(0,R)}Y_m*(f-f_L)\frac{1}{z-H_0}\phi \frac{1}{z-H_0}}\\
&=\int_\RRd \1_{B(0,R)}Y_m*(f_L-f) \rho,
\end{align*}
where $\rho$ is the density associated with the trace class operator $\bra{z-H_0}^{-1}\phi \bra{z-H_0}^{-1}$. For $R'>0$, we have
\begin{align}\label{eq:term_chiR}
& \av{ \int_\RRd \1_{B(0,R)}  Y_m*(f_L-f)\rho}= \av{\int_{B(0,R)} \int_{\RRd}Y_m(x-y)\bra{f-f_L}(y)\,dy \rho(x)\,dx}\nonumber\\
&\qquad\qquad\qquad\leq \av{ \int_{B(0,R)} \int_{B(0,R')}Y_m(x-y)\bra{f-f_L}(y)\,dy \rho(x)\,dx}\nonumber\\
&\qquad\qquad\qquad+\norm{\int_{\RRd\setminus B(0,R')}Y_m(\cdot-y)\bra{f-f_L}(y)\,dy}_{L^\ii(B(0,R))} \norm{\rho}_{L^1}.
\end{align}
As $Y_m$ is exponentially decaying, we can choose $R'$ such that the second term of the RHS of~\eqref{eq:term_chiR} is smaller that $\epsilon/4$. 
As to the first term, by the weak-$\ast$ convergence of $f_L$ to $f$ in $L^2_\unif(\RRd)$, we have that 
$$h_L(x)=\int_{B(0,R')} Y_m(x-y)\bra{f-f_L}(y)\,dy \quad{\cvL}\quad 0,$$
for any $x\in B(0,R)$. Besides, we have for a.e. $x\in B(0,R)$
$$
\av{h_L(x)}\leq \norm{h_L}_{L^\ii}\leq C\norm{f-f_L}_{L^2_\unif}\leq C\norm{g}_{L^2_\unif}
$$
(see~\eqref{eq:Young}). By the dominated convergence theorem, it follows that one can choose $L$ large enough such that the first term of  the RHS of~\eqref{eq:term_chiR} is smaller that $\epsilon/4$. This concludes the proof of~\eqref{eq:rho_z}, thus the proof of \eqref{eq:weak_continuity_L}. We are now able to pass to the limit in~\eqref{eq:equation_L}, which concludes the proof of the surjectivity of $1+\L$ on $L^2_\unif(\RRd)$. In view of ~\eqref{eq:1/1+L_borne_L2unif}, we have shown that 
there exists $C\geq 0$ such that for any  $g\in L^2_\unif(\RRd)$, there exists $f\in L^2_\unif(\RRd)$ such that 
\begin{equation}\label{eq:1/1+L_borne}
(1+\L)f=g\quad \text{and}\quad \norm{f}_{L^2_\unif}\leq C \norm{g}_{L^2_\unif}.
\end{equation}

\paragraph*{Step 6}\textit{Proof that $1+\L$ is injective on $L^2_\unif(\RRd)$.}
Let $f\in L^2_\unif(\RRd)$ be such that $(1+\L)f=0$. For $R\geq 1$, let $\chi_R=\chi_{\set{0},R}$ as in Lemma~\ref{lemma:estimee_com_L2unif}. Then,
\begin{align*}
\chi_R f+\chi_R\L(f)=0,
\end{align*}
and thus
\begin{align*}
 \bra{1+\L}(\chi_Rf)=\L\chi_Rf-\chi_R\L(f)=\com{\L,\chi_R}f.
\end{align*}
As $g:=\com{\L,\chi_R}f\in L^2(\RRd)$, then the solution $\phi=\chi_Rf$ of $(1+\L)\phi=g$ is unique and satisfies $\norm{\phi}_{L^2_\unif}\leq C\norm{g}_{L^2_\unif}$ by~\eqref{eq:1/1+L_borne}. Therefore
$$
\norm{\chi_Rf}_{L^2_\unif}\leq C\norm{\com{\L,\chi_R}f}_{L^2_\unif}.
$$
Using Lemma~\ref{lemma:estimee_com_L2unif}, we have
\begin{align}\label{eq:chiRf_majoree_1/R}
 \norm{\chi_Rf}_{L^2_\unif}&\leq C\norm{\com{\L,\chi_R}f}_{L^2_\unif}=C\norm{\com{\L,\eta_R}f}_{L^2_\unif}\leq\frac{C}{R} \norm{f}_{L^2_\unif}.
\end{align}
As $\norm{\chi_Rf}_{L^2_\unif}$ is a non-decreasing function of $R$ converging to $\norm{f}_{L^2_\unif}$ when $R\rightarrow+\ii$ and the RHS of~\eqref{eq:chiRf_majoree_1/R} goes to $0$ when $R\rightarrow+\ii$, then $\norm{f}_{L^2_\unif}=0$ and $f=0$; which proves that $1+\L$ is injective. The boundedness of $1/(1+\L)$ then follows from~\eqref{eq:1/1+L_borne}.
% \paragraph*{Step 7}\textit{Proof of~\eqref{eq:1/1+L_localise_unif} for general sets $B$.}
% Let $B\subset \RRd$ and $f\in L^2_\unif(\RRd)$. Using~\eqref{eq:1/1+L_localise_unif} for compacts sets, we have 
% \begin{align*}
% \norm{\1_{\Gamma +j}\frac{1}{1+\L}\1_B f}_{L^2}&\leq \sum_{k\in \ZZd\cap B }\norm{\1_{\Gamma +j}\frac{1}{1+\L}\1_{\Gamma +k}\1_Bf}_{L^2}\\
% &\leq \sum_{k\in \ZZd\cap B}e^{-C'\bra{\log\av{k-j}}^2}\norm{f}_{L^2(\Gamma +k)}\\
% &\leq \sum_{k\in \ZZd\cap B}e^{-C'\bra{\log\av{k-j}}^2}\norm{f}_{L^2_\unif}.
% \end{align*}
% for any $k\in \ZZd\cap B$, we have $\av{k-j}\geq R$. Thus,
%   \begin{align*}
%  \sum_{k\in \ZZd\cap B}e^{-C'\bra{\log\av{k-j}}^2} &\leq  \sum_{k\in \ZZd}e^{-\frac{C'}{2}\bra{\log R}^2}e^{-\frac{C'}{2}\bra{\log\av{k-j}}^2} \leq   Ce^{-C'\bra{\log R}^2}.
%  \end{align*}
% Therefore,
% \begin{align*}
% \norm{1_{\Gamma +j}\frac{1}{1+\L}f}_{L^2}&\leq e^{-C'\bra{\log R}^2}\norm{f}_{L^2_\unif},
% \end{align*}
This  concludes the proof of Theorem~\ref{th:props_de_L}.
\end{proof}

In order to complete the proof of Theorem~\ref{th:props_de_L}, we need to prove Lemmas~\ref{lemma:estimee_com_L2unif} and~\ref{lemma:type_gronwall}.

\begin{proof}[Proof of Lemma~\ref{lemma:estimee_com_L2unif}]
For simplicity, we use the shorthand notation $\chi_R=\chi_{I,R}$, $\eta_R=\eta_{I,R}$ and $B_R=B_{I,R}$.

\paragraph*{Step 1}\textit{Proof of~\eqref{eq:estimee_com_L2}.} 
We have
\begin{align*}
 \eta_R f*Y_m -Y_m*\bra{\eta_Rf}&=\eta_R\bra{-\Delta+m^2}^{-1}f-\bra{-\Delta+m^2}^{-1}\eta_Rf\\
&= \com{\eta_R,\bra{-\Delta+m^2}^{-1}}f.
\end{align*}
We now use that $\com{B, (z-A)^{-1}}=(z-A)^{-1}\com{B,A} (z-A)^{-1}$ and the fact that $\com{\eta_R, \Delta}=-(\Delta \eta_R+2\nabla\eta_R\cdot \nabla)$. We thus obtain
\begin{align}\label{eq:com_V}
\eta_R f*Y_m -Y_m&*\bra{\eta_Rf}= \bra{-\Delta+m^2}^{-1}\com{\eta_R, \Delta}\bra{-\Delta+m^2}^{-1}f\nonumber\\
&= -\bra{-\Delta+m^2}^{-1}\bra{(\Delta \eta_R)+2\bra{\nabla \eta_R}\cdot \nabla }\bra{-\Delta+m^2}^{-1}f.
\end{align}
As $\nabla\eta_R=-\nabla \chi_R$ and $\Delta \eta_R=-\Delta \chi_R$ are supported in $B_{2R}\setminus B_R$, then, by~\eqref{eq:condition_chi_R},
\begin{align}\label{eq:etaV-Veta}
 \norm{ \eta_R f*Y_m -Y_m*\bra{\eta_Rf}}_{H_2}&\leq \frac{C}{R^2}\norm{\1_{B_{2R}\setminus B_{R}}\bra{-\Delta+m^2}^{-1}f}_{L^2}\nonumber\\
&+ \frac{C}{R}\norm{\1_{B_{2R}\setminus B_{R}}\nabla\bra{-\Delta+m^2}^{-1}f}_{\bra{L^2(\RRd)}^d}.
\end{align}
To bound the first term of the RHS of~\eqref{eq:etaV-Veta}, we write
\begin{align}\label{eq:diviser_V}
\bra{-\Delta+m^2}^{-1}f (x)
% these &= \int_\RRd Y_m(x-y)f(y)\,dy \nonumber\\
&=\int_\RRd Y_m(x-y)f(y)\1_{(\RRd\setminus B_{3R})\cup B_{R/2}  }(y)\,dy\nonumber\\
&\qquad+\int_\RRd Y_m(x-y)f(y)\1_{B_{3R}\setminus B_{R/2}}(y)\,dy
\end{align}
Thanks to the exponential decay of $Y_m$ and the fact that for any $x\in B_{2R}\setminus B_{R}$ and $y\in (\RRd\setminus B_{3R})\cup B_{R/2}$, $\av{x-y}\geq R/2$, we get
\begin{align*}
&\norm{\1_{B_{2R}\setminus B_{R}} \bra{-\Delta+m^2}^{-1}f\1_{ (\RRd\setminus B_{3R} )\cup B_{R/2}}}_{L^2}\\
&\qquad\qquad\leq Ce^{-\frac{mR}{4}}\norm{Y_{\frac{m}{2}}*(f\1_{ (\RRd\setminus B_{3R} )\cup B_{R/2}})}_{L^2} \leq Ce^{-\frac{mR}{4}}\norm{f\1_{ (\RRd\setminus B_{3R} )\cup B_{R/2}}}_{H^{-2}}.
\end{align*}
Controlling in the same way the second term of the RHS of~\eqref{eq:diviser_V}, we deduce
\begin{align*}
\norm{\1_{B_{2R}\setminus B_{R}} \bra{-\Delta+m^2}^{-1}f}_{L^2}\leq Ce^{-\frac{mR}{4}}\norm{\1_{ (\RRd\setminus B_{3R} )\cup B_{R/2}}f}_{H^{-2}}+C\norm{\1_{B_{3R}\setminus B_{R/2}}f}_{H^{-2}}.
\end{align*}
We proceed similarly for the second term of the RHS of~\eqref{eq:etaV-Veta} using that $W_m=\nabla Y_m$, the inverse Fourier transform of  $i\av{S^{d-1}}\frac{p}{\av{p}^2
+m^2}$, is exponentially decaying and satisfies $\norm{W_m*g}_{L^2}\leq \norm{g}_{H^{-1}} $ for any $g\in H^{-1}$. We get
\begin{equation}\label{eq:r1_f}
\norm{ \eta_R f*Y_m -Y_m*\bra{\eta_Rf}}_{H_2}\leq \frac{C}{R}e^{-\frac{mR}{4}}\norm{\1_{ (\RRd\setminus B_{3R} )\cup B_{R/2}}f}_{H^{-1}}+\frac{C}{R}\norm{\1_{B_{3R}\setminus B_{R/2}}f}_{H^{-1}}.
\end{equation}

We turn now to estimating $\norm{\com{\eta_R,\L} f}_{L^2}$. We know that $\com{\eta_R,\L} f$ is the density associated with the operator
\begin{align}\label{eq:decomposition_com}
 -\eta_RQ_{1,f}+Q_{1, \eta_Rf} &=\frac{1}{2i\pi}\int_\C \bra{ \frac{1}{z-H_0}Y_m*(\eta_Rf)\frac{1}{z-H_0}-\eta_R\frac{1}{z-H_0}Y_m*f\frac{1}{z-H_0}}dz\nonumber\\
&=\frac{1}{2i\pi} \int_\C \frac{1}{z-H_0}\bra{Y*(\eta_Rf)-\eta_RY_m*f}\frac{1}{z-H_0}\,dz\nonumber\\
&\quad-\frac{1}{2i\pi} \int_\C \com{\eta_R,\frac{1}{z-H_0}}Y_m*f\frac{1}{z-H_0}\,dz.
\end{align}
We denote by $r_1$ and $r_2$ the densities associated with the first and second terms of the RHS of~\eqref{eq:decomposition_com} respectively. For any $W\in L^2(\RRd)$, we have
\begin{align}\label{eq:r_1W}
\av{ \int_{\RRd}r_1W}&=\av{\frac{1}{2i\pi} \int_\C \tr\bra{\frac{1}{z-H_0}\bra{Y_m*(\eta_Rf)-\eta_RY_m*f}\frac{1}{z-H_0}W}\,dz}\nonumber\\
&\leq C \norm{ Y_m*(\eta_Rf)-\eta_RY_m*f}_{L^2}\norm{W}_{L^2},
\end{align}
where we have used~\eqref{eq:KSS} and~\eqref{eq:B(z)uniformlybounded}. Therefore, in view of~\eqref{eq:r1_f},
\begin{align}\label{eq:r1_majore_V}
\norm{r_1}_{L^2}&\leq C\norm{Y_m*(\eta_R f)-\eta_RY_m*f}_{L^2}\nonumber\\
&\leq \frac{C}{R}e^{-\frac{mR}{4}}\norm{1_{(\RRd\setminus B_{3R})\cup B_{R/2}} f}_{H^{-1}}+\frac{C}{R}\norm{1_{B_{3R}\setminus B_{R/2}}f}_{H^{-1}}.
\end{align}
It remains to estimate $r_2$. For any $A\in \S_2(L^2(\RRd))$ and $W\in L^2(\RRd)$, the density $\rho$ associated with the operator $\bra{-\Delta+1}^{-{1}/{2}}A\bra{-\Delta+1}^{-{1}/{2}}$ satisfies 
\begin{align*}
\av{ \int_\RRd \rho W} &\leq \norm{\sqrt{\av{W}}}_{L^4}\norm{\bra{\av{p}^2+1}^{-\frac{1}{2}}}_{L^4}\norm{A}_{\S_2}\norm{\bra{\av{p}^2+1}^{-\frac{1}{2}}}_{L^4}\norm{\sqrt{\av{W}}}_{L^4}\nonumber\\
&\leq C \norm{W}_{L^2}\norm{A}_{\S_2}.
\end{align*}
 Therefore 
\begin{align}\label{eq:generic_A}
\norm{\rho}_{L^2}\leq C \norm{A}_{\S_2}.
\end{align}
Applying~\eqref{eq:generic_A} for $A=\bra{-\Delta+1}^{1/2}\com{\eta_R,(z-H_0)^{-1}}Y_m*f(z-H_0)^{-1}\bra{-\Delta+1}^{1/2}$, we obtain
\begin{align}\label{eq:r_2_debut}
 \norm{r_2}_{L^2}&\leq C\oint_\C\norm{\bra{-\Delta+1}^{-\frac{1}{2}}\com{\eta_R, \Delta}\frac{1}{z-H_0}Y_m*f\bra{-\Delta+1}^{-\frac{1}{2}}}_{\S_2}dz,
\end{align}
where we have used that $C_1(1-\Delta)\leq \av{z-H_0}\leq C_2 (1-\Delta)$, whose proof is similar to the the one of Lemma~\ref{lemma:B(z)borné}. 
As the commutator $\com{\eta_R, \Delta}$ has its support in $B_{2R}\setminus B_R$, we consider separately $f\1_{B_{3R}\setminus B_{R/2} }$ and $f\1_{(\RRd\setminus B_{3R})\cup B_{R/2}}$.
Using the same techniques as above, we obtain
\begin{align}\label{eq:r_2_part1}
& \norm{ \bra{-\Delta+1}^{-\frac{1}{2}}\com{\eta_R, \Delta}\frac{1}{z-H_0}Y_m*\bra{\1_{B_{3R}\setminus B_{R/2} }f}\bra{-\Delta+1}^{-\frac{1}{2}}}_{\S_2}\nonumber\\
&\qquad\qquad\qquad\qquad\leq C\norm{\bra{-\Delta+1}^{-\frac{1}{2}}\com{\eta_R,\Delta}}_{\B}\norm{\frac{1}{z-H_0}Y_m*\bra{1_{B_{3R}\setminus B_{R/2} }f}}_{\S_2}\nonumber\\
&\qquad\qquad\qquad\qquad\leq C\bra{\norm{\nabla\eta_R }_{L^\ii}+\norm{\Delta\eta_R }_{L^\ii}}\norm{Y_m*\bra{1_{B_{3R}\setminus B_{R/2} }f}}_{L^2}\nonumber\\
&\qquad\qquad\qquad\qquad\leq \frac{C}{R}\norm{\1_{B_{3R}\setminus B_{R/2} }f}_{H^{-2}}.
\end{align}
Far from the support of $\com{\eta_R, \Delta}$, we have
\begin{align}\label{eq:sum}
& \norm{ \bra{-\Delta+1}^{-\frac{1}{2}}\com{\eta_R, \Delta}\frac{1}{z-H_0}Y_m*\bra{\1_{(\RRd\setminus B_{3R})\cup B_{R/2}}f}\bra{-\Delta+1}^{-\frac{1}{2}}}_{\S_2}\nonumber\\
&\qquad\qquad\qquad\leq C\norm{ \bra{-\Delta+1} ^{-\frac12} \bra{\Delta \eta_R-2\nabla \cdot\nabla\eta_R} }_{\B}\sum_{k\in\ZZd}\norm{\1_{B_{2R}\setminus {B_R}} \frac{1}{z-H_0}\1_{\Gamma +k}}_{\S_2}\nonumber\\
&\qquad\qquad\qquad\quad\times\norm{\1_{\Gamma +k}Y_m*\bra{\1_{(\RRd\setminus B_{3R})\cup B_{R/2}}f}\bra{-\Delta+1} ^{-\frac12} }_{\B}.
\end{align}
In dimension $d\leq 3$, $H^1(\RRd)\hookrightarrow L^4(\RRd)$. Therefore
\begin{align}\label{eq:deBàH1}
 &\norm{\1_{\Gamma +k}Y_m*\bra{\1_{(\RRd\setminus B_{3R})\cup B_{R/2}}f}\bra{-\Delta+1} ^{-\frac12} }_{\B}\nonumber\\
&\qquad\qquad\leq C \norm{\1_{\Gamma +k}Y_m*\bra{\1_{(\RRd\setminus B_{3R})\cup B_{R/2}}f}\bra{-\Delta+1} ^{-\frac12} }_{\S_4}\nonumber\\
&\qquad\qquad \leq C \norm{\1_{\Gamma +k}Y_m*\bra{\1_{(\RRd\setminus B_{3R})\cup B_{R/2}}f} }_{L^4} \leq C\norm{Y_m*\bra{\1_{(\RRd\setminus B_{3R})\cup B_{R/2}}f} }_{H^1(\Gamma +k)}.
\end{align}
Using the exponential decay of $Y_m$, we obtain 
\begin{align}\label{eq:yellow0}
\norm{Y_m*\bra{\1_{(\RRd\setminus B_{3R})\cup B_{R/2}}f} }_{H^1(\Gamma +k)}&\leq Ce^{-\frac{m}{2}{\rm d}(k, (\RRd\setminus B_{3R})\cup B_{R/2})}\norm{1_{(\RRd\setminus B_{3R})\cup B_{R/2}}f}_{H^{-1}}.
\end{align}
In particular, for $k\in \ZZd\cap \bra{B_{5R/2}\setminus B_{3R/4}}$ (the pink part in Figure~\ref{fig:lalignereelle} below), the distance between $k$ and  $(\RRd\setminus B_{3R})\cup B_{R/2}$ (the blue part in Figure~\ref{fig:lalignereelle}) is greater than or equal to $R/4$ and
\begin{figure}[!h]
\centering
\psfrag{0}{$0$} \psfrag{R/2}{$\frac{R}{2}$} \psfrag{3R/4}{$\frac{3R}{4}$} \psfrag{R}{$R$} \psfrag{2R}{$2R$} \psfrag{5R/2}{$\frac{5R}{2}$} \psfrag{3R}{$3R$} 
\includegraphics[scale=0.6]{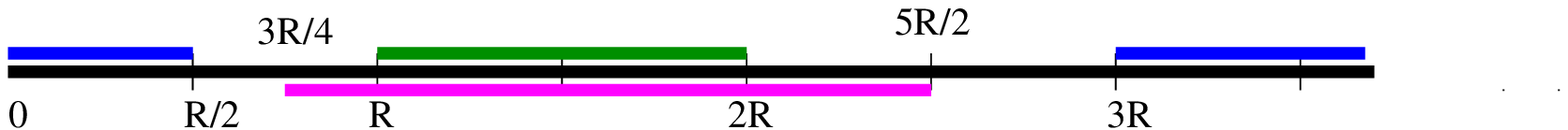}
\caption{Schematic representation of $\RR_+$ used in the proof of Lemma~\ref{lemma:estimee_com_L2unif}.}
\label{fig:lalignereelle}
\end{figure}

\begin{align}\label{eq:yellow}
\norm{Y_m*\bra{\1_{(\RRd\setminus B_{3R})\cup B_{R/2}}f} }_{H^1(\Gamma +k)}&\leq Ce^{-\frac{mR}{16}}e^{-\frac{m}{4}{\rm d}(k, (\RRd\setminus B_{3R})\cup B_{R/2})}\norm{1_{(\RRd\setminus B_{3R})\cup B_{R/2}}f}_{H^{-1}}.
\end{align}
Besides, using Lemma~\ref{lemma:CT} with $\eta=\1_{B_{2R}\setminus B_{R}}$ and $\chi=\1_{\Gamma +k}$, we obtain 
\begin{align*}
\norm{\1_{B_{2R}\setminus {B_R}} \frac{1}{z-H_0}\1_{\Gamma +k}}_{\S_2}&\leq Ce^{-C'{\rm d}(k, B_{2R}\setminus  B_{R})}.
\end{align*}
In particular for $k\in\ZZd\setminus \bra{B_{5R/2}\setminus B_{3R/4}}$, we have ${\rm d}(k,  B_{2R}\setminus  B_{R})\geq \frac{R}{4}$  (see Figure~\ref{fig:lalignereelle}) and  
\begin{align}\label{eq:yellow_bar}
\norm{\1_{B_{2R}\setminus_{B_R}} \frac{1}{z-H_0}\1_{\Gamma +k}}_{\S_2}&\leq Ce^{-\frac{C'}{2}\frac{R}{4}}e^{-\frac{C'}{2}{\rm d}(k, B_{2R}\setminus B_{R})}.
\end{align}
Combining~\eqref{eq:sum},~\eqref{eq:deBàH1},~\eqref{eq:yellow0},~\eqref{eq:yellow} and~\eqref{eq:yellow_bar}, we obtain
\begin{align}\label{eq:r_2_part2}
&\norm{ \bra{-\Delta+1}^{-\frac12} \com{\eta_R,\Delta}\frac{1}{z-H_0}Y_m*\bra{\1_{(\RRd\setminus B_{3R})\cup B_{R/2}}f}\bra{-\Delta+1}^{-\frac12}}_{\S_2} \nonumber\\
&\qquad\qquad \qquad\leq\frac{C}{R}e^{-C'R}\sum_{k\in\ZZd}e^{-C'\av{k}}\norm{1_{(\RRd\setminus B_{3R})\cup B_{R/2}}f}_{H^{-1}}\nonumber\\
&\qquad\qquad \qquad\leq\frac{C}{R}e^{-C'R}\norm{1_{(\RRd\setminus B_{3R})\cup B_{R/2}}f}_{H^{-1}}.
\end{align}
This completes our estimate on $r_2$. Indeed, in view of~\eqref{eq:r_2_debut},~\eqref{eq:r_2_part1} and~\eqref{eq:r_2_part2}, we deduce that
\begin{align*}
 \norm{r_2}_{L^2}\leq \frac{C}{R}e^{-C'R}\norm{1_{(\RRd\setminus B_{3R})\cup B_{R/2}}f}_{H^{-1}}+ \frac{C}{R}\norm{\1_{B_{3R}\setminus B_{R/2} }f}_{H^{-1}},
\end{align*}
which concludes the proof of~\eqref{eq:estimee_com_L2}.

\paragraph*{Step 2}\textit{Proof of~\eqref{eq:estimee_com_L2unif}.}
The proof of~\eqref{eq:estimee_com_L2unif} for functions in $L^2_\unif$ is similar to the one of~\eqref{eq:estimee_com_L2} for $L^2$ functions. We sketch here the main steps of the proof, and only highlighting the differences. Let $f\in L^2_\unif(\RRd)$. Using~\eqref{eq:com_V}, we have
\begin{align*}
 \eta_R Y_m*f-Y_m*(\eta_R f) &= \sum_{k\in \ZZd} \eta_R Y_m*(\1_{\Gamma +k} f )- Y_m*(\eta_R \1_{\Gamma +k}  f)\\
&= \sum_{k\in \ZZd} (-\Delta+m^2)^{-1} \bra{(\Delta\eta_R)+ 2(\nabla\eta_R)\cdot\nabla}(-\Delta+m^2)^{-1}\1_{\Gamma +k}f\\
&=Y_m*\bra{ \Delta\eta_R Y_m*f+2\nabla \eta_R \cdot \nabla Y_m*f}.
\end{align*}
Therefore 
\begin{align}\label{eq:com_V_unif}
\norm{ \eta_R Y_m*f-Y*(\eta_R f)}_{H^2_\unif}& \leq  C\norm{ (-\Delta+m^2) \bra{\eta_R Y_m*f-Y_m*(\eta_R f)}}_{L^2_\unif}\nonumber\\
&\leq   C \norm{ \Delta\eta_R Y_m*f+ 2\nabla \eta_R \cdot\nabla Y_m*f }_{L^2_\unif}\nonumber\\
&\leq \frac{C}{R^2}\norm{\1_{B_{2R}\setminus B_R}Y_m*f }_{L^2_\unif}+ \frac{C}{R}\norm{\1_{B_{2R}\setminus B_R}\nabla Y_m*f }_{L^2_\unif}.
\end{align}
To bound the first term of the RHS of~\eqref{eq:com_V_unif}, we use the exponential decay of $Y_m$, the fact that $Y_m\in \ell^1(L^1)$ and Lemma~\ref{lemma:Young}. We get
\begin{align*}
 \norm{\1_{B_{2R}\setminus B_R}Y_m*f }_{L^2_\unif}
%&\leq \norm{\1_{B_{2R}\setminus B_R}Y_m*\bra{f\1_{ (\RRd\setminus B_{3R})\cup B_{R/2} }} }_{L^2_\unif}+\norm{\1_{B_{2R}\setminus B_R}Y_m*\bra{f \1_{y\in B_{3R}\setminus B_{R/2}}}}_{L^2_\unif} \\
&\leq e^{-\frac{mR}{4}}\norm{Y_\frac{m}{2}*\bra{f\1_{ (\RRd\setminus B_{3R})\cup B_{R/2} }} }_{L^2_\unif}+ \norm{Y_m*\bra{f \1_{ B_{3R}\setminus B_{R/2}}}}_{L^2_\unif}\\
&\leq C\bra{e^{-\frac{mR}{4}} \norm{ f}_{L^2_\unif}+\norm{\1_{B_{3R}\setminus B_{R/2}}f}_{L^2_\unif}}.
\end{align*} 
As $\nabla Y_m$ is also exponentially decaying and is in $\ell^1(L^1)$, we proceed similarly for the second term of the RHS of~\eqref{eq:com_V_unif}. Finally we obtain the stated inequality
\begin{equation}\label{eq:VLinfini}
\norm{\eta_R Y_m*f-Y_m*(\eta_R f)}_{H^2_\unif}\leq \frac{C}{R}\bra{e^{-C'R}\norm{f}_{L^2_\unif}+\norm{ \1_{B_{3R}\setminus B_{R/2}} f}_{L^2_\unif}}.
\end{equation}

We turn to estimating $\norm{\com{\eta_R,\L} f}_{L^2_\unif}$. By~\eqref{eq:decomposition_com}, we have that 
\begin{align*}
 \com{\eta_R,\L} f=r_1+r_2= r_1+r_{21}+r_{22}
\end{align*}
where $r_1$ and $r_2$ are the densities associated with the first and seconds term of~\eqref{eq:decomposition_com} respectively, which are now locally trace class operators, and  $r_{21}$ is the density associated with the operator
\begin{align*}
\frac{1}{2i\pi}\oint_\C \com{\eta_R,\frac{1}{z-H_0}}Y_m*\bra{\1_{B_{3R}\setminus B_{R/2}}f}\frac{1}{z-H_0}dz.
\end{align*}
% these 
%
\begin{comment}
where $r_1$, $r_{21}$ and $r_{22}$ are the densities associated with the operators
\begin{align*}
 \frac{1}{2i\pi}\oint_\C\frac{1}{z-H_0}\bra{Y_m*\bra{\eta_R f}-\eta_RY_m*f}\frac{1}{z-H_0}dz, 
\end{align*}
\begin{align*}
\frac{1}{2i\pi}\oint_\C \com{\eta_R,\frac{1}{z-H_0}}Y_m*\bra{\1_{B_{3R}\setminus B_{R/2}}f}\frac{1}{z-H_0}dz,
\end{align*}
and
\begin{align*}
\frac{1}{2i\pi}\oint_\C \com{\eta_R,\frac{1}{z-H_0}}Y_m*\bra{f\1_{(\RRd\setminus B_{3R})\cup B_{R/2}}}\frac{1}{z-H_0}dz,
\end{align*}
which are now locally trace class operators. 
\end{comment}
%
%
By~\eqref{eq:rho_Q1fcontroleeparYf_B} and using that, in dimension $d\leq 3$, $H^2_\unif(\RRd)\hookrightarrow L^\ii(\RRd)$, we find
\begin{align*}
 \norm{r_1}_{L^2_\unif}&\leq C \norm{Y_m*\bra{\eta_R f}-\eta_RY_m*f}_{L^\ii}\leq C  \norm{Y_m*\bra{\eta_R f}-\eta_RY_m*f}_{H^2_\unif}\\
&\leq \frac{C}{R}\bra{e^{-C'R}\norm{f}_{L^2_\unif}+\norm{ \1_{B_{3R}\setminus B_{R/2}} f}_{L^2_\unif}},
\end{align*}
where we have used~\eqref{eq:VLinfini} in the last step. Similarly for $r_{21}$, since $\norm{\Delta\eta_R}_{L^\ii}+\norm{\nabla\eta_R}_{L^\ii}\leq C/R$, we have
\begin{align}\label{eq:r_21} 
\norm{r_{21}}_{L^2_\unif}
&\leq \norm{\com{\eta_R, \Delta}\frac{1}{z-H_0}Y_m*( \1_{B_{3R}\setminus B_{R/2}}f)}_{\B } \nonumber\\
&\leq \norm{ \bra{(\Delta \eta_R)+ 2(\nabla\eta_R)\cdot \nabla } \frac{1}{z-H_0}}_{\B}\norm{Y_m*\bra{\1_{B_{3R}\setminus B_{R/2}}f}}_{L^\ii}\nonumber\\
&\leq \frac{C}{R}\norm{\1_{B_{3R}\setminus B_{R/2}}f}_{L^2_\unif}.
\end{align}
As to $r_{22}$, it is actually in $L^2(\RRd)$ and
%%%%%%%%%%%%%%%%%%%%%%%%%%%%%%%%%%%%%preuve détailléee
% We proceed slightly differently for $r_{22}$. For  $k\in \ZZd$ and $W\in L^2(\Gamma +k)$, we have  
% \begin{align*}
% \av{\int_{\Gamma +j}r_{22}W}&=\left|\frac{1}{2i\pi}\oint_\C\tr\left(\frac{1}{z-H_0} \bra{(\Delta \eta_R)- 2 \nabla\cdot (\nabla\eta_R) }\right.\right.\\
% &\qquad\qquad\qquad\qquad\left.\left.\times \frac{1}{z-H_0} Y_m*\bra{f \1_{(\RRd\setminus B_{3R})\cup B_{R/2}} }\frac{1}{z-H_0}W\right)dz\right|\\
% &\leq \frac{1}{2i\pi}\oint_\C \norm{\frac{1}{z-H_0} \bra{(\Delta \eta_R)- 2 \nabla\cdot (\nabla\eta_R) }}_{\B}\\
% &\qquad\qquad\times \norm{ \1_{B_{2R}\setminus B_R}\frac{1}{z-H_0}Y_m*\bra{f \1_{(\RRd\setminus B_{3R})\cup B_{R/2}} }}_{\S_2}\norm{\frac{1}{z-H_0}W}_{\S_2}dz\\
% &\leq \frac{C}{R} \frac{1}{2i\pi}\oint_\C  \norm{ \1_{B_{2R}\setminus B_R}\frac{1}{z-H_0}Y_m*\bra{f \1_{(\RRd\setminus B_{3R})\cup B_{R/2}} }}_{\S_2}dz\norm{W}_{L^2}.
% \end{align*}
% Therefore 
% \begin{align*}
%  \norm{ r_{22}}_{L^28\unif}\leq \frac{C}{R}\frac{1}{2i\pi}\oint_\C  \norm{ \1_{B_{2R}\setminus B_R}\frac{1}{z-H_0}Y_m*\bra{f \1_{(\RRd\setminus B_{3R})\cup B_{R/2}} }}_{\S_2}dz.
% \end{align*}
\begin{align}\label{eq:r_22_unif}
 \norm{r_{22}}_{L^2_\unif}\leq \norm{r_{22}}_{L^2}&\leq \norm{\bra{-\Delta+1}^{-\frac12}\com{\eta_R, \Delta}\frac{1}{z-H_0}Y_m*( \1_{(\RRd\setminus B_{3R})\cup B_{R/2}}f)}_{\S_2}\nonumber\\
&\leq \frac{C}{R}\bra{e^{-C'R} \norm{f}_{L^2_\unif}+\norm{\1_{B_{3R}\setminus B_{R/2}}f}_{L^2_\unif}}.
\end{align}
The proof of~\eqref{eq:r_22_unif} is exactly the same than the proof of~\eqref{eq:r_2_part2}, except that in~\eqref{eq:yellow0}, we use the inequality $\norm{Y_m*f}_{L^\ii}\leq C\norm{f}_{L^2_\unif}$ instead of the inequality $\norm{Y_m*f}_{H^1}\leq C\norm{f}_{H^{-1}}$. 
This concludes the proof of the lemma.
\end{proof}

We pass now to the proof of Lemma~\ref{lemma:type_gronwall}.

\begin{proof}[Proof of Lemma~\ref{lemma:type_gronwall}]
% these preuve longue ci-dessous

%\begin{comment}
We denote by $y_n=x_{\alpha ^n}$ and $b_n={C}{\alpha ^{-n}}e^{-C'\alpha ^n}$ for $n\in\NN$ and $\alpha \geq \alpha_0=\max\set{a, 2}$. By the assumption~\eqref{eq:assy_type_gron}, $x_{\alpha ^n/a}\leq x_{\alpha ^{n-1}}=y_{n-1}$, and we have
$$
y_n\leq b_n x_0 +\frac{C}{\alpha ^n}x_{\alpha ^n/a}\leq b_n x_0 +\frac{C}{\alpha ^n}y_{n-1}.
$$ 
Besides, there exists a continuous function $C(\alpha)$ such that $y_n\leq C(\alpha) z_n$, where $z_n={C^n}/{\alpha ^{n(n+1)/2}}z_0$ is a sequence defined by the induction relation
$ z_n={C}/{\alpha ^n}z_{n-1}$. Going back to $x_R$, we deduce that for any $n\in \NN\setminus{0}$ and $R=\alpha^n$, we have
\begin{align}\label{eq:decroissance_logarithmique_x_n}
 x_R &\leq C(\alpha) e^{-C'\frac{\log(R)^2}{\log(\alpha)}+C''\frac{\log(R)}{\log(\alpha)}}
x_0\leq C(\alpha)e^{-C'\frac{\log(R)^2}{\log(\alpha)}}x_0.
\end{align}
As~\eqref{eq:decroissance_logarithmique_x_n} holds true for any $\alpha\in\com{\alpha_0, \alpha_0^2}$, we deduce that there exists $C\geq 0$ independent of $\alpha$, but depending in general on $a$, such that for any $R\geq 2$,  
\begin{align*}
 x_R &\leq  Ce^{-C'{\log(R)^2}}x_0,
\end{align*}
which concludes the proof of the lemma.

\end{proof}

%%%%%%%%%%%%%%%%%%%%%%%%%%%%%%%%%%%%%%%%%%%%%%%%%%%%%%%%%%%%%%%%%%%%%%%%%%%%%%%%%%%%%%%%%%%%%%%%%%%%%%%%%%%%%%%%%%

\section{Proof of Theorem~\ref{th:point_fixe} (Existence of ground states)}\label{sec:existence_GS}

Let us now establish the existence of a ground state for the perturbed crystal in the rHF framework. The proof of Theorem~\ref{th:point_fixe} is a consequence of our results on the operator $\L$ stated in the last section, and of the properties of the higher-order term in the expansion of $Q_f$ for a charge distribution $f\in L^2_\unif(\RRd)$.

To solve the self-consistent equation~\eqref{eq:SCE}, we first formulate the system in terms of the response electronic density $\rho=\rho_\gamma-\rho_{\gamma_0}$ as follow 
\begin{align}\label{eq:SCF_density}
\left\{\begin{array}{l}
       \rho =\rho_{Q}\\[0,2cm]
\displaystyle
 Q=\1_{H_0+V_\nu\leq 0}-\1_{H_0\leq 0} \\[0,2cm]
\displaystyle
-\Delta V_\nu+m^2V_\nu=\av{S^{d-1}}\bra{\rho-\nu}.
       \end{array}
\right.
\end{align}
Indeed, if ${\rho}$ is solution of~\eqref{eq:SCF_density}, then $\gamma=\1\bra{H_0+Y_m*({\rho}-\rho_{\gamma_0}-\nu)\leq 0}$ solves~\eqref{eq:SCE}. 
For a charge density $f\in L^2_\unif(\RRd)$, we expand
\begin{align*}
 Q_f=\1\bra{H_0+Y_m*f\leq 0}-\1\bra{H_0\leq 0}
\end{align*}
as powers of $f$ when $f$ is small. For this purpose, we assume that
\begin{align*}
 {\rm d}(\C,\sigma(H_0))\geq g,
\end{align*}
where $g={\rm d}(0,\sigma(H_0))$ and $\C$ is now a smooth curve in the complex plane enclosing the whole spectrum of $H_0$ below $0$ and crossing the real line at $0$ and at some point $c<\inf \sigma(H_0)-g$ (see Figure~\ref{fig:gap}).
Let us recall that for $V\in L^\ii(\RRd)$,  $\sigma\bra{H_0+V}\subset \sigma\bra{H_0}+\com{-\norm{V}_{L^\ii},\norm{V}_{L^\ii}}$. Therefore if $\norm{V}_{L^\ii}<g,$ then $H_0+V$ has a gap around $0$ and  $\sigma\bra{H}\subset [\inf \sigma\bra{H_0} -g,+\ii)$. 
For such a $V$, we have using Cauchy's residue formula, 
\begin{align*}
Q=\1\bra{H_0+V\leq 0}-\1\bra{H_0\leq 0} =\frac{1}{2i\pi}\oint_\C \frac{1}{z-H_0-V}dz-\frac{1}{2i\pi}\oint_\C \frac{1}{z-H_0} dz.
\end{align*}
By the resolvent formula, we obtain
\begin{align*}
Q
% these &=\frac{1}{2i\pi}\oint_\C \frac{1}{z-H_0} V\frac{1}{z-H_0-V}dz\\
&=\frac{1}{2i\pi}\oint_\C \frac{1}{z-H_0} V\frac{1}{z-H_0}dz+ \frac{1}{2i\pi}\oint_\C \bra{\frac{1}{z-H_0} V}^2\frac{1}{z-H_0-V}dz.
\end{align*}
Therefore for $f\in L^2_\unif(\RRd)$ such that $\norm{f*Y_m}_{L^\ii}<g$,
\begin{align}\label{eq:Q=Q_1+Q_2tilde}
 Q_f=Q_{1,f}+\widetilde{Q}_{2,f},
\end{align}
where $Q_{1,f}$ has been defined and studied in Section~\ref{sec:linear_response} and $\widetilde{Q}_{2,f}$ is defined by
$$
\widetilde{Q}_{2,f}=\frac{1}{2i\pi}\oint_\C \bra{\frac{1}{z-H_0}Y_m*f}^2\frac{1}{z-H_0-Y_m*f}\,dz.
$$ 
We give some properties of the second order term $\widetilde{Q}_{2,f}$ in Lemma~\ref{lemma:LetL2continus} below.  Using the decomposition~\eqref{eq:Q=Q_1+Q_2tilde}, equation~\eqref{eq:SCF_density} becomes
\begin{align}\label{eq:SCE_density}
\rho&=\rho_{Q_{1, \rho-\nu}}+\rho_{\widetilde{Q}_{2,\rho-\nu}}=-\L(\rho-\nu)+ \rho_{\widetilde{Q}_{2,\rho-\nu}}. 
\end{align}
Following ideas of~\cite{HLS-05}, we recast~\eqref{eq:SCE_density} as 
\begin{align}\label{eq:rho}
 \rho=\frac{\L}{1+\L}\nu+\frac{1}{1+\L}\rho_{\widetilde{Q}_2(\rho-\nu)}.
\end{align}
In Proposition~\ref{prop:G_contractante} below, we show that for $\nu$ small enough, the operator $\G_\nu: \rho\mapsto {\L}\bra{1+\L}^{-1}\nu+\bra{1+\L}^{-1}\rho_{\widetilde{Q}_2(\rho-\nu)}$ admits a fixed point, which is controlled in the $L^2_\unif$ norm by the nuclear perturbation $\nu$. This will conclude the proof of Theorem~\ref{th:point_fixe}.

\begin{lemma}[Properties of the second order term]\label{lemma:LetL2continus} 
There exists $\delta_c>0$ and $C\geq 0$ such that for any $f\in L^2_\unif(\RRd)$ satisfying $\norm{f}_{L^2_\unif}\leq  \delta_c$, the operator $ \widetilde{Q}_{2,f}$ is trace class, the density $\rho_{\widetilde{Q}_{2,f}}$ is in $L^2_\unif(\RRd)$ and
$$
\norm{\rho_{\widetilde{Q}_{2,f}}}_{L^2_\unif}\leq C\norm{f}_{L^2_\unif}^2.
$$
\end{lemma}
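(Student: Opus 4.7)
Set $V = Y_m * f$. By Lemma~\ref{lemma:Young} applied with $p=2$, we have $\|V\|_{L^\infty} \leq C\|f\|_{L^2_\unif}$, so choosing $\delta_c$ small enough that $C\delta_c < g$, where $g = \mathrm{d}(0,\sigma(H_0)) > 0$ by the gap assumption~\eqref{eq:Assumption_H_0_has_a_gap}, guarantees that $\sigma(H_0+V) \subset \sigma(H_0) + [-\|V\|_{L^\infty}, \|V\|_{L^\infty}]$ still has a spectral gap around $0$, so the contour $\C$ remains disjoint from $\sigma(H_0+V)$ and $(z-H_0-V)^{-1}$ is well-defined and uniformly bounded on $\C$. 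Because $V_\per + V$ remains $(-\Delta)$-bounded with relative bound zero (with constants depending only on $\delta_c$), Lemma~\ref{lemma:B(z)born�} yields $\|(-\Delta+1)(z-H_0-V)^{-1}\|_{\B} \leq C$ uniformly for $z \in \C$, in complete analogy with~\eqref{eq:B(z)uniformlybounded}.

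The next step is to establish that $\widetilde{Q}_{2,f}$ is locally trace class with a controllable density, by mimicking the duality argument in Step~3 of the proof of Theorem~\ref{th:props_de_L}. Fix $k \in \ZZd$ and a nonnegative $u \in L^\infty(\Gamma+k)$, and set $\chi = \sqrt{u} \in L^2(\RRd)$. Using cyclicity of the trace, write
$$\tr\bigl(\chi A(z) \chi\bigr) = \tr\Bigl(\bigl[\chi (z-H_0)^{-1}\bigr] \bigl[V(z-H_0)^{-1} V (z-H_0-V)^{-1}\chi\bigr]\Bigr),$$
where $A(z) = (z-H_0)^{-1} V (z-H_0)^{-1} V (z-H_0-V)^{-1}$ is the integrand of $\widetilde{Q}_{2,f}$. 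By KSS~\eqref{eq:KSS} combined with~\eqref{eq:B(z)uniformlybounded} one has $\|\chi(z-H_0)^{-1}\|_{\S_2} \leq C\|\chi\|_{L^2}$; analogously, decomposing $(z-H_0-V)^{-1}\chi = (z-H_0-V)^{-1}(-\Delta+1)\cdot(-\Delta+1)^{-1}\chi$ and combining the preceding resolvent bound with KSS on the $(-\Delta+1)^{-1}\chi$ factor gives $\|(z-H_0-V)^{-1}\chi\|_{\S_2} \leq C\|\chi\|_{L^2}$. The middle factors contribute $\|V\|_{L^\infty}^2$ times a uniform bound on $\|(z-H_0)^{-1}\|_{\B}$.

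Putting the pieces together via H\"older for Schatten norms,
$$\bigl|\tr(\chi A(z)\chi)\bigr| \leq C\,\|V\|_{L^\infty}^2\,\|\chi\|_{L^2}^2 = C\,\|V\|_{L^\infty}^2\,\|u\|_{L^1},$$
uniformly in $z \in \C$. Integrating over the compact contour yields that $\chi\widetilde{Q}_{2,f}\chi$ is trace class (hence $\widetilde{Q}_{2,f}$ is locally trace class) and
$$\left|\int_{\RRd} \rho_{\widetilde{Q}_{2,f}}\,u\right| \leq C\,\|V\|_{L^\infty}^2 \,\|u\|_{L^1}.$$
By linearity (decomposing an arbitrary $u$ into positive and negative parts), this gives $\|\rho_{\widetilde{Q}_{2,f}}\|_{L^\infty(\Gamma+k)} \leq C\|V\|_{L^\infty}^2$, and taking the supremum over $k$ together with $\|V\|_{L^\infty} \leq C\|f\|_{L^2_\unif}$ produces $\|\rho_{\widetilde{Q}_{2,f}}\|_{L^2_\unif} \leq \|\rho_{\widetilde{Q}_{2,f}}\|_{L^\infty} \leq C\|f\|_{L^2_\unif}^2$.

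The main technical obstacle is ensuring that all the Schatten estimates are uniform on $\C$ despite the perturbation $V$ being only in $L^\infty$ (not small in any norm that guarantees compactness); this is precisely what the spectral gap preservation, the choice $\delta_c < g/C$, and the relative boundedness input into Lemma~\ref{lemma:B(z)born�} provide. Everything else is a straightforward adaptation of the local trace class arguments already used in the proof of Theorem~\ref{th:props_de_L}\,\textit{(iii)}.
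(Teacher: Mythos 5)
Your proof is correct and takes essentially the same route as the paper's: choose $\delta_c$ via the Young-type bound $\|Y_m*f\|_{L^\infty}\leq C_0\|f\|_{L^2_\unif}$ so the gap persists, invoke Lemma~\ref{lemma:B(z)born�} for the uniform resolvent bound on $\C$, and then run the duality/Kato--Seiler--Simon argument of~\eqref{eq:rho_Q1fcontroleeparYf_B} with the perturbed resolvent $(z-H_0-V)^{-1}$ in place of one $(z-H_0)^{-1}$. The only difference is that the paper states this compactly ("using the exact same procedure as in the proof of~\eqref{eq:rho_Q1fcontroleeparYf_B}") whereas you have written out that procedure explicitly.
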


\begin{proof}
Since $\norm{Y_m*f}_{L^\ii}\leq C_0\norm{f}_{L^2_\unif}$ (see~\eqref{eq:Young}), we can choose $\delta_c= g/2C_0$, where, we recall that $g={\rm d}(0,\sigma(H_0))$. 
In this case, $(z-H-Y_m*f)^{-1}(-\Delta+1)$ and its inverse are uniformly bounded w.r.t $z\in\C$ (see Lemma~\ref{lemma:B(z)borné}). Using the exact same procedure as in the proof of~\eqref{eq:rho_Q1fcontroleeparYf_B}, we obtain
that $\widetilde{Q}_{2,f}$ is trace class, $\rho_{\widetilde{Q}_{2,f}}\in L^2_\unif(\RRd)$ and 
\begin{align*}
 \norm{\rho_{\widetilde{Q}_{2,f}}}_{L^2_\unif}&\leq C \norm{\oint_\C Y_m*f\frac{1}{z-H_0}Y_m*f\,dz}_{\B}\leq C\norm{Y_m*f}_{L^\ii}^2 \leq C\norm{f}_{L^2_\unif}^2,
\end{align*}
which concludes the proof of the lemma.
\end{proof}

\begin{proposition}\label{prop:G_contractante}
There exists $\alpha_c, \epsilon>0$ such that if $\norm{\nu}_{L^2_\unif}\leq \alpha_c$, then 
$$
\begin{array}{lrll}
 \G_\nu: & B_{L^2_\unif}(\epsilon)& \rightarrow &B_{L^2_\unif}(\epsilon)\\
 & \rho&\mapsto &  \frac{\L}{1+\L}\nu+\frac{1}{1+\L}\rho_{\widetilde{Q}_{2,\rho-\nu}}
\end{array}
$$
is well-defined and contracting on $B_{L^2_\unif}(\epsilon)=\set{f\in L^2_\unif(\RRd),\; \norm{f}_{L^2_\unif}\leq \epsilon}$.
Thus, it admits a unique fixed point $\rho$ in the ball $B_{L^2_\unif}(\epsilon)$. Moreover $\rho$ satisfies
\begin{equation}\label{eq:rho_controlee_par_nu}
 \norm{\rho}_{L^2_\unif}\leq C \norm{\nu}_{L^2_\unif},
\end{equation}
for a constant $C$ independent of $\nu$. 
\end{proposition}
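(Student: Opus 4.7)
The plan is to apply the Banach fixed point theorem on the closed ball $B_{L^2_\unif}(\epsilon)$, for suitably chosen parameters $\alpha_c$ and $\epsilon$. The two ingredients we need are (a) that $\G_\nu$ sends the ball into itself, and (b) that $\G_\nu$ is a strict contraction there. Both rely on the bounds already available: Theorem~\ref{th:props_de_L}\,(iii) gives that $\L/(1+\L)$ and $1/(1+\L)$ are bounded operators on $L^2_\unif(\RRd)$, with some operator norm $M$; and Lemma~\ref{lemma:LetL2continus} gives that, as long as $\norm{\rho-\nu}_{L^2_\unif}\leq \delta_c$, one has
$$\norm{\rho_{\widetilde{Q}_{2,\rho-\nu}}}_{L^2_\unif}\leq C\norm{\rho-\nu}_{L^2_\unif}^2.$$

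To obtain the stability of the ball, I would estimate, for $\rho\in B_{L^2_\unif}(\epsilon)$ with $\epsilon+\alpha_c\leq \delta_c$,
$$\norm{\G_\nu(\rho)}_{L^2_\unif}\leq M\norm{\nu}_{L^2_\unif}+ MC\bra{\epsilon+\norm{\nu}_{L^2_\unif}}^2\leq M\alpha_c+MC(\epsilon+\alpha_c)^2.$$
Choosing first $\epsilon>0$ small enough that $2MC\epsilon<1/2$, then $\alpha_c>0$ small enough that $M\alpha_c+MC(\epsilon+\alpha_c)^2\leq \epsilon$ and $\epsilon+\alpha_c\leq \delta_c$, gives the self-mapping property.

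The contractivity step is the main obstacle, since it requires a Lipschitz-type estimate for the nonlinear map $f\mapsto \rho_{\widetilde{Q}_{2,f}}$. I would write, by the resolvent identity,
$$\widetilde{Q}_{2,f_1}-\widetilde{Q}_{2,f_2}=\frac{1}{2i\pi}\oint_\C R_0(z)\,Y_m*(f_1-f_2)\,R_0(z)\,Y_m*f_1\,R_{f_1}(z)\,dz+\cdots,$$
with $R_0(z)=(z-H_0)^{-1}$, $R_f(z)=(z-H_0-Y_m*f)^{-1}$, the missing terms being the two analogous ones obtained by moving $f_1-f_2$ to the second and third factors (with appropriate choice of $f_1$ or $f_2$). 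Each term is of the schematic form "three resolvents and three potentials, one of which is $Y_m*(f_1-f_2)$". The arguments proving~\eqref{eq:rho_Q1fcontroleeparYf_B} and Lemma~\ref{lemma:LetL2continus}, based on the Kato--Seiler--Simon inequality, Lemma~\ref{lemma:B(z)born�}, and the bound $\norm{Y_m*f}_{L^\ii}\leq C\norm{f}_{L^2_\unif}$, carry over verbatim to give
$$\norm{\rho_{\widetilde{Q}_{2,f_1}}-\rho_{\widetilde{Q}_{2,f_2}}}_{L^2_\unif}\leq C\bra{\norm{f_1}_{L^2_\unif}+\norm{f_2}_{L^2_\unif}}\norm{f_1-f_2}_{L^2_\unif},$$
provided $\norm{f_i}_{L^2_\unif}\leq \delta_c$. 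Applied to $f_i=\rho_i-\nu$, this yields
$$\norm{\G_\nu(\rho_1)-\G_\nu(\rho_2)}_{L^2_\unif}\leq 2MC(\epsilon+\alpha_c)\norm{\rho_1-\rho_2}_{L^2_\unif},$$
and the prefactor is $<1$ once $\epsilon,\alpha_c$ have been chosen small enough (tightening the previous choice if necessary).

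The Banach fixed point theorem then furnishes a unique fixed point $\rho\in B_{L^2_\unif}(\epsilon)$. Finally, to establish~\eqref{eq:rho_controlee_par_nu} I would bootstrap from the fixed point equation: $\rho=\G_\nu(\rho)$ gives
$$\norm{\rho}_{L^2_\unif}\leq M\norm{\nu}_{L^2_\unif}+MC(\norm{\rho}_{L^2_\unif}+\norm{\nu}_{L^2_\unif})^2,$$
and absorbing the quadratic term using $\norm{\rho}_{L^2_\unif}\leq \epsilon$ (with $\epsilon$ small) on the right-hand side produces $\norm{\rho}_{L^2_\unif}\leq C\norm{\nu}_{L^2_\unif}$ with a constant independent of $\nu$. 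The uniqueness in the ball, together with the identification $\gamma=\1(H_0+Y_m*(\rho-\rho_{\gamma_0}-\nu)\leq 0)$ described before the proposition, then yields Theorem~\ref{th:point_fixe}.
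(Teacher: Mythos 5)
Your proposal is correct and follows essentially the same route as the paper: Banach fixed point on a small ball, with the self-mapping bound from Lemma~\ref{lemma:LetL2continus} and Theorem~\ref{th:props_de_L}\,(iii), and the contraction bound obtained from the same three-term telescoping of $\widetilde{Q}_{2,\rho-\nu}-\widetilde{Q}_{2,\rho'-\nu}$ into resolvent strings that the paper writes out in equation~\eqref{eq:decomposition_Q2tilde}. The only differences are cosmetic bookkeeping (the paper parameterizes $\alpha_c=A\epsilon$ and tracks the constants more explicitly, whereas you pick $\epsilon$ then $\alpha_c$ sequentially), and the final bootstrapping argument for~\eqref{eq:rho_controlee_par_nu} is identical in spirit.
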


\begin{proof}
% these \mathieu{je pense que ca peut etre reduit dans la version finale}
We want to use Lemma~\ref{lemma:LetL2continus} to show that $\G$ is well-defined on a small ball of $L^2_\unif(\RRd)$. Here, the  charge distribution is $f=\rho- \nu$. We thus need to choose $\alpha_c$ and $\epsilon$  such that 
$\norm{\rho-\nu}_{L^2_\unif}\leq \norm{\rho}_{L^2_\unif}+\norm{\nu}_{L^2_\unif}\leq \epsilon+ \alpha_c\leq\delta_c,$ 
where $\delta_c$ is given by Lemma~\ref{lemma:LetL2continus}. Let $A>0$, $0 <\epsilon\leq \delta_c/(1+A)$ and $\alpha_c=A\epsilon$. Let $\nu$ and $\rho$ such that $\norm{\nu}_{L^2_\unif}\leq \alpha_c$ and $\norm{\rho}_{L^2_\unif}\leq \epsilon$. By Lemma~\ref{lemma:LetL2continus} and the fact that $\L$ and $1/(1+\L)$ are bounded on $L^2_\unif(\RRd)$ (see Theorem~\ref{th:props_de_L}), we have
\begin{align}\label{eq:rho_controlée_par_nu}
 \norm{\G_\nu(\rho)}_{L^2_\unif}&\leq \norm{\frac{\L}{1+\L}}_{\B(L^2_\unif)}\norm{\nu}_{L^2_\unif}+ \norm{\frac{1}{1+\L}}_{\B(L^2_\unif)}\norm{\rho_{\widetilde{Q}_{2,\rho-\nu}}}_{L^2_\unif}\nonumber\\
&\leq C_1\norm{\nu}_{L^2_\unif}+C_2\norm{\rho-\nu}_{L^2_\unif}^2\leq \bra{C_1A+ C_2(1+A)^2\epsilon}\epsilon.
\end{align}
We choose $A<{1}/{C_1}$ such that for $\epsilon\leq (1-AC_1)/(C_2(1+A)^2)$, we have
\begin{align*}
\norm{\G_\nu(\rho)}_{L^2_\unif}&\leq \epsilon.
\end{align*}
To show that $\G_\nu$ is contracting on $B_{L^2_\unif}(\epsilon)$ for $\epsilon$ small enough, we use the explicit expression of $\widetilde{Q}_{2,\rho-\nu}$. Let $\rho,\rho' \in B_{L^2_\unif}(\epsilon)$ and denote by $H=H_0+Y_m*(\rho-\nu)$ and $H'=H_0+Y_m*(\rho'-\nu)$. The function $(1+\L)\bra{\G_\nu(\rho)-\G_\nu(\rho') }$ is the density associated with the operator 
\begin{align*}
&\frac{1}{2i\pi}\oint_\C\bra{\frac{1}{z-H_0}Y_m*(\rho-\nu) }^2\frac{1}{z-H}-\bra{\frac{1}{z-H_0}Y_m*(\rho'-\nu)}^2\frac{1}{z-H'} \,dz.
\end{align*}
A straightforward calculation shows that this operator can be written as
\begin{align}\label{eq:decomposition_Q2tilde}
&\frac{1}{2i\pi}\oint_\C\bra{\frac{1}{z-H_0}Y_m*(\rho-\nu)} ^2\frac{1}{z-H}Y_m*(\rho-\rho')\frac{1}{z-H'}\nonumber\\
&\qquad \qquad\qquad+\frac{1}{z-H_0}Y_m*(\rho-\nu) \frac{1}{z-H_0}Y_m*(\rho-\rho')\frac{1}{z-H'}\nonumber\\
&\qquad \qquad\qquad+\frac{1}{z-H_0}Y_m*(\rho-\rho') \frac{1}{z-H_0}Y_m*(\rho'-\nu)\frac{1}{z-H'}dz.
\end{align}
Using the same techniques as before, we deduce that 
\begin{align*}
 \norm{\G_\nu(\rho)-\G_\nu(\rho') }_{L^2_\unif}
% these &\leq C\bra{\norm{\rho-\nu}_{L^2_\unif}^2+\norm{\rho-\nu}_{L^2_\unif}+\norm{\rho'-\nu}_{L^2_\unif}}\norm{\rho-\rho'}_{L^2_\unif}\\
&\leq C_3\bra{ \norm{\rho}_{L^2_\unif}+ \norm{\rho'}_{L^2_\unif}+ \norm{\nu}_{L^2_\unif}  }\norm{\rho-\rho'}_{L^2_\unif} \\
&\leq C_3\bra{2+A}\epsilon \norm{\rho-\rho'}_{L^2_\unif}.
\end{align*}
Taking, in addition, $\epsilon< 1/(C_3(2+A))$, we have that $\G_\nu$ is contracting on $B_{L^2_\unif}(\epsilon)$. Let $\rho$ be the unique fixed point of $\G_\nu$ in $B_{L^2_\unif}(\epsilon)$. It remains to prove~\eqref{eq:rho_controlee_par_nu}. By~\eqref{eq:rho_controlée_par_nu}, we have
\begin{align*}
 \norm{\rho}_{L^2_\unif}=\norm{\G_\nu(\rho)}_{L^2_\unif}\leq C_1\norm{\nu}_{L^2_\unif}+ C_2(1+A)\epsilon \bra{\norm{\rho}_{L^2_\unif}+\norm{\nu}_{L^2_\unif} }.
\end{align*}
Therefore 
% these \begin{align*}
$\bra{1- C_2(1+A)\epsilon} \norm{\rho}_{L^2_\unif}\leq \bra{C_1+  C_2(1+A)\epsilon}\norm{\nu}_{L^2_\unif}$.
% these \end{align*}
Using that $\epsilon\leq \bra{1-AC_1}/\bra{C_2(1+A)^2}$, we have $1- C_2(1+A)\epsilon>0$ and we deduce that 
\begin{align*}
 \norm{\rho}_{L^2_\unif}\leq \frac{C_1+  C_2(1+A)\epsilon}{1- C_2(1+A)\epsilon}\norm{\nu}_{L^2_\unif}\leq \frac{1}{A} \norm{\nu}_{L^2_\unif},
\end{align*}
which concludes the proof of the proposition. 
\end{proof}

%%%%%%%%%%%%%%%%%%%%%%%%%%%%%%%%%%%%%%%%%%%%%%%%%%%%%%%%%%%%%%%%%%%%%%%%%%%%%%%%%%%%%%%%%%%%%%%%%%%%%%%%%%%%%%%%%%

\section{Proofs of Theorem~\ref{th:decay} and Proposition~\ref{prop:loc} (Decay estimates)}\label{sec:decay_rate}

We present in this section the proofs of Theorem~\ref{th:decay} and Proposition~\ref{prop:loc}. They consist in decay estimates of the mean-field potential $V_\nu$ and the mean-field density $\rho_\nu$. These estimates are used later on in the proofs of Theorems~\ref{th:limite_thermo} and~\ref{th:expansion}. 

\subsection{Proof of Theorem~\ref{th:decay}}

\begin{proof}[Proof of Theorem~\ref{th:decay}]
Assume that $\norm{\nu}_{L^2_\unif}\leq \alpha_c$, where $\alpha_c$ is given in Theorem~\ref{th:point_fixe}. We use the notation $\rho$ to denote the mean-field density $\rho_\nu=\rho_{\gamma_\nu-\gamma_0}$, the solution of~\eqref{eq:rho}, and denote by $V=V_\nu=Y_m*(\rho-\nu)$.
Recall the decomposition~\eqref{eq:SCE_density} of $\rho$ in a linear term and a higher order term 
\begin{align*}
 \rho=-\L\bra{\rho-\nu}+ \rho_{\widetilde{Q}_{2,\rho-\nu}}.
\end{align*}
Using localizing functions, we will show that $\rho$ decays far from the support of $\nu$. To do so, let us introduce the set $I=\set{k\in \ZZd,\; \text{supp}(\nu)\cap B(0,1)+k\neq \emptyset }$ and for $R\geq 1$, the set $B_R=B_{I,R}
% these =\cup_{k\in I}\bra{B(0,R)+ k}
$ and the the function $\chi_R=\chi_{I,R}$ defined in Lemma~\ref{lemma:estimee_com_L2unif}.
% these
 %They satisfy $0 \leq \chi_R\leq 1$, $\chi_R\equiv 1$ on $B_R$, $\chi_R\equiv 0$ outside $B_{2R}$ and  $R\av{\nabla\chi_R(x)}+R^2\av{\Delta\chi_R(x)}\leq C$ for a constant $C\geq 0$ independent of the set $I$ (thus  independent of $\nu$).
 We denote by $\eta_R=1-\chi_R$.
We thus have
\begin{align*}
\eta_R\rho&=-\eta_R\L(\rho-\nu)+\eta_R\rho_{\widetilde{Q}_{2,\rho-\nu}}=  -\L\eta_R(\rho-\nu)+\com{\L,\eta_R}(\rho-\nu)+\eta_R\rho_{\widetilde{Q}_{2,\rho-\nu}}.
\end{align*}
As for $R\geq 1$, $\eta_R\nu=0$, it follows 
\begin{align}\label{eq:diviser_eta_RrhoQ}
\eta_R\rho&= \frac{1}{\bra{1+\L}}\com{\L,\eta_R}(\rho-\nu)+\frac{1}{\bra{1+\L}}\eta_R\rho_{\widetilde{Q}_{2,\rho-\nu}}.
\end{align} 
We will successively bound each term of the RHS of~\eqref{eq:diviser_eta_RrhoQ}. For the first term, we have by Lemma~\ref{lemma:estimee_com_L2unif} for $R\geq 2$,  
\begin{align}\label{eq:borne_commutateur}
 \norm{\frac{1}{\bra{1+\L}}\com{\L,\eta_R}(\rho-\nu)  }_{L^2_\unif}&\leq \frac{C}{R}\bra{e^{-C'R}\norm{\rho-\nu}_{L^2_\unif}+ \norm{\1_{B_{3R}\setminus B_{R/2}}\bra{\rho-\nu}}_{L^2_\unif}}\nonumber\\
&\leq \frac{C}{R}\bra{e^{-C'R}\norm{\nu}_{L^2_\unif}+ \norm{\1_{B_{3R}\setminus B_{R/2}}\rho}_{L^2_\unif}},
\end{align}
where we have used that $\1_{B_{3R}\setminus B_{R/2}}\nu=0$ for $R\geq 2$, that $\rho$ is controlled by $\nu$ in the $L^2_\unif$ norm and that $1/(1+\L)$ is bounded on $L^2_\unif(\RRd)$.
% these a remettre et recorriger ci-haut attention label!
%, we obtain 
% \begin{align}\label{eq:borne_commutateur}
%  \norm{ \frac{1}{\bra{1+\L}}\com{\L,\eta_R}(\rho-\nu)}_{L^{2}_\unif}& \leq  \frac{C}{R}e^{-C'R}\norm{\nu}_{L^2_\unif}+ \frac{C}{R}\norm{\1_{B_{3R}\setminus B_{R/2}}\rho}_{L^2_\unif}.
% \end{align}
As to the second term of the RHS of~\eqref{eq:diviser_eta_RrhoQ}, since $\1_{\RRd\setminus B_{R}}\eta_R=\eta_R$, we have
\begin{align}\label{eq:division_rho2_tilde}
\eta_R\widetilde{Q}_{2,\rho-\nu}
%&= \frac{1}{2i\pi}\oint_\C \eta_R \frac{1}{z-H_0} V \frac{1}{z-H_0} V \frac{1}{z-H}dz \nonumber\\
%&=\frac{1}{2i\pi}\oint_\C\frac{1}{z-H_0}\eta_R V \frac{1}{z-H_0} V \frac{1}{z-H}+ \com{\eta_R,\frac{1}{z-H_0}} V \frac{1}{z-H_0} V \frac{1}{z-H}dz\nonumber\\
&=\frac{1}{2i\pi}\oint_\C\frac{1}{z-H_0}\1_{\RRd\setminus B_{R}}V \frac{1}{z-H_0} \eta_{R} V \frac{1}{z-H}dz\nonumber\\
&\quad+\frac{1}{2i\pi}\oint_\C\frac{1}{z-H_0}\1_{\RRd\setminus B_{R}}V\com{\eta_{R}, \frac{1}{z-H_0}} V \frac{1}{z-H}dz\nonumber\\
&\quad+ \frac{1}{2i\pi}\oint_\C\com{\eta_R,\frac{1}{z-H_0}} V \frac{1}{z-H_0} V \frac{1}{z-H}dz,
\end{align}
where $H=H_0+V$ and $\C$ is as in the previous section. We recall that by the assumption $\norm{\nu}_{L^2_\unif}\leq \alpha_c$, the operator $H$ has a gap around $0$, thus the operator $(z-H)^{-1}(-\Delta+1)$ and its inverse are uniformly bounded on $\C$ and all the estimates obtained in the previous sections hold when we replace $H_0$ by $H$. We denote by $r_3$, $r_4$ and $r_5$ the densities associated with the three operators of the RHS of~\eqref{eq:division_rho2_tilde} respectively. 
Using an inequality similar to~\eqref{eq:rho_Q1fcontroleeparYf_B}, involving $H$ instead of $H_0$ in the resolvent in the right, we have
\begin{align*}
 \norm{r_3}_{L^2_\unif}&\leq C\int_\C\norm{\1_{\RRd\setminus B_{R}}V\frac{1}{z-H_0}V\eta_{R}}_{\B}dz \leq C\norm{V\1_{\RRd\setminus B_{R}}}_{L^\ii}\norm{V\eta_{R}}_{L^\ii}. 
\end{align*}
By~\eqref{eq:estimee_com_L2unif} in Lemma~\ref{lemma:estimee_com_L2unif}, and using that $\norm{Y_m*f}_{H^2_\unif}=\norm{f}_{L^2_\unif}$, we have that for $R\geq 2$
\begin{align}\label{eq:eta_R_V}
% these remettre version détaillée pour l'équation suivante.
 \norm{\eta_RV}_{H^2_\unif}\leq \norm{\eta_R\rho}_{L^2_\unif}+ \frac{C}{R}\bra{e^{-C'R}\norm{\nu}_{L^2_\unif}+\norm{ \1_{ B_{3R}\setminus B_{R/2} }\rho}_{L^2_\unif}}.
\end{align}
Therefore 
\begin{align*}
 \norm{r_3}_{L^2_\unif}
% these &\leq \norm{\1_{\RRd\setminus B_R}V}_{L^\ii} \bra{C\norm{ Y_m*(\eta_R \rho)}_{H^2_\unif}+ \frac{C}{R}e^{-C'R}\norm{\nu}_{L^2_\unif} +\frac{C}{R} \norm{  \1_{ B_{3R}\setminus B_{R/2} }\rho}_{L^2_\unif}}\nonumber\\
&\leq  C\norm{\1_{\RRd\setminus B_{R}}V}_{L^\ii} \bra{C\norm{ \eta_R \rho}_{L^2_\unif}+ \frac{C}{R}e^{-C'R}\norm{\nu}_{L^2_\unif}+
\frac{C}{R}\norm{ \1_{ B_{3R}\setminus B_{R/2} } \rho}_{L^2_\unif}}.
\end{align*}
To bound $r_4$ and $r_5$, we recall that we have shown in the proof of~\eqref{eq:estimee_com_L2unif} (see~\eqref{eq:r_21} and~\eqref{eq:r_22_unif}) that for any $f\in L^2_\unif(\RRd)$
\begin{align*}
&\norm{\bra{-\Delta+1}^{-\frac12}\com{\eta_R,\Delta}\frac{1}{z-H_0}Y_m*\bra{\1_{(\RRd\setminus B_{3R})\cup B_{2R}}f}}_{\S_2}\\
&\quad\quad+\norm{\com{\eta_R,\Delta}\frac{1}{z-H_0}Y_m*\bra{\1_{B_{3R}\setminus B_{2R}}f}}_{\B}\leq \frac{C}{R} \bra{ e^{-C'R}\norm{f}_{L^2_\unif}+\norm{\1_{ B_{3R}\setminus B_{R/2}}f}_{L^2_\unif} }.
\end{align*}
Therefore, using again the equality $\com{\eta_R,(z-H_0)^{-1}}=-(z-H_0)^{-1}\com{\eta_R,\Delta}(z-H_0)^{-1}$, and an inequality similar to~\eqref{eq:rho_Q1fcontroleeparYf_B}, we obtain that for any $R\geq 2$,
\begin{align}\label{eq:r4}
 \norm{r_4}_{L^2_\unif}&\leq C\oint_{\C}\norm{\1_{\RRd\setminus B_R}V}_{L^\ii}\norm{\frac{1}{z-H_0}\com{\eta_R,\Delta}\frac{1}{z-H_0}V}_{\B}dz\nonumber\\
&\leq \frac{C}{R} \norm{\1_{\RRd\setminus B_R}V}_{L^\ii}\bra{ e^{-C'R}\norm{\nu}_{L^2_\unif}+\norm{\1_{ B_{3R}\setminus B_{R/2}}\rho}_{L^2_\unif} }.
\end{align}
The last term of the RHS of~\eqref{eq:division_rho2_tilde} can be written $Q_{\rm in}+Q_{\rm out}$, where 
$$Q_{\rm in}=\frac{1}{2i\pi}\oint_\C\com{\eta_R,\frac{1}{z-H_0}} Y_m*\bra{\1_{B_{3R}\setminus B_{2R} }\bra{\rho-\nu}} \frac{1}{z-H_0} V \frac{1}{z-H}dz.$$
In the same way we obtained~\eqref{eq:r4}, we get 
\begin{align*}
 \norm{\rho_{Q_{\rm in}}}_{L^2_\unif}
%&\leq C\oint_{\C} \norm{ \com{\eta_R,\Delta} \frac{1}{z-H_0} Y_m*\bra{\1_{B_{3R}\setminus B_{2R} }\bra{\rho-\nu}}  }_{\B} \norm{\frac{1}{z-H_0}V}_{\B} dz\nonumber\\
&\leq \frac{C}{R}\norm{V}_{L^\ii} \bra{ e^{-C'R}\norm{\nu}_{L^2_\unif}+\norm{\1_{ B_{3R}\setminus B_{R/2}}\rho}_{L^2_\unif} }.
\end{align*}
To estimate $\rho_{Q_{\rm out}}$, we recall that by~\eqref{eq:generic_A}, we have that for any $A\in \S_2(L^2(\RRd))$ 
\begin{align*}
\norm{ \rho_{\bra{-\Delta+1}^{-\frac{1}{2}}A\bra{-\Delta+1}^{-\frac{1}{2}}}}_{L^2}
\leq C\norm{A}_{\S_2}. 
\end{align*}
Therefore 
\begin{align*}
 \norm{\rho_{Q_{\rm out}}}_{L^2_\unif}&\leq C\oint_{\C}  \norm{ (-\Delta+1)^{-\frac12}\com{\eta_R,\Delta} \frac{1}{z-H_0} Y_m*\bra{\1_{(\RRd\setminus B_{3R})\cup B_{2R} }\bra{\rho-\nu}}  }_{\S_2}\nonumber\\
& \qquad\times\norm{\frac{1}{z-H_0}V(1-\Delta)^{-\frac12}}_{\B} dz\nonumber\\
&\leq \frac{C}{R}\norm{V}_{L^\ii} \bra{ e^{-C'R}\norm{\nu}_{L^2_\unif}+\norm{\1_{ B_{3R}\setminus B_{R/2}}\rho}_{L^2_\unif} }.
\end{align*}
Now that we have found estimates on $r_3$, $r_4$ and $r_5=\rho_{Q_{\rm in}}+\rho_{Q_{\rm out}}$, we use that 
\begin{align*}
\norm{\1_{\RRd\setminus B_{R}}V}_{L^\ii}\leq \norm{V}_{L^\ii}\leq C\norm{\rho-\nu}_{L^2_\unif}\leq C\norm{\nu}_{L^2_\unif}\leq C\alpha_c,
\end{align*}
to estimate $\eta_R\rho_{\widetilde{Q}_{2,\rho-\nu}}$ as follow
\begin{align}\label{eq:etaRrhoQ2tilde}
 \norm{\eta_R\rho_{\widetilde{Q}_{2,\rho-\nu}}}_{L^2_\unif}&\leq C\alpha_c\norm{ \eta_R \rho}_{L^2_\unif}+ \frac{C}{R}e^{-C'R}\norm{\nu}_{L^2_\unif}+\frac{C}{R}\norm{ \1_{ B_{3R}\setminus B_{R/2} } \rho}_{L^2_\unif}.
\end{align}
Using once more that $1/(1+\L)$ is bounded on $L^2_\unif(\RRd)$, we deduce 
% these 
% the following bound on the second term of the RHS of~\eqref{eq:diviser_eta_RrhoQ}
% \begin{align}\label{eq:borne_rho_Q_2_tilde}
%  \norm{\frac{1}{1+\L}\eta_R\rho_{\widetilde{Q}_{2,\rho-\nu}}}_{L^2_\unif}&\leq C_0\alpha_c \norm{ \eta_R \rho}_{L^2_\unif}+ \frac{C}{R}e^{-C'R}\norm{\nu}_{L^2_\unif}\nonumber\\
% &+\frac{C}{R}\norm{ \1_{ B_{3R}\setminus B_{R/2} } \rho}_{L^2_\unif}.
% \end{align}
in view of~\eqref{eq:diviser_eta_RrhoQ},~\eqref{eq:borne_commutateur} and~\eqref{eq:etaRrhoQ2tilde}
% these enlever celui qui est avant
%~\eqref{eq:borne_rho_Q_2_tilde}, we obtain
\begin{align*}
 \norm{\eta_R\rho}_{L^2_\unif}&\leq C_0\alpha_c\norm{ \eta_R \rho}_{L^2_\unif}+ \frac{C}{R}e^{-C'R}\norm{\nu}_{L^2_\unif}+\frac{C}{R}\norm{ \1_{ B_{3R}\setminus B_{R/2} } \rho}_{L^2_\unif}.
\end{align*}
We choose $\alpha_c'\leq\min\set{ 1/(2C_0),\alpha_c}$, 
%these where $C_0$ is defined in~\eqref{eq:borne_rho_Q_2_tilde},
and assume that $\norm{\nu}_{L^2_\unif}\leq \alpha_c'$. It follows 
\begin{align*}
 \norm{\eta_R\rho}_{L^2_\unif}\leq  \frac{C}{R}e^{-C'R}\norm{\nu}_{L^2_\unif}+\frac{C}{R}\norm{ \1_{ B_{3R}\setminus B_{R/2} } \rho}_{L^2_\unif}.
\end{align*}
We have a similar inequality for $V$. Indeed, by~\eqref{eq:eta_R_V}, we have
\begin{align}\label{eq:recurence_V}
 \norm{\eta_RV}_{H^2_\unif}
% these  &\leq \norm{Y_m*(\eta_R\bra{\rho-\nu})}_{H^2_\unif}+ \frac{C}{R}e^{-C'R}\norm{\rho-\nu}_{L^2_\unif}+\frac{C}{R}\norm{ \1_{ B_{3R}\setminus B_{R/2} }\bra{\rho-\nu}}_{L^2_\unif} \nonumber\\
&\leq \norm{\eta_R\rho}_{L^2_\unif}+ \frac{C}{R}e^{-C'R}\norm{\nu}_{L^2_\unif}+\frac{C}{R}\norm{ \1_{ B_{3R}\setminus B_{R/2} }\rho}_{L^2_\unif}\nonumber\\
&\leq\frac{C}{R}e^{-C'R}\norm{\nu}_{L^2_\unif}+\frac{C}{R}\norm{ \1_{ B_{3R}\setminus B_{R/2} } \rho}_{L^2_\unif}.
\end{align}
Using Lemma~\ref{lemma:type_gronwall} with $x_R$ to  $\norm{\1_{\RRd\setminus B_R}\rho}_{L^2_\unif}$, we obtain
\begin{align}\label{eq:decay_rho}
\norm{\eta_R\rho}_{L^2_\unif}\leq C e^{-C'\bra{\log R}^2}\norm{\nu}_{L^{2}_\unif}.
\end{align}
Inserting~\eqref{eq:decay_rho} in~\eqref{eq:recurence_V}, we get
\begin{align*}
  \norm{\eta_R V}_{H^2_\unif}\leq C e^{-C'\bra{\log R}^2}\norm{\nu}_{L^{2}_\unif}.
\end{align*}
Finally, noticing that $\1_{\RRd\setminus C_R(\nu)}\leq \eta_{R/2}$, we conclude the proof of~\eqref{eq:decayL2unif}.

\end{proof}

We now turn to the

%%%%%%%%%%%%%%%%%%%%%%%%%%%%%%%%%%%%%%%%%%%%%%%%

\subsection{Proof of Proposition~\ref{prop:loc}}

\begin{proof}[Proof of Proposition~\ref{prop:loc}]
Assume that $\norm{\nu}_{L^2_\unif}\leq \alpha_c$, where $\alpha_c$ is given in Theorem~\ref{th:point_fixe}. 
As $\rho_\nu$ and $\rho_{\nu_L}$ are fixed points of the functionals $\G_{\nu}$ and $\G_{\nu_L}$ respectively, then 
\begin{align*}
 \rho_\nu -\rho_{\nu_L}&=\frac{\L}{1+\L}(\nu-\nu_L)+\frac{1}{1+\L}\rho_{\widetilde{Q_2}(\rho_\nu-\nu)-\widetilde{Q_2}(\rho_{\nu_L}-\nu_L)}.
\end{align*}
For $R\geq 1$, let $\chi_{R}=\chi_{\set{0},R}$ and $B_R=B_{\set{0},R}$ as defined in Lemma~\ref{lemma:estimee_com_L2unif}. Since $\1_{B_{R}}\leq \chi_{R}$, then
\begin{align*}
 \norm{\1_{B_{R}}  \bra{\rho_\nu -\rho_{\nu_L}}}_{L^2_\unif}&\leq  \norm{ \chi_{R} \bra{\rho_\nu -\rho_{\nu_L}}}_{L^2_\unif}\\
&\leq  \norm{ \chi_{R}\frac{1}{1+\L} \bra{\L\bra{\nu-\nu_L}+ \rho_{\widetilde{Q_2}(\rho_\nu-\nu)-\widetilde{Q_2}(\rho_{\nu_L}-\nu_L)} }}_{L^2_\unif}.
\end{align*}
Besides, there exists $C\geq 0$ such that for any $f\in L^2_\unif(\RRd)$ and any $R\geq 1$,
\begin{align}\label{eq:estimee_com_L2unif1/1+L}
\norm{\com{\chi_R,\frac{1}{1+\L}}f}_{L^2_\unif}\leq \frac{C}{R}\bra{ e^{-C'R}\norm{f }_{L^2_\unif}+\norm{\1_{B_{3R}}\frac{1}{1+\L}f }_{L^2_\unif} }.
\end{align} 
 Indeed, using that $1/(1+\L)$ is bounded on $L^2_\unif(\RRd)$ and estimate~\eqref{eq:estimee_com_L2unif} in Lemma~\ref{lemma:estimee_com_L2unif} (notice that $\L\chi_R-\chi_R\L=\eta_R\L-\L\eta_R$), we obtain
\begin{align*}
 \norm{\com{\chi_R,\frac{1}{1+\L}}f}_{L^2_\unif}&=\norm{\frac{1}{1+\L}\com{ \L,\chi_R }\frac{1}{1+\L}f }_{L^2_\unif}\\
&\leq C\norm{\com{ \L,\chi_R }\frac{1}{1+\L}f }_{L^2_\unif}\\
&\leq \frac{C}{R}\bra{ e^{-C'R}\norm{\frac{1}{1+\L}f }_{L^2_\unif}+\norm{\1_{B_{3R}\setminus B_{R/2}}\frac{1}{1+\L}f }_{L^2_\unif} }\\
&\leq \frac{C}{R}\bra{ e^{-C'R}\norm{f }_{L^2_\unif}+\norm{\1_{B_{3R}\setminus B_{R/2}}\frac{1}{1+\L}f }_{L^2_\unif} }.
\end{align*}
Using~\eqref{eq:estimee_com_L2unif1/1+L} for $f=\L\bra{\nu-\nu_L}+ \rho_{\widetilde{Q_2}(\rho_\nu-\nu)-\widetilde{Q_2}(\rho_{\nu_L}-\nu_L)}$, we have
\begin{align}\label{eq:chi1/1+Lr_el}
\norm{\chi_{R}\bra{\rho_{\nu}-\rho_{\nu_L}}}_{L^2_\unif}
&\leq \norm{ \frac{1}{1+\L}\chi_{R}\L\bra{\nu-\nu_L}}_{L^2_\unif}
+\norm{ \frac{1}{1+\L}\chi_{R}\rho_{\widetilde{Q_2}(\rho_\nu-\nu)-\widetilde{Q_2}(\rho_{\nu_L}-\nu_L)}}_{L^2_\unif}\nonumber\\
&\quad+\frac{C}{R}e^{-C'R}\norm{\nu}_{L^2_\unif}+ \frac{C}{R}\norm{\1_{B_{3R}} \bra{\rho_{\nu}-\rho_{\nu_L}} }_{L^2_\unif}.
\end{align}
We first bound the first term of the RHS of~\eqref{eq:chi1/1+Lr_el}. Using~\eqref{eq:estimee_com_L2unif} in Lemma~\ref{lemma:estimee_com_L2unif} and that for $R\leq L/4$ it holds that $\chi_{R}(\nu-\nu_L)=0$, we have for $R\leq L/4$ 
 \begin{align}\label{eq:L/L+1nu-nu_L}
&\norm{ \frac{1}{1+\L}\chi_{R}\L\bra{\nu-\nu_L}}_{L^2_\unif}\leq C \norm{\chi_{R}\L(\nu-\nu_L)}_{L^2_\unif}\nonumber\\
&\qquad\qquad\leq C\norm{\chi_{R}(\nu-\nu_L)}_{L^2_\unif}+ \frac{C}{R}e^{-C'R}\norm{\nu }_{L^2_\unif}+\frac{C}{R} \norm{\1_{B_{3R} }\bra {\nu-\nu_L} }_{L^2_\unif}\nonumber\\
&\qquad\qquad\leq \frac{C}{R}e^{-C'R}\norm{\nu }_{L^2_\unif}+ \frac{C}{R} \norm{\1_{B_{3R} }\bra {\nu-\nu_L} }_{L^2_\unif}.
 \end{align}
% these a enlever
We turn to the second term of the RHS of~\eqref{eq:chi1/1+Lr_el}. Using~\eqref{eq:estimee_com_L2unif}, a decomposition similar to~\eqref{eq:decomposition_Q2tilde} and reasoning as in the proof of~\eqref{eq:etaRrhoQ2tilde}, we find for $R\leq L/4$
\begin{align*}
& \norm{ \frac{1}{1+\L}\chi_{R}\rho_{\widetilde{Q_2}(\rho_\nu-\nu)-\widetilde{Q_2}(\rho_{\nu_L}-\nu_L)}}_{L^2_\unif}
\leq C_0 \bra{\norm{\nu}_{L^2_\unif}+\norm{\nu}_{L^2_\unif}^2}\\
&\qquad\qquad\qquad\times\left(\norm{\chi_{R}(\rho_{\nu}-\rho_{\nu_L}) }_{L^2_\unif}+  \frac{C}{R}  e^{-C'R}\norm{\nu}_{L^2_\unif}+ \frac{C}{R}\norm{\1_{B_{3R} }f_L}_{L^2_\unif} \right).
\end{align*}
We choose $\alpha_c'\leq\alpha_c$ such that $C_0(\alpha_c'+{\alpha_c'}^2)\leq 1/2 $. Thus, if $\norm{\nu}_{L^2_\unif}\leq \alpha_c'$ then 
\begin{align}\label{eq:rho_Q_2tilde}
 &\norm{\frac{1}{1+\L}\chi_{R}\rho_{\widetilde{Q_2}(\rho_\nu-\nu)-\widetilde{Q_2}(\rho_{\nu_L}-\nu_L)}}_{L^2_\unif}\leq \frac12 \norm{\chi_{R}(\rho_{\nu}-\rho_{\nu_L}) }_{L^2_\unif}\nonumber\\
&\qquad\qquad\qquad\qquad\qquad\qquad\qquad
+ \frac{C}{R} \bra{ e^{-C'R}\norm{\nu}_{L^2_\unif} + \norm{\1_{B_{3R} }f_L}_{L^2_\unif} }.
\end{align}
In this case, combining~\eqref{eq:chi1/1+Lr_el},~\eqref{eq:L/L+1nu-nu_L} and~\eqref{eq:rho_Q_2tilde}, we obtain for $R\leq L/4$
 \begin{align*}
&\norm{\1_{B_{R}}\bra{\rho_\nu-\rho_{\nu_L}}}_{L^2_\unif}\leq  \norm{\chi_{R}\bra{\rho_\nu-\rho_{\nu_L}}}_{L^2_\unif}\\
&\qquad\qquad\leq\frac{C}{R}\bra{ e^{-C'R}\norm{\nu}_{L^2_\unif}+ \norm{\1_{B_{3R} }\bra{\rho_\nu-\rho_{\nu_L}} }_{L^2_\unif}+\norm{\1_{B_{3R} }\bra{\nu-{\nu_L}} }_{L^2_\unif}}.
 \end{align*}
Using a recursion argument, we easily see that for any $\beta \geq 1$, there exists $C\geq 0$ such that 
\begin{align*}
\norm{\rho_\nu-\rho_{\nu_L}}_{L^2_\unif(B_{L/4^\beta})}&\leq   \frac{C}{L^\beta } e^{-C'L}\norm{\nu}_{L^2_\unif}+ \frac{C}{L^\beta }\norm{\1_{B_L }\bra{\rho_\nu-\rho_{\nu_L}} }_{L^2_\unif}\nonumber\\
& \quad+\frac{C}{L^\beta }\norm{\1_{B_L }\bra{\nu-{\nu_L}} }_{L^2_\unif} \leq \frac{C}{L^\beta }\norm{\nu}_{L^2_\unif}.
\end{align*}
To conclude the proof of the proposition, it remains to prove the bound on the potential. Using~\eqref{eq:estimee_com_L2unif} and denoting by $f_L=\rho_\nu-\rho_{\nu_L}-\nu+\nu_L$, we have 
\begin{align*}
& \norm{V_\nu-V_{\nu_L}}_{H^2_\unif(B_{L/{4^\beta}})}
\leq \norm{\chi_{L/{4^\beta}}Y_m*f_L}_{H^2_\unif}\\
% these &\qquad\qquad\leq \norm{Y_m*\bra{\chi_{L/{4^\beta}}(\rho_\nu-\rho_{\nu_L})}}_{H^2_\unif}+\frac{C}{L}\bra{e^{-C'L}\norm{\nu}_{L^2_\unif}+ \norm{\1_{B_{3L/{4^\beta}} }f_L} _{L^2_\unif}}\\
&\qquad\qquad\leq C\norm{\chi_{L/{4^\beta}}(\rho_\nu-\rho_{\nu_L})}_{L^2_\unif}+\frac{C}{L}\bra{e^{-C'L}\norm{\nu}_{L^2_\unif}+ \norm{\1_{B_{3L/{4^\beta}} }f_L} _{L^2_\unif}}\\
&\qquad\qquad\leq \frac{C}{L^\beta}\norm{\nu}_{L^2_\unif}+\frac{C}{L}\bra{e^{-C'L}\norm{\nu}_{L^2_\unif}+ \frac{C}{L^{\beta-1}}\norm{\nu}_{L^2_\unif}}
%\\
%&\qquad\qquad
\leq \frac{C}{L^\beta}\norm{\nu}_{L^2_\unif}.
\end{align*}

\end{proof}

%%%%%%%%%%%%%%%%%%%%%%%%%%%%%%%%%%%%%%%%%%%%%%%%%%%%%%%%%%%%%%%%%%%%%%%%%%%%%%%%%%%%%%%%%%%%%%%%%%%%%%%%%%%%%%%%%%

\section{Proof of Theorem~\ref{th:limite_thermo} (Thermodynamic limit)}\label{sec:thermo_limit}
\begin{proof}[Proof of  Theorem~\ref{th:limite_thermo}]
Assume that $\norm{\nu}_{L^2_\unif}\leq \alpha_c$, where $\alpha_c$ is given by Proposition~\ref{prop:loc}. 
By Cauchy's formula, we have
$$
\gamma_\nu-\gamma_{\nu_L}=\frac{1}{2i\pi}\int_\C \frac{1}{z-H_0-V_\nu}-\frac{1}{z-H_0-V_{\nu_L}}dz,
$$
where the curve $\C$ is as in Section~\ref{sec:existence_GS}.
We write the resolvent difference as
\begin{align*}
&\frac{1}{z-H_0-V_\nu}-\frac{1}{z-H_0-V_{\nu_L}}=\frac{1}{z-H_0-V_\nu}Y_m*f_L\frac{1}{z-H_0-V_{\nu_L}},
%\nonumber\\
%&=\frac{1}{z-H_0-V_\nu} \eta_{L/8}Y_m*(\rho_\nu-\nu -\rho_{\nu_L}+ \nu_L)\frac{1}{z-H_0-V_{\nu_L}}\nonumber\\
%&+\frac{1}{z-H_0-V_\nu} \chi_{L/8}Y_m*(\rho_\nu-\nu -\rho_{\nu_L}+ \nu_L)\frac{1}{z-H_0-V_{\nu_L}}.  
\end{align*} 
where $f_L=\rho_\nu-\nu -\rho_{\nu_L}+ \nu_L$. For a compact set $B\subset \RRd$, we have  
\begin{align*}
&\tr\av{\1_B \frac{1}{z-H_0-V_\nu} Y_m*f_L\frac{1}{z-H_0-V_{\nu_L}}\1_B}\leq C \norm{ \1_B \frac{1}{z-H_0-V_\nu}Y_m*f_L}_{\S_2}.
\end{align*}
For $L$ large enough, we have $B\subset B(0,L/8)$ and, by Proposition~\ref{prop:loc},
\begin{align*}
 \norm{ \1_B \frac{1}{z-H_0-V_\nu}\1_{B(0,L/4)}Y_m*f_L}_{\S_2}&\leq\norm{\1_B \frac{1}{z-H_0-V_\nu} }_{\S_2} \norm{\1_{B(0,L/4)}Y_m*f_L}_{L^\ii} \\& \leq\frac{C}{L}\norm{\nu}_{L^2_\unif}.
\end{align*}
Besides, as ${\rm d}(B, \RRd\setminus B_{L/4})\geq L/8$, we have using Lemma~\ref{lemma:CT},
\begin{align*}
 \norm{ \1_B \frac{1}{z-H_0-V_\nu}\1_{\RRd\setminus B(0,L/4)}Y_m*f_L}_{\S_2}&\leq Ce^{-C'L}\norm{\1_{\RRd\setminus B(0,L/4)}Y_m*f_L}_{L^\ii}\\
& \leq \frac{C}{L}\norm{\nu}_{L^2_\unif}.
\end{align*} 
As $\C$ is a compact set and all the estimates are uniform on $\C$, we conclude that 
\begin{align*}
 \norm{\1_B\bra{\gamma_\nu-\gamma_{\nu_L}}\1_B}_{\S_1}\leq \frac{C}{L}\norm{\nu}_{L^2_\unif}\cvL 0.
\end{align*}

\end{proof}
%%%%%%%%%%%%%%%%%%%%%%%%%%%%%%%%%%%%%%%%%%%%%%%%%%%%%%%%%%%%%%%%%%%%%%%%%%%%%%%%%%%%%%%%%%%%%%%%%%%%%%%%%%%%%%%%%%

\section{Proof of Theorem~\ref{th:expansion} (Expansion of the density of states)}\label{sec:expansion}

The proof of Theorem~\ref{th:expansion} follows essentially the proof of~\cite[Theorem 1.1]{Klopp-95}. The main difference is the proof of Proposition~\ref{prop:definition_termes_expansion+reste} below, which deals with self-consistent potentials, while~\cite[Proposition 2.1]{Klopp-95} deals with linear potentials. Treating nonlinear potentials is done at the price of assuming that the defect $\chi$ is small in the $L^2_\unif$-norm, so that the potential decays fast enough. 
For the sake of self-containment, we mention here the main steps of the proof; more details can be found in~\cite{these}.

\begin{proof}[Proof of Theorem~\ref{th:expansion} ]

Following~\cite{Klopp-95}, we first express the density of states  of the random operator $H_p(\o)$ in terms of the resolvent $(z-H_p)^{-1}$ for $z\in\CC$. We next find an asymptotic expansion of $\tv{(z-H_p)^{-1}}$ using a thermodynamic limit procedure. 
 
We recall the Helffer-Sjostrand formula~\cite{HelSjo, Davies2}. For a self-adjoint operator $A$ and $\phi\in\mathcal S(\RR)$, we have
$$
\phi(A)=-\frac{1}{\pi}\int_\CC \frac{\partial \widetilde{\phi}}{\partial \overline{z}}(z) \frac{1}{z-A}\,dx\,dy,
$$
where $\widetilde{\phi}:\CC\rightarrow \CC$ is an appropriate complex extension of $\phi$ such that 

\renewcommand{\labelenumi}{(\roman{enumi})}
\begin{enumerate}
% these 
 %\item $\widetilde{\phi}=\phi$ on $\RR$,
%\item $\mbox{supp}(\widetilde{\phi})\subset \set{z\in \CC,\; \av{\rm Im (z)}<1}$,
\item $\widetilde{\phi}\in \mathcal S(\set{z\in \CC,\; \av{\rm Im (z)}<1})$,
  \item for any $n\in\NN$ and $\alpha,\beta \geq 0$, one has
  \begin{equation}\label{eq:estimee_extention}
  \sup_{\av{y}<1}\mathcal N_{\alpha,\beta}\bra{x\mapsto \bra{\av{y}^{-n}\frac{\partial \widetilde{\phi}}{\partial \overline{z}}(x+iy)}}\leq C_{n,\alpha,\beta}\sup_{\beta'\leq n+\beta+2\atop \alpha'\leq \alpha}\mathcal N_{\alpha',\beta'}(\phi),
  \end{equation}
  where $\mathcal N_{\alpha,\beta}(\phi)=\sup_{x\in \RR}\av{x^\alpha\frac{\partial ^\beta \phi}{\partial x^\beta}}$.
\end{enumerate}
Hence, for $\phi\in\mathcal S(\RR)$, 
\begin{align*}
\langle n_p- n_0, \phi \rangle&=\int_\RR\phi(x) n_p(dx)-\int_\RR\phi(x) n_0(dx)=\tr\bra{\phi(H_p)-\phi(H_0)}\\
&=-\frac{1}{\pi}\tv{\int_\CC \frac{\partial \widetilde{\phi}}{\partial \overline{z}}(z) \bra{\frac{1}{z-H_p}-\frac{1}{z-H_0}} \,dx\,dy}.
\end{align*}
Besides, denoting by $V_p=V_{\nu_p}$, we have
\begin{align*}
\frac{1}{z-H_p}-\frac{1}{z-H_0}&=\frac{1}{z-H_0}V_p\frac{1}{z-H_p}
%\\&= \bra{-\Delta+1}^{-1}\bra{-\Delta+1}\frac{1}{z-H_0}V_p\frac{1}{z-H_p}\bra{-\Delta+1}\bra{-\Delta+1}^{-1}.
\end{align*}
Therefore, using the Kato-Seiler-Simon inequality~\eqref{eq:KSS} and Lemma~\ref{lemma:B(z)borné}, we obtain
\begin{align*}
 \av{\tv{\frac{1}{z-H_p}-\frac{1}{z-H_0}}}&\leq \norm{\1_{\Gamma}  \bra{-\Delta+1}^{-1}}_{\S_2}\norm{\bra{-\Delta+1}\frac{1}{z-H_0}}_{\B}\norm{V_p}_{L^\ii(\O\times \RRd)}\nonumber\\
&\times\norm{\frac{1}{z-H_p}\bra{-\Delta+1}}_{\B} \norm{ \bra{-\Delta+1}^{-1} \1_{\Gamma} }_{\S_2}\nonumber\\
&\leq C\bra{\frac{1+\av{z}}{\av{{\rm Im}(z)} }}^2\norm{V_p}_{L^\ii(\O\times \RRd)}.    
\end{align*}
By Fubini's theorem, we get
\begin{equation}\label{eq:diff_densite_detat_Helffer}
\langle n_p- n_0, \phi \rangle=-\frac{1}{\pi}\int_\CC \frac{\partial \widetilde{\phi}}{\partial \overline{z}}(z) \tv{\frac{1}{z-H_p}-\frac{1}{z-H_0}} \,dx\,dy.
\end{equation}
In the following, we find the asymptotic expansion of $\tv{(z-H_p)^{-1}-(z-H_0)^{-1}}$ as $p\rightarrow 0$ for $z\in\set{\CC\setminus\RR, \;\av{{\rm Im}(z)}\leq 1}$. 
To use a thermodynamic limit procedure, we consider, for each realization $\o\in \O$ and each box size $L\in2\NN+1$, the system with defects only in the box ${\Gamma} _L$, that is, we consider the defect distribution $\nu_{K_L(\o)}(x)$, with $K_L(\o)=\set{k\in\ZZd\cap {\Gamma} _L,\; q_k(\o)=1}$. For $K\subset\ZZd$, we recall the notation  $\nu_{K}= \sum_{k\in K}\chi(\cdot-k)$, $V_K=V_{\nu_K}=Y_m*(\rho_{\nu_K}-\nu_K)$ and $H_K=H_0+ V_{K}$ . By the proof of Theorem~\ref{th:limite_thermo}, we have, almost surely,
$$
\tr\bra{\1_{\Gamma} \bra{\frac{1}{z-H_p(\o)}-\frac{1}{z-H_{K_L(\o)}}}\1_{\Gamma} }\cvL 0.
$$
Besides, from~\eqref{eq:rho_controle_par_nu} and~\eqref{eq:Young}, it follows
\begin{align*}
\av{ \tr\bra{\1_{\Gamma} \bra{\frac{1}{z-H_p(\o)}-\frac{1}{z-H_{K_L(\o)}}}\1_{\Gamma} }}&\leq C \bra{\frac{1+\av{z}}{{\av{\rm Im}(z)}}}^2\norm{V_p(\o,\cdot)-V_{K_L}(\o,\cdot)}_{L^\ii} \\
&\leq C \bra{\frac{1+\av{z}}{\av{{\rm Im}(z) }}}^2\norm{\chi}_{L^2},
\end{align*}
The dominated converge theorem thus gives
$$
\EE\bra{\tr\bra{\1_{\Gamma} \bra{\frac{1}{z-H_p}-\frac{1}{z-H_{K_L}}}\1_{\Gamma} }}\cvL 0,
$$
and
\begin{align}\label{eq:limite_thermo}
 \tv{\frac{1}{z-H_p}-\frac{1}{z-H_0}}=\lim_{\cL} \EE\bra{\tr\bra{\1_{\Gamma} \bra{\frac{1}{z-H_{K_L}}-\frac{1}{z-H_0}}\1_{\Gamma} }}. 
\end{align}
Let $L\in 2\NN+1$ and $N=L^d$. As the random variable $\tr(\1_{\Gamma} ((z-H_{K_L})^{-1}-(z-H_0)^{-1})\1_{\Gamma}) $ depends only on the $N$ independent Bernoulli random variables $(q_k)_{k\in \ZZd\cap {\Gamma} _L}$, we have 
% these 
\begin{align*}
& \EE\bra{\tr\bra{\1_{\Gamma} \bra{\frac{1}{z-H_{K_L}}-\frac{1}{z-H_0}}\1_{\Gamma} }}\\
&\qquad\qquad\qquad  =\sum_{K\subset \ZZd\cap {\Gamma} _L}\PP(K_L(\o)=K)\tr\bra{\1_{\Gamma} \bra{\frac{1}{z-H_K}-\frac{1}{z-H_0}}\1_{\Gamma} }\\
% these &\qquad\qquad\qquad  =\sum_{K\subset \ZZd\cap {\Gamma} _L} p^{ \av{K}} (1-p)^{N-\av{K}}\tr\bra{\1_{\Gamma} \bra{\frac{1}{z-H_K}-\frac{1}{z-H_0}}\1_{\Gamma} }\\
&\qquad\qquad\qquad  =\sum_{n=0}^{N}p^n(1-p)^{N-n}\sum_{K\subset \ZZd\cap {\Gamma} _L\atop \av{K}=n} \tr\bra{\1_{\Gamma} \bra{\frac{1}{z-H_K}-\frac{1}{z-H_0}}\1_{\Gamma} }.
\end{align*}
Expanding the term $(1-p)^{N-n}$ as powers of $p$ and rearranging the sums, we obtain 
\begin{align}\label{eq:expansion_a_L_fixe}
& \EE\bra{\tr\bra{\1_{\Gamma} \bra{\frac{1}{z-H_{K_L}}-\frac{1}{z-H_0}}\1_{\Gamma} }}\nonumber\\
% &\qquad\qquad\qquad  =\sum_{n=0}^{N}p^n\sum_{j=0}^{N-n}(-p)^{j}\binom{N-n}{j}\sum_{K\subset \ZZd\cap {\Gamma} _L\atop \av{K}=n} \tr\bra{\1_{\Gamma} \bra{\frac{1}{z-H_K}-\frac{1}{z-H_0}}\1_{\Gamma} }\nonumber\\
% these &\qquad\qquad\qquad =\sum_{n=0}^{N}p^n\sum_{j=n}^{N}(-p)^{j-n}\binom{N-n}{j-n}\sum_{K\subset \ZZd\cap {\Gamma} _L\atop \av{K}=n} \tr\bra{\1_{\Gamma} \bra{\frac{1}{z-H_K}-\frac{1}{z-H_0}}\1_{\Gamma} }\nonumber\\
  &\qquad\qquad\qquad = \sum_{j=0}^{N}p^{j}\sum_{n=0}^{j}(-1)^{j-n}\binom{N-n}{j-n}\sum_{K\subset \ZZd\cap {\Gamma} _L\atop \av{K}=n} \tr\bra{\1_{\Gamma} \bra{\frac{1}{z-H_K}-\frac{1}{z-H_0}}\1_{\Gamma} }\nonumber\\
&\qquad\qquad\qquad = \sum_{j=0}^{J}a_{j,L}p^{j}+R_{J,L}(z,p),
\end{align}
where we have denoted the $j^{\rm th}$ order term by 
\begin{align*}
 a_{j,L}(z)&=\sum_{n=0}^{j}\binom{N-n}{j-n}\sum_{K\subset \ZZd\cap {\Gamma} _L\atop \av{K}=n} (-1)^{j-n}\tr\bra{\1_{\Gamma} \bra{\frac{1}{z-H_K}-\frac{1}{z-H_0}}\1_{\Gamma} }\\
&= \sum_{  K\subset \ZZd\cap {\Gamma} _L\atop \av{K}=j } \sum_{K'\subset K }(-1)^{\av{K\setminus K'}} \tr\bra{\1_{\Gamma} \bra{\frac{1}{z-H_{K'}}-\frac{1}{z-H_0}}\1_{\Gamma} }
\end{align*}
and the remainder of the series by 
\begin{align*}
 R_{J,L}(z,p)=\sum_{n=0}^{N}p^n(1-p)^{N-n}\sum_{K\subset \ZZd\cap {\Gamma} _L\atop \av{K}=n} \tr\bra{\1_{\Gamma} \bra{\frac{1}{z-H_K}-\frac{1}{z-H_0}}\1_{\Gamma} }-\sum_{j=0}^{J}a_{j,L}p^{j}.
%= \sum_{j=J+1}^{N}p^{j}\sum_{n=0}^{j}(-1)^{j-n}\binom{N-n}{j-n}\sum_{K\subset \ZZd\cap {\Gamma} _L\atop \av{K}=n} \tr\bra{\1_{\Gamma} \bra{\frac{1}{z-H_K}-\frac{1}{z-H_0}}\1_{\Gamma} }.
\end{align*}
The result will now follow from the next proposition, whose proof is postponed until the end of the proof of the theorem.

\begin{proposition}[Estimates on $a_{j,L}$ and $R_{J,L}$]\label{prop:definition_termes_expansion+reste}
There exists $\alpha_c>0$ such that\\
$\bullet$ for $j \leq 2$, there exists $C\geq 0$ such that for any $\chi\in L^2(\RRd)$ satisfying ${\rm supp }(\chi)\subset \Gamma $ and $\norm{\chi}_{L^2}\leq \alpha_c$ and any $z\in \CC\setminus \RR$, 
\begin{equation}\label{eq:estimee_aj_prop}
\sum_{K\subset \ZZd\atop \av{K}=j}\av{ \sum_{K'\subset K }(-1)^{\av{K\setminus K'}} \tr\bra{\1_{\Gamma} \bra{\frac{1}{z-H_{K'}}-\frac{1}{z-H_0}}\1_{\Gamma} } }\leq C\norm{\chi}_{L^2}\bra{\frac{1+\av{z}}{\av{\rm Im (z)}}}^{j+1+jd}.
\end{equation}
$\bullet$ for $J\leq 2$, there exists $C\geq 0$ such that for any $\chi\in L^2(\RRd)$ satisfying ${\rm supp }(\chi)\subset \Gamma $ and $\norm{\chi}_{L^2}\leq \alpha_c$, $z\in \CC\setminus \RR$,  $p\in [0,1]$ and $L\in 2\NN+1$
\begin{equation}\label{eq:estimee-reste}
\av{R_{J,L}(z,p)}\leq C\norm{\chi}_{L^2}p^{J+1}\bra{\frac{1+\av{z}}{\av{\rm Im (z)}}}^{(J+2)(d+1)}.
\end{equation}
\end{proposition}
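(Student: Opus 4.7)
Set $R_K = (z-H_K)^{-1}$ and $F(K) = \tr(\1_{\Gamma}(R_K - R_0)\1_{\Gamma})$, so that the alternating sum in~\eqref{eq:estimee_aj_prop} is $\Delta F(K) := \sum_{K'\subset K}(-1)^{|K\setminus K'|}F(K')$. The telescoping identity used to derive~\eqref{eq:expansion_a_L_fixe} also rewrites the remainder as $R_{J,L}(z,p) = \sum_{K\subset\ZZd\cap\Gamma_L,\,|K|\geq J+1}\Delta F(K)\,p^{|K|}$. Following the linear argument of~\cite{Klopp-95}, the plan is to expand $R_K - R_0$ as a Born series in $V_K$, exploit the inclusion-exclusion structure of $\Delta F$ to cancel ``separable'' contributions, and bound the remaining terms via Theorem~\ref{th:decay}, Proposition~\ref{prop:cor:loc}, the Combes-Thomas estimate (Lemma~\ref{lemma:CT}), the resolvent bound $\|(z-H_K)^{-1}(1-\Delta)\|_\B\leq C(1+|z|)/|\mathrm{Im}(z)|$, and the Kato-Seiler-Simon inequality~\eqref{eq:KSS}. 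Powers of $(1+|z|)/|\mathrm{Im}(z)|$ accumulate from each resolvent factor and from each summation over $\ZZd$ carried out via the Combes-Thomas exponential (whose inverse length $c_2(z)$ itself behaves like $|\mathrm{Im}(z)|/(1+|z|)$).

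\textbf{Bounds on $a_{j,L}$ for $j=1,2$.} For $j=1$, $\Delta F(\{k\}) = F(\{k\}) = \tr(\1_\Gamma R_0 V_{\{k\}} R_{\{k\}}\1_\Gamma)$; inserting a partition of unity $\1=\sum_{\ell\in\ZZd}\1_{\Gamma+\ell}$ next to $V_{\{k\}}$ and applying Kato-Seiler-Simon together with Lemma~\ref{lemma:CT} yields a bound of the form $C\bra{(1+|z|)/|\mathrm{Im}(z)|}^2\sum_\ell e^{-c_2(z)|\ell|}\|V_{\{k\}}\|_{L^\infty(\Gamma+\ell)}$. Theorem~\ref{th:decay} provides super-polynomial decay of $\|V_{\{k\}}\|_{L^\infty(\Gamma+\ell)}$ in $|\ell-k|$; summing first over $\ell$ (absorbing the factor $|\mathrm{Im}(z)|^{-d}$) and then over $k\in\ZZd$ (using translation invariance of the background) gives the claimed estimate with exponent $j+1+jd=2+d$. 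For $j=2$, set $K=\{k_1,k_2\}$, $R=|k_1-k_2|$, and decompose $V_K = V_{\{k_1\}}+V_{\{k_2\}}+W_K$, where Proposition~\ref{prop:cor:loc} (applied with $\nu_1,\nu_2$ in each role) combined with the super-polynomial decay of Theorem~\ref{th:decay} gives $\|W_K\|_{L^\infty}\leq C_\beta R^{-\beta}\|\chi\|_{L^2}$ for any $\beta\geq 2$. Expanding $R_K-R_0 = R_0 V_K R_0 + R_0 V_K R_0 V_K R_K$ and substituting the decomposition of $V_K$, one checks that in the alternating sum $\Delta F(\{k_1,k_2\}) = F(K)-F(\{k_1\})-F(\{k_2\})$ every contribution that is separately linear in only $V_{\{k_1\}}$ or only $V_{\{k_2\}}$ cancels. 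What remains splits into (i) terms carrying at least one factor $W_K$, controlled by a power of $R^{-\beta}$, and (ii) cross-terms $\tr(\1_\Gamma R_0 V_{\{k_1\}} R_\star \cdots R_\star V_{\{k_2\}} R_\star \1_\Gamma)$ in which a resolvent bridges the supports of $V_{\{k_1\}}$ and $V_{\{k_2\}}$ and is bounded exponentially in $R$ by Lemma~\ref{lemma:CT}. Choosing $\beta>d$ makes the resulting double sum over $(k_1,k_2)\in(\ZZd)^2$ convergent and produces the exponent $j+1+jd=3+2d$.

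\textbf{Remainder and main obstacle.} For $J\leq 2$, split $R_{J,L} = \sum_{|K|=J+1}\Delta F(K)p^{|K|} + \sum_{|K|\geq J+2}\Delta F(K)p^{|K|}$. The leading piece factors as $p^{J+1}\sum_{|K|=J+1}\Delta F(K)$ and is controlled by the same scheme as for $j\leq 2$ extended one order higher: decompose $V_K = \sum_{k\in K}V_{\{k\}}+W_K$ with $\|W_K\|_{L^\infty}\lesssim(\mathrm{diam}\,K)^{-\beta}$ via iterated use of Proposition~\ref{prop:cor:loc}, cancel separable contributions via $\Delta F$, and bound the rest with Combes-Thomas. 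The tail $|K|\geq J+2$ is handled after factoring $p^{J+1}$ by a cruder Born expansion, in which each $V_{\{k\}}$ is estimated in $L^\infty$ by $C\|\chi\|_{L^2}$, each distant pair contributes a Combes-Thomas factor, and the residual weight $p^{|K|-J-1}\in[0,1]$ absorbs the combinatorial growth, so the series converges uniformly in $L$ and $p\in[0,1]$. Careful accounting of resolvent norms at each stage produces the exponent $(J+2)(d+1)$ in~\eqref{eq:estimee-reste}. The chief difficulty — and the sole substantive departure from~\cite{Klopp-95} — is the nonlinear correction $W_K$: it enjoys only polynomial decay (Proposition~\ref{prop:cor:loc}) rather than the exponential decay available in the purely linear setting, so $\beta$ must be chosen sufficiently large at each order, and the polynomial (from $W_K$) and exponential (from Combes-Thomas) decay mechanisms must be balanced with care. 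This is precisely what restricts the clean statement~\eqref{eq:estimee_aj_prop} to $j\leq 2$; extending to $j\geq 3$ appears feasible but would require a finer multilinear control of $V_K$ in terms of the individual $V_{\{k\}}$ than Proposition~\ref{prop:cor:loc} currently provides.
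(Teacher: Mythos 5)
Your treatment of the first part (the bounds on $a_{j,L}$) is essentially the paper's: you decompose $V_K$ into single-site potentials plus a correction $W_K$ that Proposition~\ref{prop:cor:loc} makes small in the separation, use the inclusion-exclusion structure of the alternating sum to cancel separable contributions, and pair Combes--Thomas estimates (Lemma~\ref{lemma:CT}) with the super-polynomial decay of Theorem~\ref{th:decay} to make the sums over $\ZZd$ convergent. This matches the paper's Lemma~\ref{lemma:bounds} and the subsequent case analysis for $j=1,2$, up to bookkeeping of the powers of $(1+|z|)/|\mathrm{Im}(z)|$.

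The remainder estimate is where your argument has a genuine gap. You correctly rewrite $R_{J,L}(z,p)=\sum_{|K|\geq J+1}\Delta F(K)p^{|K|}$, but your handling of the tail $|K|\geq J+2$ --- factoring out $p^{J+1}$ and appealing to the remark that ``the residual weight $p^{|K|-J-1}\in[0,1]$ absorbs the combinatorial growth'' --- is not a proof and is not true as an absolute-convergence statement. At $p=1$ the residual weights all equal $1$, so your claim would require $\sup_{L}\sum_{n\geq J+2}\sum_{|K|=n,\,K\subset\Gamma_L}|\Delta F(K)|<\infty$, and this absolute sum is not controlled by the decay estimates at your disposal: for clustered configurations of size $n$ near $\Gamma$ the quantity $|\Delta F(K)|$ has no small factor, and the number of such configurations grows without bound in $n$; the factor $p^{n-J-1}$ provides no help uniformly in $p\in[0,1]$. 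One would at minimum need a bound $|a_{n,L}|\leq C_n$ uniformly in $L$ for all $n$ (one order beyond what is proved), together with control of the growth of $C_n$ in $n$; neither is available, and the paper explicitly restricts the clean estimate~\eqref{eq:estimee_aj_prop} to $j\leq 2$.

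The paper instead sidesteps this issue entirely with a probabilistic reformulation. Writing
\begin{align*}
R_{J,L}(z,p) &= p^{J+1}\,\EE\bra{g_{J,L}(Y_L+J+1,\,z)},
\end{align*}
where $Y_L\sim\mathrm{Bin}(N-J-1,p)$ and $g_{J,L}(n,z)$ is the sum over configurations $|K|=n$ normalized by $\binom{N-J-1}{n-J-1}^{-1}$, the paper reduces the remainder bound to showing that $g_{J,L}(n,z)$ is bounded uniformly in $n$, $L$, and $p$; the bound on an expectation of a bounded random variable is then automatic and the combinatorial explosion is neutralized by the binomial weights summing to one. The uniform bound on $g_{J,L}$ is achieved by the decomposition into the ``multilinear'' remainder $P_{1,K}$ and the resolvent-expansion remainder $P_{2,K}$, each bounded by sorting the sum over $K$ according to the distances $\ell_0,\ell_1,\ell_2$ to the nearest defect sites. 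This averaging trick is the essential mechanism missing from your argument, and without it the claim about the tail cannot be salvaged at the stated level of generality (uniform in $p\in[0,1]$ and $L$).
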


We deduce from Proposition~\ref{prop:definition_termes_expansion+reste} that for any  $j\leq 2$, and  $z\in \CC\setminus \RR$,  $a_{j,L}(z)$ converges as $\cL$ to 
\begin{align}\label{eq:estimee_aj}
a_j(z)= \sum_{  K\subset \ZZd \atop \av{K}=j } \sum_{K'\subset K }(-1)^{\av{K\setminus K'}} \tr\bra{\1_{\Gamma} \bra{\frac{1}{z-H_{K'}}-\frac{1}{z-H_0}}\1_{\Gamma} },
\end{align}
and that for any $J\leq 2$ and $p\in [0,1]$, $R_{J,L}(z,p)$ converges, up to extraction, as $\cL$ to $R_J(z,p)$, which satisfies 
\begin{align*}
 \av{R_J(z,p)}\leq C\norm{\chi}_{L^2} p^{J+1}\bra{\frac{1+\av{z}}{\av{\rm Im (z)}}}^{(J+2)(d+1)}.
\end{align*}
Passing to the limit as $\cL$ for this subsequence in~\eqref{eq:expansion_a_L_fixe} and in view of~\eqref{eq:limite_thermo}, we obtain
\begin{equation*}
\tv{\frac{1}{z-H_p}-\frac{1}{z-H_0}}=\sum_{j=1}^J a_j(z)p^j+p^{J+1}R_J(z,p).
\end{equation*}
Going back to~\eqref{eq:diff_densite_detat_Helffer}, we thus have
$$
\langle n_p- n_0, \phi \rangle=\sum_{i=1}^J\bra{-\frac{1}{\pi}\int_\CC \frac{\partial \widetilde{\phi}}{\partial \overline{z}}(z)a_j(z)\,dx\,dy}p^j- \bra{\frac{1}{\pi}\int_\CC \frac{\partial \widetilde{\phi}}{\partial \overline{z}}(z)R_J(z,p)\,dx\,dy}p^{J+1}.
$$
% these 
%
%We now show that the terms of the expansion are of the form claimed in the theorem. Indeed, 
A simple calculation shows that 
\begin{align*}
 a_j(z)&=\frac{1}{j}\sum_{K\subset \ZZd\atop \av{K}=j,\; 0\in K}\sum_{k\in\ZZd}\sum_{K'\subset K+k }\bra{-1}^{\av{K\setminus K'}}\tr\bra{\1_{\Gamma}  \bra{ \frac{1}{z-H_{K'}}-\frac{1}{z-H_0} } \1_{\Gamma} }\\
%&=\frac{1}{j}\sum_{K\subset \ZZd\atop \av{K}=j,\; 0\in K}\sum_{k\in\ZZd}\sum_{K'\subset K }\bra{-1}^{\av{K\setminus K'}}\tr\bra{\1_{\Gamma}  \bra{ \frac{1}{z-H_{K'+k}}-\frac{1}{z-H_0} } \1_{\Gamma} }\\
%&=\frac{1}{j}\sum_{K\subset \ZZd\atop \av{K}=j,\; 0\in K}\sum_{k\in\ZZd}\sum_{K'\subset K }\bra{-1}^{\av{K\setminus K'}}\tr\bra{\1_{{\Gamma} +k} \bra{ \frac{1}{z-H_{K'}}-\frac{1}{z-H_0} } \1_{{\Gamma} +k}}\\
&=\frac{1}{j}\sum_{K\subset \ZZd\atop \av{K}=j,\; 0\in K}\sum_{K'\subset K }\bra{-1}^{\av{K\setminus K'}}\tr\bra{  \frac{1}{z-H_{K'}}-\frac{1}{z-H_0}}.
\end{align*}
Therefore, by the dominated convergence theorem for series, we obtain
\begin{align*}
-\frac{1}{\pi}\int_\CC \frac{\partial \widetilde{\phi}}{\partial \overline{z}}(z)a_j(z)\,dx\,dy&= \frac{1}{j}\sum_{K\subset \ZZd\atop \av{K}=j,\; 0\in K}\sum_{K'\subset K }\bra{-1}^{\av{K\setminus K'}}\tr\bra{\phi(H_{K'})-\phi(H_0)}\\
% these &=-\frac{1}{j}\sum_{K\subset \ZZd\atop \av{K}=j,\; 0\in K}\sum_{K'\subset K } \bra{-1}^{\av{K\setminus K'}}\int_\RR\phi(x)\xi'_{K'}(x)\,dx\\
&=\langle \mu_j,\phi\rangle.
\end{align*}
Moreover, using~\eqref{eq:estimee_extention},~\eqref{eq:estimee_aj_prop} and~\eqref{eq:estimee_aj}, we see that $\mu_j$
%the distribution defined by 
%$$\phi\mapsto \langle \mu_j,\phi \rangle=-\frac{1}{\pi}\int_\CC \frac{\partial \widetilde{\phi}}{\partial \overline{z}}(z)a_j(z)\,dx\,dy$$
is a distribution of order at most $ j+3+jd$. Finally, $\phi\mapsto -\frac{1}{\pi}\int_\CC \frac{\partial \widetilde{\phi}}{\partial \overline{z}}(z)R_J(z,p)\,dx\,dy$ 
defines a distribution of order at most $ J+4+(J+2)d $ and satisfies
$$
\av{\frac{1}{\pi}\int_\CC \frac{\partial \widetilde{\phi}}{\partial \overline{z}}(z)R_J(z,p)\,dx\,dy}\leq C_{J}\sup_{ \beta \leq J+4+(J+2)d \atop \alpha\leq (J+3)(d +1) }\mathcal N_{\alpha,\beta}(\phi).
$$
This concludes the proof of Theorem~\ref{th:expansion}.
\end{proof}

To complete the proof of Theorem~\ref{th:expansion}, we need to prove Proposition~\ref{prop:definition_termes_expansion+reste}. We first state and prove Lemma~\ref{lemma:bounds} which will be useful in the proof of Proposition~\ref{prop:definition_termes_expansion+reste}.

\begin{lemma}\label{lemma:bounds}
Let $H=-\Delta+W$, with $W\in L^2_\unif(\RRd)$. Then, for any $\beta\in \NN$ and any Borel set $B\subset \RRd$, there exists $C\geq 0$ and $C'>0$ such that for any $z\in \CC\setminus \RR$ and any $\nu,\nu'\in L^2_c(\RRd)$ satisfying $\norm{\nu}_{L^2_\unif},\norm{\nu'}_{L^2_\unif}\leq \alpha_c$,  ${R}={\rm d} ({\rm supp}(\nu),0)\geq 1$, ${R'}={\rm d} ({\rm supp}(\nu'),0)\geq 1$, ${D}={\rm d} ({\rm supp}(\nu),{\rm supp}(\nu'))\geq 1$, we have

\begin{equation}\label{eq:loc1}
 \norm{\1_{\Gamma}  \frac{1}{z-H}V_\nu}_{\S_2}\leq C\frac{1+\av{z}}{\av{{\rm Im}(z)}}\bra{ e^{-C'\bra{\log {R} }^2}+e^{-C'c_2(z){R}}}\norm{\nu}_{L^2_\unif},
\end{equation}

\begin{equation}\label{eq:loc2}
\norm{V_{\nu\1_B} \frac{1}{z-H}\bra{V_{\nu+{\nu'}}-V_\nu}}_{\B}\leq
 \frac{C}{\av{{\rm Im}(z)}}\bra{\frac{1}{{D}^\beta}+e^{-C'c_2(z){D}}}\norm{\nu}_{L^2_\unif}\bra{\norm{\nu}_{L^2_\unif}+\norm{{\nu'}}_{L^2_\unif}},
\end{equation}

\begin{align}\label{eq:loc2bis}
 \norm{1_{\Gamma}  \frac{1}{z-H}\bra{V_{\nu+{\nu'}}-V_\nu}}_{\S_2}
&\leq C\frac{1+\av{z}}{\av{{\rm Im}(z)}}\com{\frac{1}{{D}^\beta}+e^{-C'c_2(z){D}}
+ e^{-C'\bra{\log {R'} }^2}+e^{-C'c_2(z){R'}} }\nonumber\\
&\times \bra{\norm{\nu}_{L^2_\unif}+\norm{{\nu'}}_{L^2_\unif}} 
\end{align}
and

\begin{align}\label{eq:loc3}
 \norm{\1_{\Gamma}  \frac{1}{z-H}\bra{V_{\nu+{\nu'}}-V_\nu-V_{\nu'}}}_{\S_2}
&\leq C\frac{1+\av{z}}{\av{{\rm Im}(z)}}\bra{\frac{1}{{D}^{\beta}}\bra{e^{-C'\bra{\log \widetilde{R} }^2}+e^{-C'c_2(z)\widetilde{R}}}}\nonumber\\
&\times \bra{\norm{\nu}_{L^2_\unif}+\norm{{\nu'}}_{L^2_\unif}},
\end{align}
where  $\widetilde{R}=\min\set{{R},R' }$, $c_2(z)={\rm d}(z, \sigma(H))/ (1+\av{z})$ and where the constants $C$ and $C'$ depend on $W$ only through its ${L^2_\unif}$-norm. 
\end{lemma}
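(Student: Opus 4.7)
The plan is to prove each of the four inequalities by the same recipe: decompose the multiplication operator on the right of $(z-H)^{-1}$ using smooth cut-offs localized near $\mathrm{supp}(\nu)$ and $\mathrm{supp}(\nu')$, then combine four basic tools. Namely, (a) Lemma~\ref{lemma:CT}, which gives exponential off-diagonal decay of the resolvent kernel with rate $c_2(z)$; (b) Theorem~\ref{th:decay}, which gives $e^{-C'(\log R)^2}$-decay of $V_\mu$ away from $\mathrm{supp}(\mu)$ (converted to an $L^\infty$ bound by the Sobolev embedding $H^2_\unif(\RRd)\hookrightarrow L^\infty(\RRd)$ valid for $d\leq 3$); (c) Proposition~\ref{prop:cor:loc}, which gives polynomial closeness $\|V_{\nu+\nu'}-V_\nu\|_{H^2_\unif(C_{D/4^\beta}(\nu))}\leq C D^{-\beta}(\|\nu\|_{L^2_\unif}+\|\nu'\|_{L^2_\unif})$; and (d) the Kato-Seiler-Simon inequality~\eqref{eq:KSS} together with Lemma~\ref{lemma:B(z)born�}, used systematically to factor $\1_\Gamma(z-H)^{-1}=[\1_\Gamma(-\Delta+1)^{-1}][(-\Delta+1)(z-H)^{-1}]$, trading a Hilbert-Schmidt factor for a bounded one at the price of $(1+|z|)/|\mathrm{Im}(z)|$.

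For \eqref{eq:loc1}, introduce a smooth cut-off $\chi$ equal to $1$ on a $R/4$-neighborhood of $\mathrm{supp}(\nu)$ and write $V_\nu=\chi V_\nu+(1-\chi)V_\nu$. The support of $\chi$ lies at distance $\geq R/2$ from $\Gamma$, so Lemma~\ref{lemma:CT} applied to $\1_\Gamma(z-H)^{-1}\chi V_\nu$ produces the factor $e^{-C'c_2(z) R}$, while $\chi V_\nu$ is bounded in $L^\infty$ by $C\|\nu\|_{L^2_\unif}$ via~\eqref{eq:Young}. On the complementary support, Theorem~\ref{th:decay} together with the Sobolev embedding gives $\|(1-\chi)V_\nu\|_{L^\infty}\leq Ce^{-C'(\log R)^2}\|\nu\|_{L^2_\unif}$, and the resolvent factor $\|\1_\Gamma(z-H)^{-1}\|_{\S_2}\leq C(1+|z|)/|\mathrm{Im}(z)|$ is handled by tool~(d). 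For \eqref{eq:loc2}, set $U=V_{\nu+\nu'}-V_\nu$ and use $\|V_{\nu\1_B}\|_{L^\infty}\leq C\|\nu\|_{L^2_\unif}$. Choose $\chi$ equal to $1$ on a $D/4^\beta$-neighborhood of $\mathrm{supp}(\nu)$; then Proposition~\ref{prop:cor:loc} gives $\|\chi U\|_{L^\infty}\leq CD^{-\beta}(\|\nu\|_{L^2_\unif}+\|\nu'\|_{L^2_\unif})$, producing the $D^{-\beta}/|\mathrm{Im}(z)|$ contribution. The complementary term $V_{\nu\1_B}(z-H)^{-1}(1-\chi)U$ is treated by a second cut-off that splits $V_{\nu\1_B}$ into a piece supported in a $D/4^{\beta+1}$-neighborhood of $\mathrm{supp}(\nu)$ (for which Lemma~\ref{lemma:CT} yields $e^{-C'c_2(z)D}$) and a far piece, for which $V_{\nu\1_B}$ itself is super-polynomially small by Theorem~\ref{th:decay}, easily absorbed into the $D^{-\beta}$ bound.

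For \eqref{eq:loc2bis}, the same splitting of $U$ is used, but now the resolvent is tested against $\1_\Gamma$ rather than $V_{\nu\1_B}$. The part $\chi U$ on the $D/4^\beta$-neighborhood of $\mathrm{supp}(\nu)$ gives the $D^{-\beta}$ term. The part $(1-\chi)U$ is further split around $\mathrm{supp}(\nu')$ (located at distance $R'$ from $\Gamma$): on the piece near $\mathrm{supp}(\nu')$, $U$ is $L^\infty$-bounded by $C(\|\nu\|_{L^2_\unif}+\|\nu'\|_{L^2_\unif})$ and Lemma~\ref{lemma:CT} provides $e^{-C'c_2(z)R'}$; on the remaining far piece, both $V_{\nu+\nu'}$ and $V_\nu$ decay super-polynomially by Theorem~\ref{th:decay}, producing $e^{-C'(\log R')^2}$. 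For \eqref{eq:loc3}, observe that on the $D/4^\beta$-neighborhood of $\mathrm{supp}(\nu)$, Proposition~\ref{prop:cor:loc} gives $\|V_{\nu+\nu'}-V_\nu\|_{L^\infty}\leq CD^{-\beta}(\|\nu\|_{L^2_\unif}+\|\nu'\|_{L^2_\unif})$, while $V_{\nu'}$ is itself super-polynomially small there (distance $\geq D$ from $\mathrm{supp}(\nu')$), hence $a~fortiori$ $\leq CD^{-\beta}(\|\nu\|_{L^2_\unif}+\|\nu'\|_{L^2_\unif})$; the symmetric estimate holds near $\mathrm{supp}(\nu')$. On the remaining far region, each of $V_\nu$, $V_{\nu'}$, $V_{\nu+\nu'}$ is individually controlled by Theorem~\ref{th:decay}, so all three are $O(D^{-\beta})$ in $L^\infty$. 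Testing against $\1_\Gamma$, the resolvent factor contributes $e^{-C'c_2(z)\widetilde R}$ on the near-support pieces (distance $\geq \widetilde R=\min\{R,R'\}$ from $\Gamma$) via Lemma~\ref{lemma:CT}, and $e^{-C'(\log \widetilde R)^2}$ on the far region via Theorem~\ref{th:decay}, giving the stated bound.

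The principal obstacle will be the bookkeeping in \eqref{eq:loc3}, where the $D^{-\beta}$ prefactor must be preserved globally although Proposition~\ref{prop:cor:loc} only delivers it near $\mathrm{supp}(\nu)$ or $\mathrm{supp}(\nu')$: away from both supports one has to rely on the super-polynomial decay of each individual potential and trade part of that decay against a factor $D^{-\beta}$, using that any $e^{-C'(\log D)^2}$ dominates $D^{-\beta}$ for $D$ large, together with a careful choice of cut-off radii that keep the separation $\geq D/4^{\beta+1}$ between the supports of distinct pieces so that Lemma~\ref{lemma:CT} remains applicable. Once this combinatorics is set up, all four inequalities follow from the systematic application of tools~(a)--(d) described above.
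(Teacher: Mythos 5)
Your decomposition and toolkit — Lemma~\ref{lemma:CT} for off-diagonal resolvent decay, Theorem~\ref{th:decay} for the super-polynomial decay of $V_\mu$ away from $\mathrm{supp}(\mu)$, Proposition~\ref{prop:cor:loc} near $\mathrm{supp}(\nu)$, and the resolvent bound~\eqref{eq:B(z)uniformlybounded} together with Kato--Seiler--Simon for the Schatten factor — are exactly the paper's, and your treatment of \eqref{eq:loc1} and \eqref{eq:loc2} reproduces its argument. For \eqref{eq:loc3}, however, your region-by-region plan contains a genuine gap, one you half-sense in your final paragraph.

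The estimate \eqref{eq:loc3} requires the \emph{product} $D^{-\beta}\cdot\bigl(e^{-C'(\log\widetilde R)^2}+e^{-C'c_2(z)\widetilde R}\bigr)$, but on each of your regions you only secure one of the two factors. Concretely: (i) your near-support cut-offs of radius $D/4^\beta$ around $\mathrm{supp}(\nu)$ and $\mathrm{supp}(\nu')$ are \emph{not} automatically at distance $\geq\widetilde R$ from $\Gamma$ — if $D\gg 4^\beta\widetilde R$ they may even contain $\Gamma$ — so Lemma~\ref{lemma:CT} cannot be invoked there to deliver $e^{-C'c_2(z)\widetilde R}$; (ii) on the far region you obtain the $L^\infty$ bound $O(D^{-\beta})$, but the resolvent factor $\norm{\1_\Gamma(z-H)^{-1}\1_{\text{far}}}_{\S_2}$ is only $O\bigl(\tfrac{1+\av{z}}{\av{\mathrm{Im}\,z}}\bigr)$ since the far region touches $\Gamma$, and Theorem~\ref{th:decay} gives decay of potentials, not of the resolvent, so no $\widetilde R$-factor comes out of this term. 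Your bookkeeping therefore produces something like $D^{-\beta}+e^{-C'(\log\widetilde R)^2}$, not the product.

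The paper closes this gap with an interpolation at the level of completed operator-norm estimates. It derives \emph{two separate global} bounds for $\norm{\1_\Gamma(z-H)^{-1}(V_{\nu+\nu'}-V_\nu-V_{\nu'})}_{\S_2}$: one carrying only the $\widetilde R$-decay, obtained as in \eqref{eq:loc1} by using that all three potentials decay near $\Gamma$ (distance $\geq\widetilde R$ from the union of supports) and Lemma~\ref{lemma:CT} far from $\Gamma$; and one carrying only $D^{-\beta}$, obtained from the \emph{global} estimate $\norm{V_{\nu+\nu'}-V_\nu-V_{\nu'}}_{L^\infty}\leq CD^{-\beta}\bigl(\norm{\nu}_{L^2_\unif}+\norm{\nu'}_{L^2_\unif}\bigr)$, itself obtained by patching Proposition~\ref{prop:cor:loc} near each support with Theorem~\ref{th:decay} away from both. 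The product then follows from $\min(a,b)\leq\sqrt{ab}$, with the halved exponents absorbed into $C,C'$. This two-bound/geometric-mean device is the missing idea: ``trading part of the decay'' is the right instinct, but without deriving the two estimates separately and interpolating, the region decomposition does not yield the required product.
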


\begin{proof}
Inequalities~\eqref{eq:loc1} -~\eqref{eq:loc3} follow from Lemmas~\ref{lemma:B(z)borné} and~\ref{lemma:CT}, Theorem~\ref{th:decay} and Proposition~\ref{prop:cor:loc}. 
% these
%Indeed, in order to prove~\eqref{eq:loc1}, 
For instance, for~\eqref{eq:loc1}, we first look at $V_\nu$ far from ${\Gamma} $. Using Lemma~\ref{lemma:CT}, we have
\begin{align*}
 \norm{\1_{\Gamma}  \frac{1}{z-H}\1_{\RRd\setminus B(0,\frac{{R}}{4})}V_\nu}_{\S_2}& \leq \frac{C}{\av{{\rm Im}(z)}}e^{-C'c_2(z){R}} \norm{V_\nu}_{L^\ii}\\
&\leq \frac{C}{\av{{\rm Im}(z)}}e^{-C'c_2(z){R}} \norm{\nu}_{L^2_\unif}.
\end{align*}
Near ${\Gamma} $, $V_\nu$ decays as ${R}$ gets large by Theorem~\ref{th:decay}. As ${\rm d}(B(0,\frac{{R}}{4}), {\rm supp}(\nu))\geq {R}/2 $, then, by~\eqref{eq:decayL2unif}, we have 
%that for any $\beta\in \NN$, there exists $C\geq 0$ such that 
$$
\norm{\1_{B(0,\frac{{R}}{4})}V_\nu}_{L^\ii}\leq C \norm{\1_{B(0,\frac{{R}}{4})}V_\nu}_{H^2_\unif}\leq Ce^{-C'(\log{R})^2}\norm{\nu}_{L^2_\unif},
$$
where we have used that in dimension $d\leq 3$, $H^2_\unif(\RRd)\hookrightarrow L^\ii(\RRd)$. We next use Lemma~\ref{lemma:B(z)borné} and the Kato-Seiler-Simon inequality~\eqref{eq:KSS} to obtain
\begin{align*}
  \norm{\1_{\Gamma}  \frac{1}{z-H}\1_{B(0,\frac{{R}}{4})}V_\nu}_{\S_2}&\leq  \norm{\1_{\Gamma} \frac{1}{-\Delta+1} }_{\S_2}\norm{\bra{-\Delta+1}\frac{1}{z-H}}_{\B}\norm{\1_{B(0,\frac{{R}}{4})}V_\nu}_{L^\ii}\\
& \leq C \frac{1+\av{z}}{\av{{\rm Im}(z)}}e^{-C'(\log{R})^2}\norm{\nu}_{L^2_\unif},
\end{align*}
which concludes the proof of~\eqref{eq:loc1}. The proofs of~\eqref{eq:loc2},~\eqref{eq:loc2bis} and~\eqref{eq:loc3} use the same techniques; they can be found in~\cite{these}.
\end{proof}

We now prove Proposition~\ref{prop:definition_termes_expansion+reste}.

\begin{proof}[Proof of Proposition~\ref{prop:definition_termes_expansion+reste}]
Let $\alpha_c$ be the minimum of the constants $ \alpha_c$ defined in Theorems~\ref{th:point_fixe} and~\ref{th:decay} and Propositions~\ref{prop:loc} and~\ref{prop:cor:loc}. We assume that $\norm{\chi}_{L^2}\leq \alpha_c$. Throughout the proof, $\beta$ will denote an integer greater than $d+1$ whose value might change from one line to another and $C\geq 0$ and $C'>0$ constants that depend, in general, on $\beta$. For $z\in \CC\setminus \RR$,  we denote by $R_0(z)=(z-H_0)^{-1}$ and for any $K\subset \ZZd$, we set $R_K(z)=(z-H_K)^{-1}$. We omit the dependence on $z$ when there is no ambiguity. We also omit the $\norm{\chi}_{L^2}$ in our estimates.
Let $L\in 2\NN+1$ and denote by $N=L^d$. 

For $j=1$ and $K=\set{k}$, with $k\in \ZZd$, we have
\begin{align*}
\av{\tr\bra{ \1_{\Gamma}  \bra{ R_{\set{k}} -  R_0 }\1_{\Gamma} }}&=\av{\tr\bra{ \1_{\Gamma}  R_0V_{\set{k}}R_{\set{k}}\1_{\Gamma} }}\leq \norm{ \1_{\Gamma}  R_0V_{{\set{k}}}}_{\S_2}\norm{R_{\set{k}}\1_{\Gamma} }_{\S_2} .
 \end{align*}
Therefore, using~\eqref{eq:loc1} in Lemma~\ref{lemma:bounds}, we get
\begin{align*}
 \av{\tr\bra{ \1_{\Gamma}  R_{\set{k}}  V_{{\set{k}}} R_0 \1_{\Gamma} } }&\leq C\bra{\frac{1+\av{z}}{\av{{\rm Im}(z)}}}^2\bra{ e^{-\frac{C'}{2}\bra{\log \av{k} }^2}+e^{-\frac{C'}{2}c_2(z)\av{k}}}.
\end{align*}
Since the series $\sum_{k\in \ZZd}e^{-\lambda \av{k}}$, with $\lambda> 0$, is equivalent to $\int_\RRd e^{-\lambda \av{x}}dx=1/\lambda^d$, and for $z\in \set{z\in\CC,\;\av{{\rm Im}(z)}\leq 1 }$, it holds $1/c_2(z)\leq (1+\av{z})/\av{{\rm Im}(z)}$ and $ 1\leq (1+\av{z})/\av{{\rm Im}(z)}$, we deduce that 
the series $\sum_{k\in  \ZZd}\av{\tr\bra{ \1_{\Gamma}  R_{\set{k}}  V_{{\set{k}}} R_0 \1_{\Gamma} }  }$ is convergent and its sum satisfies
\begin{align*}
 \sum_{k\in  \ZZd}\av{\tr\bra{ \1_{\Gamma}  R_{\set{k}}  V_{{\set{k}}} R_0 \1_{\Gamma} }  }\leq C\bra{\frac{1+\av{z}}{\av{{\rm Im}(z)}}}^{2+d}.
\end{align*}
For $j=2$ and $K=\set{k,k'}$, with $k,k'\in \ZZd$, a straightforward calculation gives
\begin{align}\label{eq:TK2}
&\sum_{K'\subset K}(-1)^{\av{K\setminus K'}}\tr\bra{ \1_{\Gamma}   \bra{ R_{K'} -  R_0 }\1_{\Gamma}  }\nonumber\\
&\qquad\qquad\qquad\qquad= \tr\left(\1_{\Gamma}  R_0 (V_{\set{k,k'}}-V_{\set{k}}-V_{\set{k'}})R_{\set{k,k'}}\1_{\Gamma}  \right)\nonumber\\
&\qquad\qquad\qquad\qquad+ \tr\left(\1_{\Gamma}  R_0 V_{\set{k}}R_{\set{k}}\bra{V_{\set{k,k'}}-V_{\set{k}}}R_{\set{k,k'}}\1_{\Gamma}  \right)\nonumber\\
&\qquad\qquad\qquad\qquad+ \tr\left(\1_{\Gamma}  R_0V_{\set{k'}}R_{\set{k'}}\bra{V_{\set{k,k'}}-V_{\set{k'}}}R_{\set{k,k'}}\1_{\Gamma}  \right).
\end{align}
Using the inequality~\eqref{eq:loc3}, the first term of the RHS of~\eqref{eq:TK2} can be estimated by 
\begin{align*}
% enlever meme dans la version these?
%& \av{\tr\left(\1_{\Gamma} R_0(V_{\set{k,k'}}-V_{\set{k}}-V_{\set{k'}})R_{\set{k,k'}}\1_{\Gamma} \right) }\leq 
&\norm{ \1_{\Gamma} R_0 (V_{\set{k,k'}}-V_{\set{k}}-V_{\set{k'}}) }_{\S_2} \norm{ R_{\set{k,k'}}\1_{\Gamma}   }_{\S_2}\\
&\qquad\leq \bra{\frac{1+\av{z}}{\av{{\rm Im}(z)}}}^2\bra{\frac{C}{\av{k-k'}^{\beta}}\bra{e^{-C'\bra{\log \min\set{\av{k}, \av{k'}} }^2}+e^{-C'c_2(z)\min\set{\av{k}, \av{k'}}}}}.
\end{align*}
As to bound the second term of the RHS of~\eqref{eq:TK2}, it is bounded by 
\begin{align*}
% a enelver mm these  &\av{\tr\left(\1_{\Gamma} R_0 V_{\set{k}}R_{\set{k}}\bra{V_{\set{k,k'}}-V_{\set{k}}}R_{\set{k,k'}}\1_{\Gamma} \right)} \leq 
&\norm{\1_{\Gamma} R_0 V_{\set{k}}}_{\S_2}\norm{ R_{\set{k}}\bra{V_{\set{k,k'}}-V_{\set{k}}}}_{\B}\norm{R_{\set{k,k'}}\1_{\Gamma} }_{\S_2}\\
&\qquad\qquad\qquad\qquad\leq C \bra{\frac{1+\av{z}}{\av{{\rm Im}(z)}}}^3 \bra{e^{-C'\bra{\log\av{k}}^2}+e^{-C'c_2(z)\av{k}}}.
\end{align*}
using~\eqref{eq:loc1}, and by 
\begin{align*}
% a enelver mm version these? &\av{\tr\left(\1_{\Gamma} R_0 V_{\set{k}}R_{\set{k}}\bra{V_{\set{k,k'}}-V_{\set{k}}}R_{\set{k,k'}}\1_{\Gamma} \right)}\leq 
&\norm{\1_{\Gamma} R_0}_{\S_2}\norm{V_{\set{k}}R_{\set{k}}\bra{V_{\set{k,k'}}-V_{\set{k}}}}_{\B}\norm{R_{\set{k,k'}}\1_{\Gamma} }_{\S_2}\\
&\qquad\qquad\qquad\qquad\leq C \bra{\frac{1+\av{z}}{\av{{\rm Im}(z)}}}^3 \bra{\frac{1}{\av{k-k'}^\beta}+e^{-C'c_2(z)\av{k-k'}}}.
\end{align*}
using~\eqref{eq:loc2}. Therefore 
\begin{align*}
 &\av{\tr\left(\1_{\Gamma} R_0 V_{\set{k}}R_{\set{k}}\bra{V_{\set{k,k'}}-V_{\set{k}}}R_{\set{k,k'}}\1_{\Gamma} \right)}\leq C \bra{\frac{1+\av{z}}{\av{{\rm Im}(z)}}}^3 \\
&\qquad\qquad\times\bra{e^{-C'\bra{\log\av{k}}^2}+e^{-C'c_2(z)\av{k}}}^\frac12
%\\&\qquad\qquad\qquad\qquad\qquad\qquad\qquad\qquad\times
\bra{\frac{1}{\av{k-k'}^\beta}+e^{-C'c_2(z)\av{k-k'}}}^\frac12.
\end{align*}
We have the same bound for the third term of the RHS of~\eqref{eq:TK2}. Therefore, the series $\sum_{K\subset \ZZd\atop \av{K}=2}\av{ \sum_{K'\subset K}(-1)^{\av{K\setminus K'}}\tr\bra{ \1_{\Gamma}  \bra{ R_{K'} -  R_0 }\1_{\Gamma} }  }$ is convergent and its sum 
satisfies
\begin{align*}
 \sum_{K\subset \ZZd\atop \av{K}=2}\av{ \sum_{K'\subset K}(-1)^{\av{K\setminus K'}}\tr\bra{ \1_{\Gamma}  \bra{ R_{K'} -  R_0 }1_{\Gamma} } }\leq C \bra{\frac{\av{z}+1}{\av{\text{Im}(z)}}}^{3+2d}.
\end{align*}
We turn to the proof of the estimate on the remainder~\eqref{eq:estimee-reste}. Let $J\leq 2$ and $p\in \com{0,1}$. We first write $R_{J,L}(z,p)$ in the form of the expectancy of a binomial variable. Indeed, we have
\begin{align*}
 R_{J,L}(z,p)=\sum_{n=0}^Np^n(1-p)^{N-n}\sum_{K\subset \ZZd\cap \Gamma \atop \av{K}=n}f_{L,K}-\sum_{j=1}^Ja_{j,L}(z)p^j,
\end{align*}
where $f_{L,K}=\tr\bra{1_{\Gamma} \bra{R_K-R_0}1_{\Gamma} }$. Rearranging all the terms (see~\cite{these} for details), we obtain
\begin{align*}
 R_{J,L}(z,p)
% &=p^{J+1}\sum_{n=0}^{N-J-1}p^n(1-p)^{N-J-1}\binom{N-J-1}{n}\binom{N-J-1}{n}^{-1}\sum_{K\subset \ZZd\cap {\Gamma} _L\atop \av{K}=n+J+1}D_{J,K}(z)\\
&=p^{J+1}\EE\bra{g_{J,L}\bra{Y_L+J+1,z}},
\end{align*}
where $Y_L$ is a random variable of binomial distribution of parameters $p$ and $N-J-1$ and $g_{J,L}(\cdot,z): \set{J+1,\cdots, N}\rightarrow \RR$ is defined by 
\begin{align*}
&g_{J,L}(n,z)= \binom{N-J-1}{n-J-1}^{-1}\sum_{K\subset \ZZd\cap {\Gamma} _L\atop \av{K}=n}\tr\bra{1_{\Gamma} \bra{R_K(z)-R_0(z)}1_{\Gamma} }\\
&-  \binom{N-J-1}{n-J-1}^{-1}\sum_{K\subset \ZZd\cap {\Gamma} _L\atop \av{K}=n}\sum_{ K'\subset K \atop \av{K'}\leq J} \sum_{K''\subset K'}(-1)^{K'\setminus K'' } \tr\bra{1_{\Gamma} \bra{R_{K''}(z)-R_0}1_{\Gamma} }.
% these D_{J,K}(z)    ,\quad \forall\, J+1 \leq n\leq  N.
\end{align*}
In order to prove~\eqref{eq:estimee-reste}, it is therefore sufficient to show that there exists $C\geq 0$ such that for any $L\in 2\NN+1$ and $J+1\leq n\leq N$,
\begin{align*}
\av{ g_{J,L}(n,z)}\leq C \bra{\frac{1+\av{z}}{\av{ {\rm Im}(z) }}}^{J+2+ (J+1)d}.
\end{align*}
It is sufficient to prove the above inequality for $J=2$. Let $J+1\leq n\leq N$ and consider a configuration $K\subset\ZZd\cap {\Gamma} _L$ such that $\av{K}=n$. 
A straightforward calculation shows that  
$$
g_{J,L}(n,z)= \binom{N-J-1}{n-J-1}^{-1}\sum_{K\subset \ZZd\cap {\Gamma} _L\atop \av{K}=n+J+1 }\tr\bra{ 1_{\Gamma}  R_0 \bra{P_{1,K}-P_{2,K}}R_K  1_{\Gamma} }
$$
where 
$$
P_{1,K}=V_K-  \sum_{k\in K }V_{\set{k}} -\sum_{k,k'\in K\atop k\neq k'} \bra{ V_{\set{k,k'}}-V_{\set{k}} -V_{\set{k'}} }  
$$
and 
\begin{align*}
 P_{2,K}= \sum_{k\in K }V_{\set{k}}R_{k}& \bra{V_K-  V_{\set{k}}  }+ \sum_{\set{k,k'}\subset K}\Big(  V_{\set{k,k'}} R_{\set{k,k'}}  \bra{V_K-  V_{\set{k,k'}}  }  \\
&-  V_{\set{k}} R_{\set{k}}  \bra{V_K-  V_{\set{k}}  }-  V_{\set{k'}} R_{\set{k'}}  \bra{V_K-  V_{\set{k'}}  } \Big).
\end{align*}
Besides
\begin{align*}
P_{1,K}&=\sum_{r\in \ZZd}\1_{{\Gamma} +r}P_{1,K}.
\end{align*}
For each $r\in \ZZd$, we split $\1_{{\Gamma} +r}P_{1,K} $ into two $r$-dependent quantities: a part involving the defect in $k_0=\arg\inf_{k\in K}\av{k-r}$ and the rest. We denote by
$$
A_{K,k_0}=V_K- V_{\set{k_0}}- \sum_{k\in K\setminus\set{ k_0} }\bra{V_{\set{k,k_0}}-V_{\set{k_0}}}$$
and 
$$ B_{K,k_0}=\sum_{\set{k,k'}\subset K\setminus \set{k_0}}  \bra{V_{\set{k,k'}}-V_{\set{k}} -V_{\set{k'}}}.
$$
Then  
\begin{equation*}
P_{1,K}=\sum_{r\in\ZZd} \1_{{\Gamma} +r} A_{K,k_0}-\sum_{r\in\ZZd}\1_{{\Gamma} +r}B_{K,k_0}.
\end{equation*}
We have thus split $g_{J,L}(n,z)$ into three parts

\begin{align}\label{eq:g_Lnz_split_into_three_part}
 g_{J,L}(n,z)&= 
\binom{N-J-1}{n-J-1}^{-1}\sum_{K\subset\ZZd\cap {\Gamma} _L\atop \av{K}=n}\tr\bra{1_{\Gamma} R_0  \sum_{r\in\ZZd}\1_{{\Gamma} +r}A_{K,k_0}R_K1_{\Gamma} }\nonumber\\
&\quad-\binom{N-J-1}{n-J-1}^{-1}\sum_{K\subset\ZZd\cap {\Gamma} _L\atop \av{K}=n}\tr\bra{\sum_{r\in\ZZd}\1_{{\Gamma} +r}B_{K,k_0}R_K1_{\Gamma} }\nonumber\\
&\quad + \binom{N-J-1}{n-J-1}^{-1}\sum_{K\subset\ZZd\cap {\Gamma} _L\atop \av{K}=n}\tr\bra{P_{2,K} R_K1_{\Gamma} }
\end{align}
that we will bound successively. We start by the first term. Let $r\in \ZZd$ and denote by $k_1=\arg\inf_{k\in K\setminus\set{ k_0} }{\rm d}\bra{k,\set{r,k_0}}$. We introduce 
$$
\ell_0(K,r)=\av{r-k_0}, \quad
\ell_1(K,r)={\rm d}\bra{K\setminus \set{k_0},\set{r,k_0}}
$$
and
$$
\ell_2(K,r)={\rm d}\bra{K\setminus\set{ k_0, k_1},\set{r,k_0,k_1 }}. 
$$
When there is no ambiguity, we omit to note the dependence of these quantities on $K$ and $r$. By Theorem~\ref{th:decay}, we first have
\begin{align}\label{eq:AK1}
\norm{1_{{\Gamma} +r}\bra{V_K- V_{\set{k_0}}}}_{L^\ii}&\leq \norm{1_{{\Gamma} +r}V_K}_{L^\ii} +\norm{1_{{\Gamma} +r}V_{\set{k_0}}}_{L^\ii}
\leq \frac{C}{\bra{\ell _0+1}^\beta}.
\end{align}
We now want to control $\norm{1_{{\Gamma} +r}\bra{V_K- V_{\set{k_0}}}}_{L^\ii}$ by $1/(\ell _1+1)^\beta$. 
If $\ell _0< \ell _1/4^\beta $ (see Figure~\ref{fig:shema_ll}), then by Proposition~\ref{prop:cor:loc}, we have
\begin{equation}\label{eq:AK2}
\quad \norm{1_{{\Gamma} +r}\bra{V_K- V_{\set{k_0}}}}_{L^\ii}\leq \frac{C}{(\ell _1+1)^\beta}.
\end{equation}
\begin{figure}[!h]
\centering
\psfrag{k0}{\rotatebox{270}{\hspace{-0.3cm}$k_0$}} \psfrag{k1}{\rotatebox{270}{$k_1$}} \psfrag{r}{\rotatebox{270}{$r$}} \psfrag{l0}{\rotatebox{270}{\hspace{-0.2cm}$\ell_0$}} \psfrag{l1}{\rotatebox{270}{$\ell _1$}} \psfrag{l1/a}{\rotatebox{270}{$~~\ell _1/4^\beta$}}
\includegraphics[scale=0.6,angle=90]{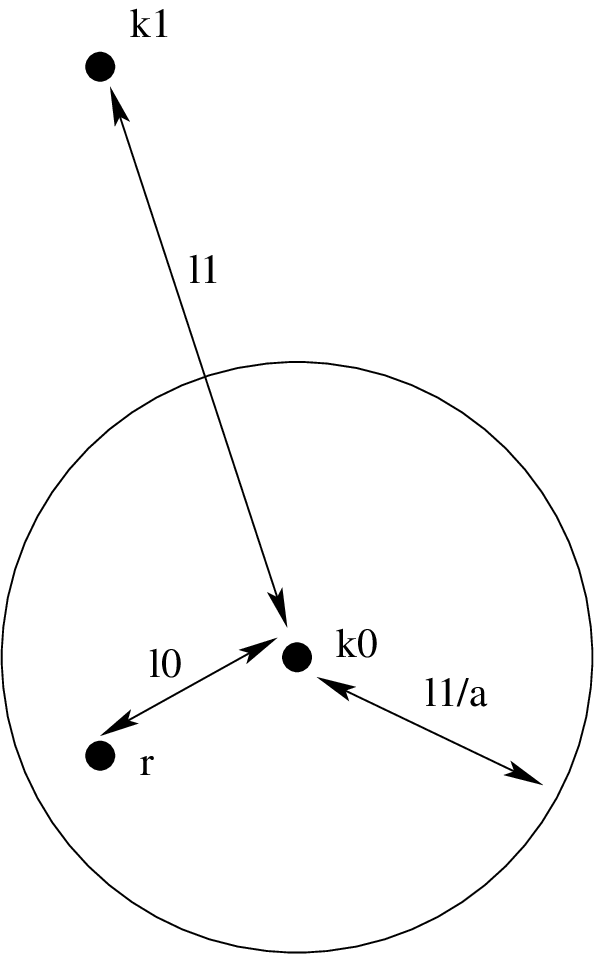}
\caption{A configuration of $r$, $k_0$ and $k_1$ where $\ell _0\leq \ell _1/4^\beta$ used in the proof of Lemma~\ref{lemma:bounds}.}
\label{fig:shema_ll}
\end{figure}
If $\ell _0\geq \ell _1/4^\beta$, then~\eqref{eq:AK1} gives 
\begin{align}\label{eq:AK2bis}
\norm{1_{{\Gamma} +r}\bra{V_K- V_{\set{k_0}}}}_{L^\ii}&\leq \frac{C}{\bra{\ell _0+1}^\beta}\leq \frac{C}{\bra{\ell _1+1}^\beta}.
\end{align}
Therefore, by~\eqref{eq:AK1},~\eqref{eq:AK2} and~\eqref{eq:AK2bis},
\begin{align}\label{eq:AK2ter}
 \norm{1_{{\Gamma} +r}\bra{V_K- V_{\set{k_0}}}}_{L^\ii}& =\norm{1_{{\Gamma} +r}\bra{V_K- V_{\set{k_0}}}}_{L^\ii}^\frac12\times \norm{1_{{\Gamma} +r}\bra{V_K- V_{\set{k_0}}}}_{L^\ii}^\frac12\nonumber\\
&\leq \frac{C}{\bra{\ell _0+1}^\frac{\beta}{2}\bra{\ell _1+1}^\frac{\beta}{2}}
\end{align}
We proceed similarly for the remaining term of $A_{K,k_0}$.  First, as~\eqref{eq:AK2ter} holds for any $\beta\geq 0$ and any $K\ni k_0$, then we have for any $k\in K\setminus \set{k_0}$
\begin{align}\label{eq:AK3bis}
 \norm{1_{{\Gamma} +r}\bra{V_{\set{k_0,k}}- V_{\set{k_0}}}}_{L^\ii}\leq \frac{C}{(\ell _1+1)^\beta (\ell _0+1)^\beta}.
\end{align}
Next, if $\ell _0< \av{k-k_0}/4^\beta $, then by Proposition~\ref{prop:cor:loc}, we have 
\begin{align*}
 \norm{1_{{\Gamma} +r}\bra{V_{\set{k_0,k}}- V_{\set{k_0}}}}_{L^\ii}\leq \frac{C}{\av{k_0-k}^\beta}.
\end{align*}
Otherwise, by~\eqref{eq:AK3bis}
\begin{align*}
 \norm{1_{{\Gamma} +r}\bra{V_{\set{k_0,k}}- V_{\set{k_0}}}}_{L^\ii}\leq \frac{C}{(\ell _0+1)^\beta}\leq \frac{C}{(\av{k-k_0}+1)^\beta}.
\end{align*}
Therefore, reasoning as in~\eqref{eq:AK2ter}, we have for $\beta$ large enough
\begin{align}\label{eq:AK3}
\norm{\sum_{k\in K\setminus\set{ k_0} }1_{{\Gamma} +r}\bra{V_{\set{k_0,k}}- V_{\set{k_0}}}}_{L^\ii}
& \leq\sum_{k\in K\setminus\set{ k_0} }\bra{\norm{1_{{\Gamma} +r}\bra{V_{\set{k_0,k}}- V_{\set{k_0}}}}_{L^\ii}^\frac12}^2 \nonumber\\
&\leq \frac{C}{(\ell _1+1)^\frac{\beta}{2} (\ell _0+1)^\frac{\beta}{2}}\sum_{k\in K\setminus\set{ k_0} } \frac{1}{\av{k_0-k}^\frac{\beta}{2}}\nonumber\\
&\leq \frac{C}{(\ell _1+1)^\frac{\beta}{2}  (\ell _0+1)^\frac{\beta}{2} }.
\end{align}
As~\eqref{eq:AK3} and~\eqref{eq:AK2ter} holds for any $\beta\geq 0$, then by the definition of $A_{K,k_0}$, we obtain
\begin{equation}\label{eq:AK3ter}
\av{\1_{{\Gamma} +r}A_{K,k_0}}\leq \frac{C}{(\ell _1+1)^\beta (\ell _0+1)^\beta}.
\end{equation}
To control $A_{K,k_0}$ by $1/\ell _2^\beta$, we rearrange the terms of $A_{K,k_0}$ as follows
$$
A_{K,k_0}=V_K- V_{\set{k_0,k_1}}- \sum_{K\setminus\set{ k_0, k_1} }\bra{V_{\set{k,k_0}}-V_{\set{k_0}}}.
$$
By Proposition~\ref{prop:cor:loc}, we thus have
\begin{equation}\label{eq:AK4}
\av{\1_{{\Gamma} +r}A_{K,k_0}}\leq \frac{C}{\ell_2^\beta}+ \sum_{k\in K\setminus\set{k_0,k_1} }\frac{C}{\av{k-k_0}^\beta}\leq  \frac{C}{\ell _2^\beta}.
\end{equation}
As~\eqref{eq:AK3ter} and~\eqref{eq:AK4} hold for any $\beta$, then reasoning as in the proof of~\eqref{eq:AK2ter} we have
\begin{align*}
\tr\bra{ \av{1_{\Gamma}  R_0 \1_{{\Gamma} +r} A_{K,k_0} R_K  1_{\Gamma} }}
%& \leq  \av{\tr\bra{ 1_{\Gamma}  R_0  \1_{{\Gamma} +r} A_{K,k_0} R_K  1_{\Gamma} }}\\
& \leq  \norm{ 1_{\Gamma}  R_0\1_{{\Gamma} +r}}_{\S_2} \norm{  \1_{{\Gamma} +r} A_{K,k_0} }_{L^\ii} \norm{R_K  1_{\Gamma} }_{\S_2}\\
& \leq C\frac{e^{-C'c_2(z)\av{r}}}{\av{{\rm Im}(z)}}\frac{1}{(\ell _0+1)^\beta(\ell _1+1)^\beta\ell_2^\beta}\frac{1+\av{z}}{\av{{\rm Im}(z)}}.
\end{align*}
Therefore  $\sum_{r\in\ZZd}\tr\bra{ \av{1_{\Gamma}  R_0 \1_{{\Gamma} +r} A_{K,k_0} R_K  1_{\Gamma} }}$ is a convergent series. By Fubini's Theorem, we thus have
\begin{align*}
\hspace{-0.5cm} \sum_{ K\subset \ZZd\cap{\Gamma} _L\atop \av{K}=n}\tr\bra{ 1_{\Gamma}  R_0 \bra{\sum_{r\in\ZZd} \1_{{\Gamma} +r}A_{K,k_0} }R_K  1_{\Gamma} }
=\sum_{r\in\ZZd}\sum_{ K\subset \ZZd\cap{\Gamma} _L\atop \av{K}=n}\tr\bra{ 1_{\Gamma}  R_0  \1_{{\Gamma} +r}A_{K,k_0} R_K  1_{\Gamma} }.
\end{align*}
To perform the sum over the configurations $K\in \set{ K\subset \ZZd\cap{\Gamma} _L,\; \av{K}=n}$, we classify these configurations depending on the value of $\ell_i(r,K)$, $i\in\set{0,1,2}$: 
\begin{align*}
&\av{ \sum_{ K\subset \ZZd\cap{\Gamma} _L\atop \av{K}=n}\tr\bra{ 1_{\Gamma}  R_0\bra{ \sum_{r\in\ZZd} \1_{{\Gamma} +r}A_{K,k_0}} R_K  1_{\Gamma} }}\\
&\quad\quad\leq  \sum_{r\in\ZZd} \sum_{L_0,L_1,L_2=0}^{\sqrt{d}L}  \sum_{ K\subset \ZZd\cap{\Gamma} _L,\; \av{K}=n\atop L_i\leq \ell _i(K,r)< L_i+1} 
Ce^{-C'c_2(z)\av{r}} \bra{\frac{1+\av{z}}{\av{{\rm Im}(z)}}}^2  \frac{1}{\Pi_{i=0}^2(L_i+1)^\beta}.\\
&\quad\quad\leq \sum_{r\in \ZZd}\sum_{L_0,L_1,L_2=0}^{\sqrt{d}L} Ce^{-C'c_2(z)\av{r}}\bra{\frac{1+\av{z}}{\av{{\rm Im}(z)}}}^2 \frac{N_{L,n,r}(L_0,L_1,L_2)}{\Pi_{i=0}^2(L_i+1)^\beta},
\end{align*}
where $N_{L,n,r}(L_0,L_1,L_2)$ is the number of configurations $K\subset \ZZd\cap {\Gamma} _L$ such that $\av{K}=n$ and $L_i\leq \ell _i(K,r)< L_i+1$ for $i\in \set{0,1,2}$. This number can be estimated by the asymptotic value $C\binom{N-3}{n-3}\prod_{i=0}^2L_i^{d-1}$ when $N\rightarrow \ii$. Therefore, taking $\beta$ large enough, we obtain that the first term of the RHS of~\eqref{eq:g_Lnz_split_into_three_part} is bounded by 
\begin{align*}
%&\av{\binom{N-3}{n-3}^{-1} \sum_{ K\subset \ZZd\cap{\Gamma} _L\atop \av{K}=n}\tr\bra{ 1_{\Gamma}  R_0\bra{ \sum_{r\in\ZZd} \1_{{\Gamma} +r}A_{K,k_0}} R_K  1_{\Gamma} }}\\
%&\qquad\leq 
\sum_{r\in \ZZd}\sum_{L_0,L_1,L_2=1}^L Ce^{-C'c_2(z)\av{r}}\bra{\frac{1+\av{z}}{\av{{\rm Im}(z)}}}^2 \frac{1}{\Pi_{i=0}^2(L_i+1)^\beta}\leq C\bra{\frac{1+\av{z}}{\av{{\rm Im}(z)}}}^{2+d}.
\end{align*}
% these a enelver 
With the same techniques, we find that the second and third terms of the RHS of~\eqref{eq:g_Lnz_split_into_three_part} are respectively bounded by 
$$
C \bra{\frac{1+\av{z}}{\av{{\rm Im}(z)}}}^{2+d}\quad\text{and}\quad C \bra{\frac{1+\av{z}}{\av{{\rm Im}(z)}}}^{4+4d},
$$
which concludes the proof of the proposition.

\end{proof}

\appendix

\bibliographystyle{siam}
\bibliography{fichierb}

\end{document}